\newtheorem{theorem}{Theorem}
\newtheorem{conjecture}{Conjecture}
\newtheorem{definition}{Definition}
\newtheorem{lemma}{Lemma}
\newtheorem{remark}{Remark}
\newtheorem{corollary}{Corollary}
\newcommand{\ie}{{\it i.e.}} 
\newcommand{\eg}{{\it e.g.}}  
\begin{document}
\title{Feedback-based online network coding}
\author{Jay~Kumar~Sundararajan,~\IEEEmembership{Student Member,~IEEE,}
        Devavrat~Shah,~\IEEEmembership{Member,~IEEE,}
        Muriel~M\'edard,~\IEEEmembership{Fellow,~IEEE}
\thanks{The authors are with the Department of Electrical Engineering and Computer Science, at the Massachusetts Institute of Technology, Cambridge, MA 02139. Email: \{jaykumar, devavrat, medard\}@mit.edu}
\thanks{Manuscript received ; revised . Parts of this work were presented at IEEE ITW 2007, IEEE ISIT 2008 and ISITA 2008.}%
}

\maketitle

\begin{abstract}
Current approaches to the practical implementation of network coding are batch-based, and often do not use feedback, except possibly to signal completion of a file download. In this paper, the various benefits of using feedback in a network coded system are studied. It is shown that network coding can be performed in a completely online manner, without the need for batches or generations, and that such online operation does not affect the throughput. Although these ideas are presented in a single-hop packet erasure broadcast setting, they naturally extend to more general lossy networks which employ network coding in the presence of feedback. The impact of feedback on queue size at the sender and decoding delay at the receivers is studied. Strategies for adaptive coding based on feedback are presented, with the goal of minimizing the queue size and delay. The asymptotic behavior of these metrics is characterized, in the limit of the traffic load approaching capacity. Different notions of decoding delay are considered, including an order-sensitive notion which assumes that packets are useful only when delivered in order. Our work may be viewed as a natural extension of Automatic Repeat reQuest (ARQ) schemes to coded networks.  
\end{abstract}

\begin{IEEEkeywords}
Network Coding, Decoding Delay, ARQ
\end{IEEEkeywords}
\section{Introduction}\label{intro}

This paper is a step towards low-delay, high-throughput solutions based on network coding, for real-time data streaming applications over a packet erasure network. In particular, it considers the role of feedback for queue management and delay control in such systems.

\subsection{Background}
Reliable communication over a network of packet erasure channels is a well studied problem. Several solutions have been proposed, especially in the case when there is no feedback. We compare below, three such approaches -- digital fountain codes, random linear network coding and priority encoding transmission. 

{\bf 1. Digital fountain codes: }
The digital fountain codes (\cite{ltcodes,raptor}) constitute a well-known approach to this problem. From a block of $k$ transmit packets, the sender generates random linear combinations in such a way that the receiver can, with high probability, decode the block once it receives \emph{any} set of slightly more than $k$ linear combinations.  This approach has low complexity and requires no feedback, except to signal successful decoding of the block. However, fountain codes are designed for a point-to-point erasure channel and in their original form, do not extend readily to a network setting. Consider a two-link tandem network. An end-to-end fountain code with simple forwarding at the middle node will result in throughput loss. If the middle node chooses to decode and re-encode an entire block, the scheme will be sub-optimal in terms of delay, as pointed out by \cite{linenetworks}. In this sense, the fountain code approach is not composable across links. For the special case of tree networks, there has been some recent work on composing fountain codes across links by enabling the middle node to re-encode even before decoding the entire block \cite{gummadi}.

{\bf 2. Random linear network coding: }
Network coding was originally introduced for the case of error-free networks with specified link capacities (\cite{ahlswede,koettermedard}), and was extended to the case of erasure networks \cite{desmondthesis}. In contrast to fountain codes, the random linear network coding solution of \cite{desmondjournal} does not require decoding at intermediate nodes and can be applied in any network. Each node transmits a random linear combination of all coded packets it has received so far. This solution ensures that with high probability, the transmitted packet will have what we call the \emph{innovation guarantee property}, \ie,  it will be \emph{innovative}\footnote{An innovative packet is a linear combination of packets which is linearly independent of previously received linear combinations, and thus conveys new information.} to every receiver that receives it successfully, except if the receiver already knows as much as the sender. Thus, every successful reception will bring a unit of new information. In \cite{desmondjournal}, this scheme is shown to achieve capacity for the case of a multicast session. 

An important problem with both fountain codes and random linear network coding is that although they are rateless, the encoding operation is performed on a block (or generation) of packets. This means that in general, there is no guarantee that the receiver will be able to extract and pass on to higher layers, any of the original packets from the coded packets till the entire block has been received. This leads to a decoding delay. 

Such a decoding delay is not a problem if the higher layers will anyway use a block only as a whole (\eg, file download). This corresponds to traditional approaches in information theory where the message is assumed to be useful only as a whole. No incentive is placed on decoding ``a part of the message'' using a part of the codeword. However, many applications today involve broadcasting a continuous stream of packets in real-time (\eg, video streaming). Sources generate a stream of messages which have an intrinsic temporal ordering. In such cases, playback is possible only till the point up to which all packets have been recovered, which we call \emph{the front of contiguous knowledge}. Thus, there is incentive to decode the older messages earlier, as this will reduce the playback latency. The above schemes would segment the stream into blocks and process one block at a time. Block sizes will have to be large to ensure high throughput. However, if playback can begin only after receiving a full block, then large blocks will imply a large delay. 

This raises an interesting question: can we code in such a way that playback can begin even before the full block is received?  In other words, we are more interested in packet delay than block delay. These issues have been studied using various approaches by \cite{eminthesis}, \cite{sahai} and \cite{chris} in a point-to-point setting. However, in a network setting, the problem is not well understood. Moreover, these works do not consider the queue management aspects of the problem. In related work, \cite{sujay} and \cite{rtoblivious} address the question of how many original packets are revealed before the whole block is decoded in a fountain code setting. However, performance may depend on not only \emph{how much data} reaches the receiver in a given time, but also \emph{which part of the data}. For instance, playback delay depends on not just the number of original packets that are recovered, but also the order in which they are recovered. 

{\bf 3. Priority encoding transmission: }
The scheme proposed in \cite{PET}, known as priority encoding transmission (PET), addresses this problem by proposing a code for the erasure channel that ensures that a receiver will receive the first (or highest priority) $i$ messages using the first $k_i$ coded packets, where $k_i$ increases with decreasing priority. In \cite{rankmetricPET}, \cite{walsh}, this is extended to systems that perform network coding. A concatenated network coding scheme is proposed in \cite{walsh}, with a delay-mitigating pre-coding stage. This scheme guarantees that the $k^{th}$ innovative reception will enable the receiver to decode the $k^{th}$ message. In such schemes however, the ability to decode messages in order requires a reduction in throughput because of the pre-coding stage.

\subsection{Motivation}
The main motivation for our current work is that the availability of feedback brings the hope of simultaneously achieving the best possible throughput along with minimal packet delay and queue size. 

Reliable communication over a point-to-point packet erasure channel with full feedback can be achieved using the Automatic Repeat reQuest (ARQ) scheme -- whenever a packet gets erased, the sender retransmits it. Every successful reception conveys a new packet, implying throughput optimality. Moreover, this new packet is always the next unknown packet, which implies the lowest possible packet delay. Since there is feedback, the sender never stores anything the receiver already knows, implying optimal queue size. Thus, this simple scheme simultaneously achieves the optimal throughput along with minimal delay and queue size. Moreover, the scheme is completely online and not block-based.

However, if we go beyond a single point-to-point link, ARQ is not sufficient in general. Coding across packets is necessary to achieve optimal throughput, even if we allow acknowledgments. For instance, in the network coding context, link-by-link ARQ cannot achieve the multicast capacity of the \emph{butterfly network} from network coding literature \cite{ahlswede}. Similarly, ARQ is sub-optimal for broadcast-mode links because retransmitting a packet that some receivers did not get is wasteful for the others that already have it. In contrast, network coding achieves the multicast capacity of any network and also readily extends to networks with broadcast-mode links. Thus, in such situations, coding is indispensable from a throughput perspective. 

This leads to the question -- how to combine the benefits of ARQ and network coding? The goal is to extend ARQ's desirable properties in the point-to-point context, to systems that require coding across packets. 

The problem with applying ARQ to a coded system is that a new reception may not always reveal the next unknown packet to the receiver. Instead, it may bring in a linear equation involving the packets. In conventional ARQ, upon receiving an ACK, the sender drops the ACKed packet and transmits the next one. But in a coded system, upon receiving an ACK for a linear equation, it is not clear which linear combination the sender should pick for its next transmission to obtain the best system performance. This is important because, if the receiver has to collect many equations before it can decode the unknowns involved, this could lead to a large decoding delay. 

A related question is: upon receiving the ACK for a linear equation, which packet can be excluded from future coding, \ie, which packet can be dropped from the sender's queue? If packets arrive at the sender according to some stochastic process, (as in \cite{traceyharish, desmondatilla}) and links are lossy (as in \cite{desmondjournal, desmondthesis}), then the queue management aspect of the problem also becomes important.  

One option is to drop packets that all receivers have decoded, as this would not affect the reliability. However, storing all undecoded packets may be suboptimal. Consider a situation where the sender has $n$ packets $\mathbf{p_1}, \mathbf{p_2} \ldots, \mathbf{p_n},$ and all receivers have received ($n-1$) linear combinations: ($\mathbf{p_1}$+$\mathbf{p_2}$), ($\mathbf{p_2}$+$\mathbf{p_3}$), $\ldots,$ ($\mathbf{p_{n-1}}$+$\mathbf{p_n}$). A drop-when-decoded scheme will not allow the sender to drop any packet, since no packet can be decoded by any receiver yet. However, the backlog in the amount of information, also called the \emph{virtual queue} (\cite{traceyharish, desmondatilla}), has a size of just 1. We ideally want the physical queue to track the virtual queue in size. (Indeed, in this example, it would be sufficient if the sender stores any one $\mathbf{p_i}$ in order to ensure reliable delivery.) 

These issues motivate the following questions -- if we have feedback in a system with network coding, what is the best possible tradeoff between throughput, delay and queue size? In particular, how close can we get to the performance of ARQ for the point-to-point case? These are the questions we address in this paper. 

\section{Our contribution}\label{contrib}
In this paper, we show that by proper use of feedback, it is possible to perform network coding in a completely online manner similar to ARQ schemes, without the need for a block-based approach. We study the benefits of feedback in a coded network in terms of the following two aspects -- queue management and decoding delay. 

\subsection{Queue management}
\emph{Note:} In this work, we treat packets as vectors over a finite field. We restrict our attention to linear network coding. Therefore, the state of knowledge of a node can be viewed as a vector space over the field (see Section \ref{setup} for further details). 

We propose a new acknowledgment mechanism that uses feedback to \emph{acknowledge degrees of freedom\footnote{Here, \emph{degree of freedom} refers to a new dimension in the appropriate vector space representing the sender's knowledge.} instead of original decoded packets}. Based on this new form of ACKs, we propose an online coding module that naturally generalizes ARQ to coded systems. The code implies a queue update algorithm that ensures that \emph{the physical queue size at the sender will track the backlog in degrees of freedom}.

It is clear that packets that have been decoded by all receivers need not be retained at the sender. But, our proposal is more general than that. The key intuition is that we can ensure reliable transmission even if we restrict the sender's transmit packet to be chosen from a subspace that is independent\footnote{A subspace $S_1$ is said to be \emph{independent} of another subspace $S_2$ if $S_1\cap S_2 =\{\mathbf{0}\}$. See \cite{artinbook} for more details.} of the subspace representing the common knowledge available at all the receivers. 

In other words, \textit{the sender need not use for coding (and hence need not store) any information that has already been received by all the receivers}. Therefore, at any point in time, the queue simply needs to store a basis for a coset space with respect to the subspace of knowledge common to all the receivers. We define a specific way of computing this basis using the new notion of a node ``seeing'' a message packet, which is defined below.  

\begin{definition}[Index of a packet]
For any positive integer $k$, the $k^{th}$ packet that arrives at the sender is said to have an \emph{index} $k$.
\end{definition}

\begin{definition}[Seeing a packet]
A node is said to have \emph{seen} a message packet $\mathbf{p}$ if it has received enough information to compute a linear combination of the form $(\mathbf{p} + \mathbf{q})$, where $\mathbf{q}$ is itself a linear combination involving only packets with an index greater than that of $\mathbf{p}$. (Decoding implies seeing, as we can pick $\mathbf{q}=\mathbf{0}$.)
\end{definition}

In our scheme, the feedback is utilized as follows. In conventional ARQ, a receiver ACKs a packet upon decoding it successfully. However, in our scheme {\bf a receiver ACKs a packet when it sees the packet}. Our new scheme is called the \emph{drop-when-seen} algorithm because the sender \emph{drops a packet if all receivers have seen (ACKed) it}. 

Since decoding implies seeing, the sender's queue is expected to be shorter under our scheme compared to the drop-when-decoded scheme. However, we will need to show that in spite of dropping seen packets even before they are decoded, we can still ensure reliable delivery. To prove this, we present a deterministic coding scheme that uses only unseen packets and still guarantees that the coded packet will {\bf simultaneously cause each receiver that receives it successfully, to see its next unseen packet}. We will prove later that seeing a new packet translates to receiving a new degree of freedom. This means, the innovation guarantee property is satisfied and therefore, reliability and 100\% throughput can be achieved (see Algorithm 2 (b) and corresponding Theorems \ref{mainthm} and \ref{innovation} in Section \ref{algm2section}).

The intuition is that if all receivers have seen $\mathbf{p}$, then their uncertainty can be resolved using only packets with index more than that of $\mathbf{p}$ because after decoding these packets, the receivers can compute $\mathbf{q}$ and hence obtain $\mathbf{p}$ as well. Therefore, even if the receivers have not decoded $\mathbf{p}$, no information is lost by dropping it, provided it has been seen by all receivers.

Next, we present an example that explains our algorithm for a simple two-receiver case. Section \ref{formal} extends this scheme to more receivers.

\ 

{\bf Example:}
\begin{table*}
\centering
{\footnotesize
\begin{tabular}{|c|p{1in}|p{0.9in}|p{0.6in}|p{0.7in}|p{0.54in}|p{0.7in}|p{0.54in}|}
\hline
Time & Sender's queue                  & Transmitted packet	      & Channel state                     & \multicolumn{2}{c|}A         & \multicolumn{2}{c|}B          \\
     &                                 &                          &                                   & {\footnotesize Decoded} & {\footnotesize Seen but not decoded} & {\footnotesize Decoded} & {\footnotesize Seen but not decoded}  \\
\hline
1    & $\mathbf{p_1}$                           & $\mathbf{p_1}$                    & $\rightarrow$ A, $\nrightarrow$ B & $\mathbf{p_1}$                & -     & -                    & -    \\
\hline
2    & $\mathbf{p_1}$, $\mathbf{p_2}$                    & $\mathbf{p_1}\oplus \mathbf{p_2}$          & $\rightarrow$ A, $\rightarrow$  B & $\mathbf{p_1}$, $\mathbf{p_2}$         & -     & -                    & $\mathbf{p_1}$\\
\hline
3    & \ \ \ \ \ $\mathbf{p_2}$, $\mathbf{p_3}$            & $\mathbf{p_2}\oplus \mathbf{p_3}$          & $\nrightarrow$ A, $\rightarrow$  B & $\mathbf{p_1}$, $\mathbf{p_2}$         & -     & -                    & $\mathbf{p_1}, \mathbf{p_2}$\\
\hline
4    & \ \ \ \ \ \ \ \ \ \ $\mathbf{p_3}$         & $\mathbf{p_3}$                    & $\nrightarrow$ A, $\rightarrow$  B & $\mathbf{p_1}$, $\mathbf{p_2}$         & -     & $\mathbf{p_1}, \mathbf{p_2}$, $\mathbf{p_3}$    & - \\
\hline
5    & \ \ \ \ \ \ \ \ \ \ $\mathbf{p_3}$, $\mathbf{p_4}$  & $\mathbf{p_3}\oplus \mathbf{p_4}$          & $\rightarrow$ A, $\nrightarrow$ B & $\mathbf{p_1}, \mathbf{p_2}$           & $\mathbf{p_3}$ & $\mathbf{p_1}, \mathbf{p_2}, \mathbf{p_3}$      & -\\
\hline
6    & \ \ \ \ \ \ \ \ \ \ \ \ \ \ \ $\mathbf{p_4}$ & $\mathbf{p_4}$                    & $\rightarrow$ A, $\rightarrow$  B & $\mathbf{p_1}, \mathbf{p_2}, \mathbf{p_3}, \mathbf{p_4}$ & -     & $\mathbf{p_1}, \mathbf{p_2}, \mathbf{p_3}, \mathbf{p_4}$ & -    \\
\hline
\end{tabular}
}
\caption{An example of the drop-when-seen algorithm}\label{exampletable}
\vspace{-.3in}
\end{table*}
Table \ref{exampletable} shows a sample of how the proposed idea works in a packet erasure broadcast channel with two receivers A and B. The sender's queue is shown after the arrival point and before the transmission point of a slot (see Section \ref{setup} for details on the setup). In each slot, based on the ACKs, the sender identifies the next unseen packet for A and B. If they are the same packet, then that packet is sent. If not, their XOR is sent. It can be verified that with this rule, every reception causes each receiver to see its next unseen packet. 

In slot 1, $\mathbf{p_1}$ reaches A but not B. In slot 2, $(\mathbf{p_1}\oplus \mathbf{p_2})$ reaches A and B. Since A knows $\mathbf{p_1}$, it can also decode $\mathbf{p_2}$. As for B, it has now seen (but not decoded) $\mathbf{p_1}$. At this point, since A and B have seen $\mathbf{p_1}$, the sender drops it. This is fine even though B has not yet decoded $\mathbf{p_1}$, because B will eventually decode $\mathbf{p_2}$ (in slot 4), at which time it can obtain $\mathbf{p_1}$. Similarly, $\mathbf{p_2}$, $\mathbf{p_3}$ and $\mathbf{p_4}$ will be dropped in slots 3, 5 and 6 respectively. However, the drop-when-decoded policy will drop $\mathbf{p_1}$ and $\mathbf{p_2}$ in slot 4, and $\mathbf{p_3}$ and $\mathbf{p_4}$ in slot 6. Thus, our new strategy clearly keeps the queue shorter. This is formally proved in Theorem \ref{algm1qsize} and Theorem \ref{mainthm}. The example also shows that it is fine to drop packets before they are decoded. Eventually, the future packets will arrive, thereby allowing the decoding of all the packets.  

\ 

\subsubsection*{Related earlier work}
In \cite{shrader1}, Shrader and Ephremides study the queue stability and delay of block-based random linear coding versus uncoded ARQ for stochastic arrivals in a broadcast setting. However, this work does not consider the combination of coding and feedback in one scheme. In related work, \cite{shrader2} studies the case of load-dependent variable sized coding blocks with ACKs at the end of a block, using a bulk-service queue model. The main difference in our work is that receivers ACK packets even before decoding them, and this enables the sender to perform online coding.  

Sagduyu and Ephremides \cite{sagduyu1} consider online feedback-based adaptation of the code, and propose a coding scheme for the case of two receivers. This work focuses on the maximum possible stable throughput, and does not consider the use feedback to minimize queue size or decoding delay. In \cite{sagduyu2}, the authors study the throughput of a block-based coding scheme, where receivers acknowledge the successful decoding of an entire block, allowing the sender to move to the next block. Next, they consider the option of adapting the code based on feedback for the multiple receiver case. They build on the two-receiver case of \cite{sagduyu1} and propose a greedy deterministic coding scheme that may not be throughput optimal, but picks a linear combination such that the number of receivers that immediately decode a packet is maximized. In contrast, in our work we consider throughput-optimal policies that aim to minimize queue size and delay. 

In \cite{lacan}, Lacan and Lochin proposes an erasure coding algorithm called Tetrys to ensure reliability in spite of losses on the acknowledgment path. While this scheme also employs coding in the presence of feedback, their approach is to make minimal use of the feedback, in order to be robust to feedback losses. As opposed to such an approach, we investigate how best to use the available feedback to improve the coding scheme and other performance metrics. For instance, in the scheme in \cite{lacan}, packets are acknowledged (if at all) only when they are decoded, and these are then dropped from the coding window. However, we show in this work that by dropping packets when they are seen, we can maintain a smaller coding window without compromising on reliability and throughput. A smaller coding window translates to lower encoding complexity and smaller queue size at the sender in the case of stochastic arrivals.

The use of ACKs and coded retransmissions in a packet erasure broadcast channel has been considered for multiple unicasts \cite{larsson1} and multicast (\cite{jolfaei}, \cite{yongsung}, \cite{larsson3}, \cite{nguyen}). The main goal of these works however, is to optimize the throughput. Other metrics such as queue management and decoding delay are not considered. In our work, we focus on using feedback to optimize these metrics as well, in addition to achieving 100\% throughput in a multicast setting. Our coding module (in Section \ref{codingmodule}) is closely related to the one proposed by Larsson in an independent work \cite{larsson3}. However, our algorithm is specified using the more general framework of seen packets, which allows us to derive the drop-when-seen queue management algorithm and bring out the connection between the physical queue and virtual queue sizes. Reference \cite{larsson3} does not consider the queue management problem. Moreover, using the notion of seen packets allows our algorithm to be compatible even with random coding. This in turn enables a simple ACK format and makes it suitable for practical implementation. (See Remark \ref{larssonremark} for further discussion.)

\subsubsection*{Implications of our new scheme}
The newly proposed scheme has many useful implications:
\begin{itemize} 
\item {\bf Queue size:} The physical queue size is upper-bounded by the sum of the backlogs in degrees of freedom between the sender and all the receivers. This fact implies that as the traffic load approaches capacity (as load factor $\rho\rightarrow 1$), the expected size of the physical queue at the sender is $O\left(\frac1{1-\rho}\right)$. This is the same order as for single-receiver ARQ, and hence, is order-optimal.

\item {\bf Queuing analysis:} Our scheme forms a natural bridge between the virtual and physical queue sizes. It can be used to extend results on the stability of virtual queues such as \cite{traceyharish}, \cite{desmondatilla} and \cite{infocom07} to physical queues. Moreover, various results obtained for virtual queues from traditional queuing theory, such as the transform based analysis for the queue size of M/G/1 queues, or even a Jackson network type of result \cite{desmondjournal}, can be extended to the physical queue size of nodes in a network coded system.

\item {\bf Simple queue management:} Our approach based on \emph{seen packets} ensures that the sender does not have to store linear combinations of the packets in the queue to represent the basis of the coset space. Instead, it can store the basis using the original uncoded packets themselves. Therefore, the queue follows a simple first-in-first-out service discipline.

\item {\bf Online encoding:} All receivers see packets in the same order in which they arrived at the sender. This gives a guarantee that the information deficit at the receiver is restricted to a set of packets that advances in a streaming manner and has a stable size (namely, the set of unseen packets). In this sense, the proposed encoding scheme is truly online. 

\item {\bf Easy decoding:} Every transmitted linear combination is sparse -- at most $n$ packets are coded together for the $n$ receiver case. This reduces the decoding complexity as well as the overhead for embedding the coding coefficients in the packet header. 

\item {\bf Extensions:}  We present our scheme for a single packet erasure broadcast channel. However, our algorithm is composable across links and can be applied to a tandem network of broadcast links. With suitable modifications, it can potentially be applied to a more general setup like the one in \cite{desmondthesis} provided we have feedback. Such extensions are discussed further in Section \ref{extensions}.

\end{itemize}

\subsection{Decoding delay}
The drop-when-seen algorithm and the associated coding module do not guarantee that the seen packets will be decoded immediately. In general, there will be a delay in decoding, as the receiver will have to collect enough linear combinations involving the unknown packets before being able to decode the packets.

Online feedback-based adaptation of the code with the goal of minimizing decoding delay has been studied in the context of a packet erasure broadcast channel in \cite{keller}. However, their notion of delay ignores the order in which packets are decoded. For the special case of only two receivers, \cite{durvy} proposes a feedback-based coding algorithm that not only achieves 100\% throughput, but also guarantees that every successful innovative reception will cause the receiver to decode a new packet. We call this property \emph{instantaneous decodability}. However, this approach does not extend to the case of more than two receivers. With prior knowledge of the erasure pattern, \cite{keller} gives an offline algorithm that achieves optimal delay and throughput for the case of three receivers. However, in the online case, even with only three receivers, \cite{durvy} shows through an example (Example V.1) that it is not possible to simultaneously guarantee instantaneous decodability as well as throughput optimality.

In the light of this example, our current work aims for a relaxed version of instantaneous decodability while still retaining the requirement of optimal throughput. We consider a situation with stochastic arrivals and study the problem using a queuing theory approach. Let $\lambda$ and $\mu$ be the arrival rate and the channel quality parameter respectively. Let $\rho\triangleq \lambda/\mu$ be the load factor. We consider asymptotics when the load factor on the system tends to 1, while keeping either $\lambda$ or $\mu$ fixed at a number less than 1. The optimal throughput requirement means that the queue of undelivered packets is stable for all values of $\rho$ less than 1. Our new requirement on decoding delay is that the growth of the average decoding delay as a function of $\frac{1}{1-\rho}$ as $\rho\rightarrow 1$, should be of the same order as for the single receiver case. The expected per-packet delay of a receiver in a system with more than one receiver is clearly lower bounded by the corresponding quantity for a single-receiver system. Thus, instead of instantaneous decoding, we aim to guarantee asymptotically optimal decoding delay as the system load approaches capacity. The motivation is that in most practical systems, delay becomes a critical issue only when the system starts approaching its full capacity. When the load on the system is well within its capacity, the delay is usually small and hence not an issue. For the case of two receivers, it can be shown that this relaxed requirement is satisfied by the scheme in \cite{durvy} due to the instantaneous decodability property, \ie, the scheme achieves the asymptotically optimal average decoding delay per packet for the two-receiver case. 

In our current work, we provide a new coding module for the case of three receivers that {\bf achieves optimal throughput. We conjecture that at the same time it also achieves an asymptotically optimal decoding delay as the system approaches capacity, in the following sense.} With a single receiver, the optimal scheme is ARQ with no coding and we show that this achieves an expected per-packet delay at the sender of $\Theta\left(\frac{1}{1-\rho}\right)$. For the three-receiver system, we conjecture that our scheme also achieves a delay of $O\left(\frac{1}{1-\rho}\right)$, and thus meets the lower bound in an asymptotic sense. We also study a stronger notion of delay, namely the \emph{delivery delay}, which measures delay till the point when the packet can be delivered to the application above, with the constraint that packets cannot be delivered out of order. We conjecture that our scheme is asymptotically optimal even in terms of delivery delay.

We have verified these conjectures through simulations for values of $\rho$ that are very close to 1. It is useful to note that asymptotically optimal decoding delay translates to asymptotically optimal expected queue occupancy at the sender using the simple queuing rule of dropping packets that have been decoded by all receivers.

Adaptive coding allows the sender's code to incorporate receivers' states of knowledge and thereby enables the sender to control the evolution of the front of contiguous knowledge. Our schemes may thus be viewed as a step towards feedback-based control of the tradeoff between throughput and decoding delay, along the lines suggested in \cite{desmondfeedback}. 

\subsection{Organization}
The rest of the paper is organized as follows. Section \ref{setup} describes the packet erasure broadcast setting. 
Section \ref{queuesize} is concerned with adaptive codes that minimize the sender's queue size. In Section \ref{algm1section}, we define and analyze a baseline algorithm that drops packets only when they have been decoded by all receivers. Section \ref{algm2asection} presents a generic form of our newly proposed algorithm, and introduces the idea of excluding from the sender's queue, any knowledge that is common to all receivers. We show that the algorithm guarantees that the physical queue size tracks the virtual queue size. Section \ref{algm2section} presents an easily implementable variant of the generic algorithm of Section \ref{algm2asection}, called the drop-when-seen algorithm. The drop-when-seen algorithm consists of a queuing module that provides guarantees on the queue size, and a coding module that provides guarantees on reliability and throughput, while complying with the queuing module.
In Section \ref{delaysection}, we investigate adaptive codes aimed at minimizing the receivers' decoding delay. For the case of three receivers, we propose a new coding module that is proved to be throughput optimal and conjectured to be asymptotically optimal in terms of delay. Section \ref{extensions} presents some ideas on extending the algorithms to more general topologies and scenarios.
Finally, Section \ref{conc} gives the conclusions.

\section{The setup}\label{setup}
In this paper, we consider a communication problem where a sender wants to broadcast a stream of data to $n$ receivers. The data are organized into \emph{packets}, which are essentially vectors of fixed size over a finite field $\mathbb{F}_q$. A packet erasure broadcast channel connects the sender to the receivers. Time is slotted. The details of the queuing model and its dynamics are described next.

\subsection*{The queuing model}
The sender is assumed to have an infinite buffer, \ie, a queue with no preset size constraints. We assume that the sender is restricted to use linear codes. Thus, every transmission is a linear combination of packets from the incoming stream that are currently in the buffer. The vector of coefficients used in the linear combination summarizes the relation between the coded packet and the original stream. We assume that this coefficient vector is embedded in the packet header. A node can compute any linear combination whose coefficient vector is in the linear span of the coefficient vectors of previously received coded packets. In this context, the state of knowledge of a node can be defined as follows. 

\begin{definition}[Knowledge of a node]
The \emph{knowledge of a node} at some point in time is the set of all linear combinations of the original packets that the node can compute, based on the information it has received up to that point. The coefficient vectors of these linear combinations form a vector space called the \emph{knowledge space} of the node. 
\end{definition}

We use the notion of a virtual queue to represent the backlog between the sender and receiver in terms of linear degrees of freedom. This notion was also used in \cite{traceyharish}, \cite{desmondatilla} and \cite{infocom07}. There is one virtual queue for each receiver. 

\begin{definition}[Virtual queue]
For $j=1, 2, \ldots, n$, the size of the $j^{th}$ virtual queue is defined to be the difference between the dimension of the knowledge space of the sender and that of the $j^{th}$ receiver.
\end{definition}

We will use the term \emph{physical queue} to refer to the sender's actual buffer, in order to distinguish it from the virtual queues. Note that the virtual queues do not correspond to real storage. 

\begin{definition}[Degree of freedom]
  The term \emph{degree of freedom} refers to one dimension in the knowledge space of a node. It corresponds to one packet worth of data.
\end{definition}

\begin{definition}[Innovative packet]
  A coded packet with coefficient vector $\mathbf{c}$ is said to be \emph{innovative} to a receiver with knowledge space $V$ if $\mathbf{c}\notin V$. Such a packet, if successfully received, will increase the dimension of the receiver's knowledge space by one unit.
\end{definition}

\begin{definition}[Innovation guarantee property]
  Let $V$ denote the sender's knowledge space, and $V_j$ denote the knowledge space of receiver $j$ for $j=1, 2, \ldots, n$. A coding scheme is said to have the \emph{innovation guarantee property} if in every slot, the coefficient vector of the transmitted linear combination is in $V\backslash V_j$ for every $j$ such that $V_j\ne V$. In other words, the transmission is innovative to every receiver except when the receiver already knows everything that the sender knows.
\end{definition}

\subsection*{Arrivals}
Packets arrive into the sender's physical queue according to a Bernoulli process\footnote{We have assumed Bernoulli arrivals for ease of exposition. However, we expect the results to hold for more general arrival processes as well.} of rate $\lambda$. 
An arrival at the physical queue translates to an arrival at each virtual queue since the new packet is a new degree of freedom that the sender knows, but none of the receivers knows. 

\subsection*{Service}
The channel accepts one packet per slot. Each receiver either receives this packet with no errors (with probability $\mu$) or an erasure occurs (with probability $(1-\mu)$). Erasures occur independently across receivers and across slots. The receivers are assumed to be capable of detecting an erasure. 

We only consider coding schemes that satisfy the innovation guarantee property. This property implies that if the virtual queue of a receiver is not empty, then a successful reception reveals a previously unknown degree of freedom to the receiver and the virtual queue size decreases by one unit. We can thus map a successful reception by some receiver to one unit of service of the corresponding virtual queue. This means, in every slot, each virtual queue is served independently of the others with probability $\mu$.

The relation between the service of the virtual queues and the service of the physical queue depends on the queue update scheme used, and will be discussed separately under each update policy.

\subsection*{Feedback}
We assume perfect delay-free feedback. In Algorithm 1 below, feedback is used to indicate successful decoding. For all the other algorithms, the feedback is needed in every slot to indicate the occurrence of an erasure.

\subsection*{Timing}
\noindent Figure \ref{earlyarrival} shows the relative timing of various events within a slot. 
All arrivals are assumed to occur \emph{just after the beginning} of the slot. The point of transmission is after the arrival point. For simplicity, we assume very small propagation time. Specifically, we assume that the transmission, unless erased by the channel, reaches the receivers before they send feedback for that slot and feedback from all receivers reaches the sender \emph{before the end of the same slot}. Thus, the feedback incorporates the current slot's reception also. 
Based on this feedback, packets are dropped from the physical queue \emph{just before the end of the slot}, according to the queue update rule. Queue sizes are measured at the end of the slot.

\begin{figure}
	\centering
		\includegraphics[width=0.45\textwidth]{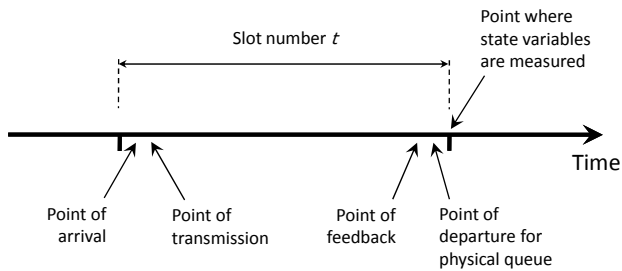}
		\caption{Relative timing of arrival, service and departure points within a slot}\label{earlyarrival}
\end{figure}

The load factor is denoted by $\rho$\ :=\ $\lambda/\mu$. In what follows, we will study the asymptotic behavior of the expected queue size and decoding delay under various policies, as $\rho\rightarrow 1$ from below. For the asymptotics, we assume that either $\lambda$ or $\mu$ is fixed, while the other varies causing $\rho$ to increase to 1.

\section{Queue size}\label{queuesize}
In this section, we first present a baseline algorithm -- retain packets in the queue until the feedback confirms that they have been decoded by all the receivers. Then, we present a new queue update rule that is motivated by a novel coding algorithm. The new rule allows the physical queue size to track the virtual queue sizes.
\subsection{Algorithm 1: Drop when decoded (baseline)}\label{algm1section}
We first present the baseline scheme which we will call Algorithm 1. It combines a random coding strategy with a drop-when-decoded rule for queue update. The coding scheme is an online version of \cite{desmondjournal} with no preset generation size -- a coded packet is formed by computing a random linear combination of all packets currently in the queue. With such a scheme, the innovation guarantee property will hold with high probability, provided the field size is large enough (We assume the field size is large enough to ignore the probability that the coded packet is not innovative. It can be incorporated into the model by assuming a slightly larger probability of erasure because a non-innovative packet is equivalent to an erasure.). 

For any receiver, the packets at the sender are unknowns, and each received linear combination is an equation in these unknowns. Decoding becomes possible whenever the number of linearly independent equations catches up with the number of unknowns involved. The difference between the number of unknowns and number of equations is essentially the backlog in degrees of freedom, \ie, the virtual queue size. Thus, \emph{a virtual queue becoming empty translates to successful decoding at the corresponding receiver}. Whenever a receiver is able to decode in this manner, it informs the sender. Based on this, the sender tracks which receivers have decoded each packet, and drops a packet if it has been decoded by all receivers. From a reliability perspective, this is fine because there is no need to involve decoded packets in the linear combination. 

\begin{remark}\label{caveat}
In general, it may be possible to solve for some of the unknowns even before the virtual queue becomes empty. For example, this could happen if a newly received linear combination cancels everything except one unknown in a previously known linear combination. It could also happen if some packets were involved in a subset of equations that can be solved among themselves locally. Then, even if the overall system has more unknowns than equations, the packets involved in the local system can be decoded. However, these are secondary effects and we ignore them in this analysis. Equivalently, we assume that if a packet is decoded before the virtual queue becomes empty, the sender ignores the occurrence of this event and waits for the next emptying of the virtual queue before dropping the packet. We believe this assumption will not change the asymptotic behavior of the queue size, since decoding before the virtual queue becoming empty is a rare event with random linear coding over a large field. 
\end{remark}

\subsubsection{The virtual queue size in steady state}
We will now study the behavior of the virtual queues in steady state. But first, we introduce some notation:

\noindent \ $Q(t)$\ \ := Size of the sender's physical queue at the end of slot $t$

\noindent \ $Q_j(t)$\ := Size of the $j^{th}$ virtual queue at the end of slot $t$

\begin{figure}
	\centering
		\includegraphics[width=0.4\textwidth]{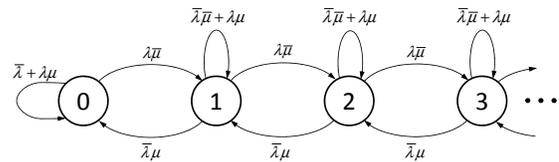}
		\caption{Markov chain representing the size of a virtual queue. Here $\bar{\lambda}:=(1-\lambda)$ and $\bar{\mu}:=(1-\mu)$. }\label{markovchain}
\end{figure}

Figure \ref{markovchain} shows the Markov chain for $Q_j(t)$. If $\lambda<\mu$, then the chain $\{Q_j(t)\}$ is positive recurrent and has a steady state distribution given by \cite{hunterbook}:
\begin{equation}
	\pi_k:=\lim_{t\rightarrow \infty} \mathbb{P}[Q_j(t)=k]=(1-\alpha)\alpha^k, \ \ \ \ \ \ \ \ k\ge 0
	\label{steadystatedist}
\end{equation}
where $\alpha=\frac{\lambda(1-\mu)}{\mu(1-\lambda)}$.

Thus, the expected size of any virtual queue in steady state is given by:
\begin{equation}
	\lim_{t\rightarrow\infty} \mathbb{E}[Q_j(t)]=\sum_{j=0}^\infty j\pi_j = (1-\mu)\cdot\frac{\rho}{(1-\rho)}
	\label{vqsize}
\end{equation}
Next, we analyze the physical queue size under this scheme.

\subsubsection{The physical queue size in steady state}
The following theorem characterizes the asymptotic behavior of the queue size under Algorithm 1, as the load on the system approaches capacity ($\rho\rightarrow 1$).

\ 

\begin{theorem}\label{algm1qsize}
\it The expected size of the physical queue in steady state for Algorithm 1 is $\Omega\left(\frac1{(1-\rho)^2}\right)$.
\end{theorem}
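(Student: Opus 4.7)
The plan is to prove the lower bound via Little's law. The key observation is that under Algorithm 1 with the Remark \ref{caveat} convention, a packet persists in the physical queue until every virtual queue has emptied at least once after its arrival. Letting $D$ denote the sojourn time of a tagged packet in the physical queue and $D_j$ the time from its arrival until $Q_j$ next returns to $0$, we therefore have $D = \max_j D_j \ge D_1$. Little's law in steady state (which applies since $\mathbb{E}[D] \le \sum_j \mathbb{E}[D_j] < \infty$ for $\rho<1$) gives $\mathbb{E}[Q] = \lambda\, \mathbb{E}[D] \ge \lambda\, \mathbb{E}[D_1]$, so the theorem reduces to showing $\mathbb{E}[D_1] = \Omega\bigl(1/(1-\rho)^2\bigr)$.

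To analyze $\mathbb{E}[D_1]$, I restrict attention to receiver $1$ and view $Q_1(t)$ as a Geom/Geom/1 birth-death chain with arrival probability $\lambda$ and service probability $\mu$. By the Bernoulli-arrivals-see-time-averages property, the pre-arrival queue length $Q_1^-$ observed by a tagged arriving packet has the stationary distribution from \eqref{steadystatedist}. Conditional on $Q_1^- = k$, $D_1$ equals the first passage time of $Q_1$ from level $k+1$ down to $0$. Since the chain stays strictly positive throughout this passage, its one-step drift is identically $\lambda - \mu$. Applying the optional stopping theorem to the martingale $Q_1(t) - (\lambda-\mu)t$ at this first-passage time (whose finite mean follows from the strictly negative drift and bounded $\{-1,0,+1\}$ increments) yields $\mathbb{E}[D_1 \mid Q_1^- = k] = (k+1)/(\mu-\lambda)$. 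Averaging over the stationary distribution of $Q_1^-$ and substituting the mean from \eqref{vqsize} gives
\[
\mathbb{E}[D_1] \;=\; \frac{\mathbb{E}[Q_1]+1}{\mu-\lambda} \;=\; \frac{(1-\mu)\rho/(1-\rho) + 1}{\mu(1-\rho)} \;=\; \Theta\!\left(\frac{1}{(1-\rho)^2}\right).
\]
Since $\lambda$ stays bounded away from $0$ as $\rho\to 1$ (either $\lambda$ is fixed, or $\mu$ is fixed and $\lambda\to\mu$), combining with $\mathbb{E}[Q] \ge \lambda\,\mathbb{E}[D_1]$ yields $\mathbb{E}[Q] = \Omega\bigl(1/(1-\rho)^2\bigr)$, as claimed.

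The only non-routine items are verifying the hypotheses of Little's law and optional stopping in discrete time, and neither is a real obstacle: stability of the physical queue follows from the bound on $\mathbb{E}[D]$ above, and first-passage times for Markov chains with strictly negative drift and bounded increments have geometric tails, giving the required uniform integrability. The conceptual heart of the argument is the product $\mathbb{E}[Q_1]/(\mu-\lambda)$: each factor is $\Theta\bigl(1/(1-\rho)\bigr)$, producing the $1/(1-\rho)^2$ scaling. Intuitively, this is one factor of $1/(1-\rho)$ worse than single-receiver ARQ because, by the Remark \ref{caveat} convention, the sender cannot drop a packet when individual receivers decode incrementally — it must wait for an entire busy cycle of some virtual queue to complete, so the sojourn time inherits the size-biased busy-period length of the underlying Geom/Geom/1 chain.
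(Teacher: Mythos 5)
Your proof is correct and follows essentially the same route as the paper's: the sojourn time in the physical queue is the maximum over receivers of the time until that receiver's virtual queue next empties, lower-bounded by a single receiver's term, which is evaluated via BASTA plus a Wald/optional-stopping first-passage computation and then fed into Little's law. The only (immaterial) discrepancy is in the arrival-slot bookkeeping --- the paper conditions on whether a service occurs in the arrival slot and obtains $\mathbb{E}[D_j \mid k] = (k+1-\mu)/(\mu-\lambda)$ rather than your $(k+1)/(\mu-\lambda)$ --- which changes nothing asymptotically since both averages are $\Theta\left(\frac{1}{(1-\rho)^2}\right)$.
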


\ 

Comparing with Equation (\ref{vqsize}), this result makes it clear that the physical queue size does not track the virtual queue size. (We assume that $\lambda$ and $\mu$ are themselves away from 1, but only their ratio approaches 1 from below.) 

In the rest of this subsection, we present the arguments that lead to the above result. Let $T$ be the time an arbitrary arrival in steady state spends in the physical queue before departure, excluding the slot in which the arrival occurs (Thus, if a packet departs immediately after it arrives, then $T$ is 0.). A packet in the physical queue will depart when each virtual queue has become empty at least once since its arrival. Let $D_j$ be the time starting from the new arrival, till the next emptying of the $j^{th}$ virtual queue. Then, $T=\max_j D_j$ and so, $\mathbb{E}[T]\ge \mathbb{E}[D_j]$. Hence, we focus on $\mathbb{E}[D_j]$.

We condition on the event that the state seen by the new arrival just before it joins the queue, is some state $k$. There are two possibilities for the queue state at the end of the slot in which the packet arrives. If the channel is ON in that slot, then there is a departure and the state at the end of the slot is $k$. If the channel is OFF, then there is no departure and the state is $(k+1)$. Now, $D_j$ is simply the first passage time from the state at the end of that slot to state 0, \ie, the number of slots it takes for the system to reach state 0 for the first time, starting from the state at the end of the arrival slot. Let $\Gamma_{u,v}$ denote the expected first passage time from state $u$ to state $v$. The expected first passage time from state $u$ to state 0, for $u>0$ is derived in Appendix \ref{derivation}, and is given by the following expression: 

\[	\Gamma_{u,0}=\frac{u}{\mu-\lambda}\]

Now, because of the property that Bernoulli arrivals see time averages (BASTA) \cite{takagi}, an arbitrary arrival sees the same distribution for the size of the virtual queues, as the steady state distribution given in Equation (\ref{steadystatedist}). 

Using this fact, we can compute the expectation of $D_j$ as follows:
\begin{eqnarray}\label{deejay}
	\mathbb{E}[D_j]&=&\sum_{k=0}^\infty \mathbb{P}(\mbox{New arrival sees state $k$}) \mathbb{E}[D_j|\mbox{State $k$}]\nonumber\\
	&=&\sum_{k=0}^\infty \pi_k [\mu\Gamma_{k,0}+(1-\mu)\Gamma_{k+1,0}]\nonumber\\
	&=&\sum_{k=0}^\infty \pi_k \cdot \frac{\mu k+(1-\mu)(k+1)}{\mu-\lambda}\nonumber\\
	&=&\frac{1-\mu}{\mu}\cdot\frac{\rho}{(1-\rho)^2}
\end{eqnarray}

Now, the expected time that an arbitrary arrival in steady state spends in the system is given by:
\[
\mathbb{E}[T]=\mathbb{E}[\max_j D_j]\ge \mathbb{E}[D_j]=\Omega\left(\frac1{(1-\rho)^2}\right)
\]
Since each virtual queue is positive recurrent (assuming $\lambda<\mu$), the physical queue will also become empty infinitely often. Then we can use Little's law to find the expected physical queue size. 

The expected queue size of the physical queue in steady state if we use algorithm 1 is given by:
\[\lim_{t\rightarrow \infty}\mathbb{E}[Q(t)]=\lambda\mathbb{E}[T]=\Omega\left(\frac1{(1-\rho)^2}\right)\]
This discussion thus completes the proof of Theorem \ref{algm1qsize} stated above.

\subsection{Algorithm 2 (a): Drop common knowledge}\label{algm2asection}
In this section, we first present a generic algorithm that operates at the level of knowledge spaces and their bases, in order to ensure that the physical queue size tracks the virtual queue size. Later, we shall describe a simple-to-implement variant of this generic algorithm.

\subsubsection{An intuitive description}\label{strategy1}
The aim of this algorithm is to drop as much data as possible from the sender's buffer while still satisfying the reliability requirement and the innovation guarantee property. In other words, the sender should store just enough data so that it can always compute a linear combination which is simultaneously innovative to all receivers who have an information deficit. As we shall see, the innovation guarantee property is sufficient for good performance.

After each slot, every receiver informs the sender whether an erasure occurred, using perfect feedback. Thus, there is a slot-by-slot feedback requirement which means that the frequency of feedback messages is higher than in Algorithm 1. The main idea is to exclude from the queue, any knowledge that is known to all the receivers. More specifically, the queue's contents must correspond to some basis of a vector space that is independent of the intersection of the knowledge spaces of all the receivers. We show in Lemma \ref{vectorspacelemma} that with this queuing rule, it is always possible to compute a linear combination of the current contents of the queue that will guarantee innovation, as long as the field size is more than $n$, the number of receivers. 

The fact that the common knowledge is dropped suggests a modular or incremental approach to the sender's operations. Although the knowledge spaces of the receivers keep growing with time, the sender only needs to operate with the projection of these spaces on dimensions currently in the queue, since the coding module does not care about the remaining part of the knowledge spaces that is common to all receivers. Thus, the algorithm can be implemented in an incremental manner. It will be shown that this incremental approach is equivalent to the cumulative approach.  

\begin{table*}
  \centering
  \begin{tabular}{|p{1in}|p{2in}|p{2.5in}|}
	\hline
	&{\bf Uncoded Networks}&{\bf Coded Networks}\\
	\hline
	{\bf Knowledge represented by}&Set of received packets&Vector space spanned by the coefficient vectors of the received linear combinations\\
	\hline
	{\bf Amount of knowledge}&Number of packets received&Number of linearly independent (innovative) linear combinations of packets received (\ie, dimension of the knowledge space)\\
	\hline
	{\bf Queue stores}& All undelivered packets&Linear combination of packets which form a basis for the \emph{coset space} of the common knowledge at all receivers\\
	\hline
	{\bf Update rule after each transmission}&If a packet has been received by all receivers –  drop it.&Recompute the common knowledge space $V_\Delta$; Store a new set of linear combinations so that their span is independent of $V_\Delta$\\
	\hline
  \end{tabular}
  \caption{The uncoded vs. coded case}\label{intuit}
\end{table*}
Table \ref{intuit} shows the main correspondence between the notions used in the uncoded case and the coded case. We now present the queue update algorithm formally. Then we present theorems that prove that under this algorithm, the physical queue size at the sender tracks the virtual queue size. 

All operations in the algorithm occur over a finite field of size $q>n$. The basis of a node's knowledge space is stored as the rows of a basis matrix. The representation and all operations are in terms of local coefficient vectors (\emph{i.e.}, with respect to the current contents of the queue) and not global ones (\emph{i.e.}, with respect to the original packets).

\subsubsection{Formal description of the algorithm}

\noindent {\it \underline{Algorithm 2 (a)}}

\begin{enumerate}
\item [1.] Initialize basis matrices $B$, $B_1, \ldots, B_n$ to the empty matrix. These contain the bases of the incremental knowledge spaces of the sender and receivers in that order.
\item [2.] Initialize the vector $\mathbf{g}$ to the zero vector. This will hold the coefficients of the transmitted packet in each slot. \\
\noindent In every time slot, do:
\item [3.] {\it Incorporate new arrivals:}
 
Let $a$ be the number of new packets that arrived at the beginning of the slot. Place these packets at the end of the queue. Let $B$ have $b$ rows. Set $B$ to $I_{a+b}$. ($I_m$ denotes the identity matrix of size $m$.) Note that $B$ will always be an identity matrix. To make the number of columns of all matrices consistent (\emph{i.e.}, equal to $a+b$), append $a$ all-zero columns to each $B_j$. 

\item [4.] {\it Transmission: }

  If $B$ is not empty, update $\mathbf{g}$ to be any vector that is in $span(B)$, but not in $\cup_{\{j: B_j\subsetneq B\}} span(B_j)$. (Note: $span(B)$ denotes the row space of $B$.)
\ 

Lemma~\ref{vectorspacelemma} shows that such a $\mathbf{g}$ exists. Let $\mathbf{y_1, y_2, }\ldots \mathbf{y_Q}$ represent the current contents of the queue, where the queue size $Q=(a+b)$. Compute the linear combination $\sum_{i=1}^{Q} g_i \mathbf{y_i}$ and transmit it on the packet erasure broadcast channel. If $B$ is empty, set $\mathbf{g}$ to $\mathbf{0}$ and transmit nothing.

\item [5.] {\it Incorporate feedback: }

Once the feedback arrives, for every receiver $j=1$ to $n$, do:
\begin{enumerate}
\item[] If $\mathbf{g}\ne \mathbf{0}$ and the transmission was successfully received by receiver $j$ in this slot, append $\mathbf{g}$ as a new row to $B_j$.
\end{enumerate}

\item [6.] {\it Separate out the knowledge that is common to all receivers: } 

Compute the following (the set notation used here considers the matrices as a set of row vectors):

\begin{tabular}{lp{.1in}p{2.3in}}
$B_{\Delta}$&:=& Any basis of $\cap_{j=1}^n span(B_j)$.\\
$B'$&:=& Completion of $B_{\Delta}$ into a basis of $span(B)$.\\
$B''$&:=& $B'\backslash B_{\Delta}$.\\
$B_{j}'$&:=& Completion of $B_{\Delta}$ into a basis of $span(B_j)$ in such a way that, if we define $B_j'':=B_j'\backslash B_{\Delta}$, then the following holds: $B_j'' \subseteq span(B'')$. Lemma~\ref{bjlemma} proves that this is possible. \\
\end{tabular}

\item [7.] {\it Update the queue contents: }

Replace the contents of the queue with packets $\mathbf{y_1', y_2', }\ldots \mathbf{y_{Q'}'}$ of the form $\sum_{i=1}^Q h_i \mathbf{y_i}$ for each $\mathbf{h}\in B''$. The new queue size $Q'$ is thus equal to the number of rows in $B''$.

\item [8.] {\it Recompute local coefficient vectors with respect to the new queue contents: }

Find a matrix $C_j$ such that $B_j''=X_jB''$ (this is possible because $B_j'' \subseteq span(B'')$). Call $X_j$ the new $B_j$. 
  Update the value of $B$ to $I_{Q'}$.

\item [9.] Go back to step 3 for the next slot.
\end{enumerate}

The above algorithm essentially removes, at the end of each slot, the common knowledge (represented by the basis $B_{\Delta}$) and retains only the remainder $B''$. The knowledge spaces of the receivers are also represented in an incremental manner in the form of $B_j''$, excluding the common knowledge. Since $B_j''\subseteq span(B'')$, the $B_j''$ vectors can be completely described in terms of the vectors in $B''$. It is as if $B_{\Delta}$ has been completely removed from the entire setting, and the only goal remaining is to convey $span(B'')$ to the receivers. Hence, it is sufficient to store linear combinations corresponding to $B''$ in the queue. $B''$ and $B_j''$ get mapped to the new $B$ and $B_j$, and the process repeats in the next slot. 

\begin{lemma}\label{bjlemma}
\it In step 5 of the algorithm above, it is possible to complete $B_{\Delta}$ into a basis $B_j'$ of each $span(B_j)$ such that $B_j''\subseteq span(B'')$.
\end{lemma}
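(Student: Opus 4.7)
The plan is to reduce the statement to a standard direct-sum decomposition argument. Let $V := \mathrm{span}(B)$, $V_j := \mathrm{span}(B_j)$, and $V_\Delta := \cap_{j=1}^n V_j = \mathrm{span}(B_\Delta)$. By the construction of $B'$ in step~6, the set $B' = B_\Delta \cup B''$ is a basis of $V$, so setting $W := \mathrm{span}(B'')$ we have
\[
V = V_\Delta \oplus W.
\]
This direct sum decomposition is the only structural fact I will need; the rest is routine linear algebra.

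For each receiver $j$, I would define
\[
W_j := V_j \cap W
\]
and prove that $V_j = V_\Delta \oplus W_j$. The direct-sum part is immediate: $V_\Delta \cap W_j \subseteq V_\Delta \cap W = \{\mathbf{0}\}$. For the spanning part, I would take an arbitrary $v \in V_j$ and use the decomposition $V = V_\Delta \oplus W$ to write $v = v_\Delta + w$ uniquely, with $v_\Delta \in V_\Delta$ and $w \in W$. Since $V_\Delta \subseteq V_j$ (by definition of $V_\Delta$ as the intersection) and $v \in V_j$, the difference $w = v - v_\Delta$ lies in $V_j$, hence in $V_j \cap W = W_j$. Thus $V_j \subseteq V_\Delta + W_j$, and the reverse inclusion is trivial.

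Having established $V_j = V_\Delta \oplus W_j$, I would pick any basis of $W_j$, call it $B_j''$, and set $B_j' := B_\Delta \cup B_j''$. By the direct-sum decomposition, $B_j'$ is a basis of $V_j = \mathrm{span}(B_j)$ that extends $B_\Delta$, and by construction $B_j'' \subseteq W_j \subseteq W = \mathrm{span}(B'')$, which is exactly what the lemma asserts.

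There is no serious obstacle here; the only thing to be careful about is the direction of the projection argument, namely using $V_\Delta \subseteq V_j$ to conclude that the projection of any $v \in V_j$ onto $W$ along $V_\Delta$ remains inside $V_j$. That is the one place where the hypothesis that $V_\Delta$ is contained in every $V_j$ (as opposed to merely inside $V$) is actually used, and it is what makes the ``incremental'' view of the algorithm consistent with the cumulative view.
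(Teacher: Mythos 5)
Your argument is correct and rests on the same underlying fact as the paper's proof, namely the decomposition $span(B)=span(B_\Delta)\oplus span(B'')$; the packaging differs. The paper starts from an \emph{arbitrary} completion $C_j$ of $B_\Delta$ into a basis of $span(B_j)$ and corrects each completion vector $\mathbf{c}$ individually, writing $\mathbf{c}=\sum_i\alpha_i\mathbf{b_i}+\mathbf{c'}$ with $\mathbf{c'}\in span(B'')$ and replacing $\mathbf{c}$ by $\mathbf{c'}$, i.e., it projects each basis vector along $V_\Delta$ onto $span(B'')$. You instead identify the complement subspace $W_j=V_j\cap span(B'')$ abstractly, prove $V_j=V_\Delta\oplus W_j$, and then choose any basis of $W_j$; this is slightly cleaner and avoids having to argue that the corrected vectors remain a basis. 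The one point you gloss over is the containment $V_j=span(B_j)\subseteq span(B)=V$, which you use implicitly when you decompose an arbitrary $v\in V_j$ as $v_\Delta+w$ with respect to $V=V_\Delta\oplus W$. This containment is not a definitional triviality in the incremental setting (the matrices $B$ and $B_j$ are rebuilt every slot), and the paper states it as an explicit claim proved by induction on the slot number, using the way the algorithm updates $B$ and the $B_j$'s in steps 3, 5, and 8. Your proof would be complete once you add that observation; as written, it silently assumes the only nontrivial algorithmic invariant the lemma depends on.
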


\begin{proof}
We show that any completion of $B_{\Delta}$ into a basis of $span(B_j)$ can be changed to a basis with the required property. 

Let $B_{\Delta}=\{\mathbf{b_1}, \mathbf{b_2}, \ldots , \mathbf{b_m}\}$. Suppose we complete this into a basis $C_j$ of $span(B_j)$ such that:
\[C_j=B_{\Delta}\cup \{\mathbf{c_1}, \mathbf{c_2}, \ldots , \mathbf{c_{|B_j|-m}}\}\]

Now, we claim that at the beginning of step 6, $span(B_j)\subseteq span(B)$ for all $j$. This can be proved by induction on the slot number, using the way the algorithm updates $B$ and the $B_j$'s.  Intuitively, it says that any receiver knows a subset of what the sender knows.

Therefore, for each vector $\mathbf{c} \in C_j\backslash B_{\Delta}$, $\mathbf{c}$ must also be in $span(B)$. Now, since $B_{\Delta}\cup B''$ is a basis of $span(B)$, we can write $\mathbf{c}$ as $\sum_{i=1}^m \alpha_i \mathbf{b_i} + \mathbf{c'}$ with $\mathbf{c'}\in span(B'')$. In this manner, each $\mathbf{c_i}$ gives a distinct $\mathbf{c_i'}$. It is easily seen that $C_j':=B_{\Delta}\cup \{\mathbf{c_1'}, \mathbf{c_2'}, \ldots , \mathbf{c_{|B_j|-m}'}\}$ is also a basis of the same space that is spanned by $C_j$. Moreover, it satisfies the property that $C_j'\backslash B_{\Delta} \subseteq span(B'')$.
\end{proof}

\begin{lemma}\label{vectorspacelemma}
(\cite{infocom07}) \it \ Let $\mathcal{V}$ be a vector space with dimension $k$ over a field of size $q$, and let $\mathcal{V}_1, \mathcal{V}_2, \ldots \mathcal{V}_n$, be subspaces of $\mathcal{V}$, of dimensions $k_1, k_2, \ldots, k_n$ respectively. Suppose that $k>k_i$ for all $i=1, 2, \ldots, n$. Then, there exists a vector that is in $\mathcal{V}$ but is not in any of the $\mathcal{V}_i$'s, if $q>n$.
\end{lemma}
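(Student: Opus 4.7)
The plan is a straightforward counting argument: show that the union $\bigcup_{i=1}^n \mathcal{V}_i$ is a strict subset of $\mathcal{V}$ by comparing cardinalities, and conclude that there must exist a vector in $\mathcal{V}\setminus \bigcup_{i=1}^n \mathcal{V}_i$.

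First I would note that since $\mathcal{V}$ is a $k$-dimensional space over a field of size $q$, we have $|\mathcal{V}| = q^k$. Similarly, each $\mathcal{V}_i$ is a subspace of dimension $k_i$, so $|\mathcal{V}_i| = q^{k_i}$. The assumption $k > k_i$ implies $k_i \le k-1$, and hence $|\mathcal{V}_i| \le q^{k-1}$ for every $i$.

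Next I would apply the union bound to estimate the size of $\bigcup_{i=1}^n \mathcal{V}_i$:
\[
\Bigl|\bigcup_{i=1}^n \mathcal{V}_i\Bigr| \;\le\; \sum_{i=1}^n |\mathcal{V}_i| \;\le\; n\, q^{k-1}.
\]
Invoking the hypothesis $q > n$, we get $n\, q^{k-1} < q \cdot q^{k-1} = q^k = |\mathcal{V}|$. Therefore $\bigcup_{i=1}^n \mathcal{V}_i$ is strictly contained in $\mathcal{V}$, and any vector in the (nonempty) complement witnesses the claim.

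There is no real obstacle here beyond setting up the bound correctly; the only subtle point is that the union bound double-counts the zero vector and other common elements, but this looseness is not a problem because the strict inequality $q > n$ already gives enough slack to conclude $|\bigcup \mathcal{V}_i| < |\mathcal{V}|$. (A tighter version using $|\bigcup \mathcal{V}_i| \le 1 + n(q^{k-1} - 1)$ would even work under the weaker hypothesis $q \ge n$ when $k \ge 2$, but the stated hypothesis $q > n$ is what is needed in the application in Step 4 of Algorithm 2(a), so I would not bother optimizing the bound.)
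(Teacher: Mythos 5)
Your counting argument is correct: the union bound together with $|\mathcal{V}_i|\le q^{k-1}$ and $q>n$ immediately gives $\bigl|\bigcup_i \mathcal{V}_i\bigr|\le nq^{k-1}<q^k=|\mathcal{V}|$. The paper itself does not reproduce a proof (it defers entirely to the cited reference \cite{infocom07}), and your argument is the standard one for this fact, so there is nothing to reconcile; your side remark about the sharper bound $1+n(q^{k-1}-1)$ under $q\ge n$ is also accurate but, as you note, unnecessary here.
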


\begin{proof} See \cite{infocom07} for the proof.
\end{proof}
This lemma is also closely related to the result in \cite{larsson3}, which derives the smallest field size needed to ensure innovation guarantee. 

\subsubsection{Connecting the physical and virtual queue sizes}\label{proofsection}
In this subsection, we will prove the following result that relates the size of the physical queue at the sender and the virtual queues, which themselves correspond to the backlog in degrees of freedom.

\ 

\begin{theorem}\label{algm2athm}
\it For Algorithm 2 (a), the physical queue size at the sender is upper bounded by the sum of the backlog differences between the sender and each receiver in terms of the number of degrees of freedom.
\end{theorem}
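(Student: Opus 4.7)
My plan is to reduce the statement to a purely linear-algebraic inequality about dimensions of subspaces, after first noting what the physical queue size is in terms of vector spaces.

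\medskip
\noindent\textbf{Step 1: Identify what $Q(t)$ equals in vector-space terms.} First I would argue that Algorithm~2(a) maintains the following invariant after step~7 of every slot: the packets currently in the queue, viewed as vectors in the coefficient space of the original incoming stream, form a basis of a subspace $S(t)\subseteq V$ that is complementary to $V_\Delta(t):=\bigcap_{j=1}^n V_j(t)$, \emph{i.e.} $V = S(t)\oplus V_\Delta(t)$. Consequently $Q(t)=\dim V-\dim V_\Delta(t)$. This invariant is proved by induction on the slot index: arrivals increase $\dim V$ and $\dim S(t)$ by the same amount $a$ while leaving $V_\Delta(t)$ unchanged (new packets are unknown to every receiver), and the replacement of the queue in step~7 by (lifts of) the rows of $B''$ preserves the invariant, because by construction $B_\Delta\cup B''$ is a basis of $B$ while the global lift of $B_\Delta$ spans exactly the portion of $V_\Delta(t)$ not already discarded in earlier slots. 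Lemma~\ref{bjlemma} is precisely what makes this bookkeeping self-consistent.

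\medskip
\noindent\textbf{Step 2: A dimension inequality for arbitrary subspaces.} Having translated the claim into $\dim V-\dim\bigcap_j V_j\leq\sum_{j=1}^n(\dim V-\dim V_j)$, I would prove the following purely linear-algebraic fact: for any finite-dimensional vector space $V$ and any subspaces $V_1,\ldots,V_n\subseteq V$,
\begin{equation*}
\dim V \;-\; \dim\Bigl(\bigcap_{j=1}^n V_j\Bigr) \;\leq\; \sum_{j=1}^n\bigl(\dim V-\dim V_j\bigr).
\end{equation*}
The proof is by induction on $n$. For $n=2$ the claim rearranges to $\dim V_1+\dim V_2\leq\dim V+\dim(V_1\cap V_2)$, which is immediate from the standard identity $\dim(V_1+V_2)+\dim(V_1\cap V_2)=\dim V_1+\dim V_2$ together with $V_1+V_2\subseteq V$. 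For the inductive step, set $W:=\bigcap_{j=1}^{n-1}V_j$; applying the $n=2$ case to $W$ and $V_n$ and then the inductive hypothesis to $V_1,\ldots,V_{n-1}$ closes the induction.

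\medskip
\noindent\textbf{Step 3: Combine.} Since $\dim V-\dim V_j=Q_j(t)$ by definition of the virtual queue, combining the invariant from Step~1 with the inequality from Step~2 yields
\begin{equation*}
Q(t)\;=\;\dim V-\dim V_\Delta(t)\;\leq\;\sum_{j=1}^n\bigl(\dim V-\dim V_j(t)\bigr)\;=\;\sum_{j=1}^n Q_j(t),
\end{equation*}
which is exactly the statement of the theorem.

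\medskip
\noindent The main obstacle is Step~1: the algorithm operates on \emph{local} coefficient vectors (with respect to the current queue contents), so one must verify carefully that across slots the local operations implement the desired global decomposition $V=S(t)\oplus V_\Delta(t)$. The inductive argument sketched above, combined with Lemma~\ref{bjlemma}, is the cleanest way to handle this; once Step~1 is in hand, Steps~2 and~3 are routine linear algebra.
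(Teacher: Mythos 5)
Your proposal is correct and follows essentially the same route as the paper: your Step~1 is the paper's Lemma~\ref{localglobalmaplemma}, Theorem~\ref{incrementtheorem} and Theorem~\ref{thm1} (the inductive local-to-global bookkeeping establishing $Q(t)=\dim V(t)-\dim V_\Delta(t)$), your Step~2 is Lemma~\ref{modularitylemma1} proved by the same induction on $n$ via the two-subspace dimension identity, and Step~3 is the paper's final combination. The only remark is that the full induction in Step~1 must also carry the per-receiver invariants $V_j(t)=V_\Delta(t)\oplus U_j(t)$ and use a distributivity fact for direct sums over intersections (the paper's Lemma~\ref{distrib_lemma}) to conclude that the locally computed $B_\Delta$ really captures the new common knowledge, but you correctly flag this bookkeeping as the main obstacle and propose the right inductive strategy for it.
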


\ 

Let $a(t)$ denote the number of arrivals in slot $t$, and let $A(t)$ be the total number of arrivals up to and including slot $t$, \ie, $A(t)=\sum_{t'=0}^t a(t')$. Let $B(t)$ (resp. $B_j(t)$) be the matrix $B$ (resp. $B_j$) after incorporating the slot $t$ arrivals, \emph{i.e.}, at the end of step 3 in slot $t$. Let $H(t)$ be a matrix whose rows are the \emph{global} coefficient vectors of the queue contents at the end of step 3 in time slot $t$, \ie, the coefficient vectors in terms of the original packet stream. Note that each row of $H(t)$ is in $\mathbb{F}_q^{A(t)}$. 

Let $\mathbf{g}(t)$ denote the vector $\mathbf{g}$ at the calculated in step 4 in time slot $t$, \emph{i.e.}, the local coefficient vector of the packet transmitted in slot $t$. Also, let $B_\Delta(t)$ (resp. $B''(t)$, $B_j'(t)$ and $B_j''(t)$) denote the matrix  $B_\Delta$ (resp. $B''$, $B_j'$ and $B_j''$) at the end of step 6 in time slot $t$. 

\begin{lemma}\label{localglobalmaplemma}
\it The rows of $H(t)$ are linearly independent for all $t$.
\end{lemma}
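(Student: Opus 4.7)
The plan is induction on the slot index $t$. For the base case, before any arrivals the queue is empty and $H(0)$ is the empty matrix, so the claim is vacuous; in the first slot that contains an arrival, step 3 simply initializes $H$ to an identity block on the new packets, whose rows are trivially independent.

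For the inductive step, assume the rows of $H(t-1)$ are linearly independent and track how the queue's global coefficient matrix evolves between the end of step 3 of slot $t-1$ and the end of step 3 of slot $t$. Steps 4--6 of slot $t-1$ leave the queue itself untouched. Step 7 replaces the queue packets $\mathbf{y_1},\ldots,\mathbf{y_Q}$ by the combinations $\{\sum_i h_i \mathbf{y_i} : \mathbf{h}\in B''(t-1)\}$, so by linearity the global coefficient matrix becomes $B''(t-1)\,H(t-1)$. Step 8 only relabels local coordinates. Finally, step 3 of slot $t$ appends $a(t)$ freshly arrived packets (standard basis rows in the new coordinates) and widens the old rows with $a(t)$ zero columns, giving the block form
\[
H(t) \;=\; \begin{pmatrix} B''(t-1)\,H(t-1) & \mathbf{0} \\ \mathbf{0} & I_{a(t)} \end{pmatrix}.
\]

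Linear independence of the rows of $H(t)$ then reduces to two clean facts. First, because the bottom rows are the only ones with nonzero entries in the last $a(t)$ columns, any vanishing linear combination of the rows of $H(t)$ must assign zero coefficients to the bottom rows, so it suffices to show that the rows of $B''(t-1)\,H(t-1)$ are linearly independent. Second, $B''(t-1)$ has linearly independent rows because step 6 defines it as part of a basis completion of $\mathrm{span}(B) = \mathbb{F}_q^Q$ (recall that step 3 of the previous slot set $B = I_Q$). Combined with the induction hypothesis that $H(t-1)$ has full row rank, right-multiplication by $H(t-1)$ is injective on row vectors in $\mathbb{F}_q^Q$, so the rows of $B''(t-1)\,H(t-1)$ remain linearly independent.

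I do not foresee any serious obstacle. The only point that deserves care is confirming that the step-7 rewrite of queue contents expressed in the \emph{local} coefficients $\mathbf{h}\in B''(t-1)$ translates to left-multiplication by $B''(t-1)$ on the \emph{global} coefficient matrix $H(t-1)$; this is just the linearity of the map from local to global coefficient vectors, applied row by row. Everything else is bookkeeping on the block decomposition.
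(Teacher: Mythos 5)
Your proposal is correct and follows essentially the same route as the paper's proof: induction on $t$, the same block decomposition $H(t)=\bigl[\begin{smallmatrix} B''(t-1)H(t-1) & 0 \\ 0 & I_{a(t)}\end{smallmatrix}\bigr]$, and the observation that $B''(t-1)$ has independent rows (being part of a basis) while $H(t-1)$ has full row rank by hypothesis. The extra detail you supply on why the identity block and the injectivity of right-multiplication preserve independence is just a more explicit rendering of the paper's argument.
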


\begin{proof}
The proof is by induction on $t$. 

{\it Basis step:} In the beginning of time slot 1, $a(1)$ packets arrive. So, $H(1)=I_{a(1)}$ and hence the rows are linearly independent. 

{\it Induction hypothesis:} Assume $H(t-1)$ has linearly independent rows.

{\it Induction step:} The queue is updated such that the linear combinations corresponding to local coefficient vectors in $B''$ are stored, and subsequently, the $a(t)$ new arrivals are appended. Thus, the relation between $H(t-1)$ and $H(t)$ is:
\[H(t)=\left[ \begin{array}{cc} B''(t-1)H(t-1) & 0 \\ 0 & I_{a(t)} \end{array} \right]\]

Now, $B''(t-1)$ has linearly independent rows, since the rows form a basis. The rows of $H(t-1)$ are also linearly independent by hypothesis. Hence, the rows of $B''(t-1)H(t-1)$ will also be linearly independent. Appending $a(t)$ zeros and then adding an identity matrix block in the right bottom corner does not affect the linear independence. Hence, $H(t)$ also has linearly independent rows.
\end{proof}

Define the following:

\ 

\begin{tabular}{lcp{2.5in}}
$U(t)$&:=&Row span of $H(t)$\\
$U_j(t)$&:=&Row span of $B_j(t)H(t)$\\
$U_j'(t)$&:=&Row span of $B_j'(t)H(t)$\\
$U_\Delta '(t)$ &:=& $\cap_{j=1}^n U_j'(t)$\\
$U''(t)$&:=&Row span of $B''(t)H(t)$\\
$U_j''(t)$&:=&Row span of $B_j''(t)H(t)$\\
\end{tabular}
\ 

All the vector spaces defined above are subspaces of $\mathbb{F}_q^{A(t)}$. Figure \ref{timeline} shows the points at which these subspaces are defined in the slot.

\begin{figure}[t]
\centering
\includegraphics[width=0.45\textwidth]{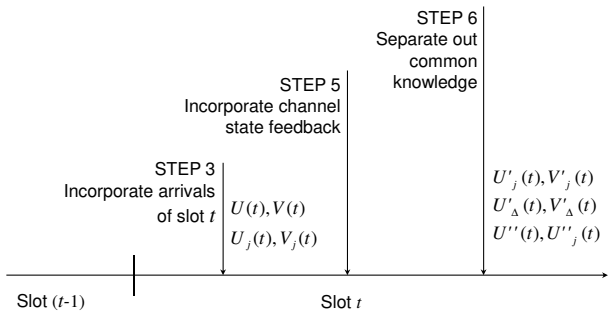}
\caption{The main steps of the algorithm, along with the times at which the various $U(t)$'s are defined}\label{timeline}
\vspace{-.2in}
\end{figure}

The fact that $H(t)$ has full row rank (proved above in Lemma~\ref{localglobalmaplemma}) implies that the operations performed by the algorithm in the domain of the local coefficient vectors can be mapped to the corresponding operations in the domain of the global coefficient vectors:

\begin{enumerate}
\item The intersection subspace $U_\Delta '(t)$ is indeed the row span of $B_\Delta(t)H(t)$. 

\item Let $R_j(t)$ be an indicator (0-1) random variable which takes the value 1 iff the transmission in slot $t$ is successfully received without erasure by receiver $j$ and in addition, receiver $j$ does not have all the information that the sender has. Let $\mathbf{\tilde{g_j}}(t):=R_j(t)\mathbf{g}(t)H(t)$. Then,
\begin{equation}\label{tildeeqn}
U_j'(t)=U_j(t)\oplus span(\mathbf{\tilde{g_j}}(t))
\end{equation}
where $\oplus$ denotes direct sum of vector spaces. The way the algorithm chooses $\mathbf{g}(t)$ guarantees that if $R_j(t)$ is non-zero, then $\mathbf{\tilde{g_j}}(t)$ will be outside the corresponding $U_j(t)$, \emph{i.e.}, it will be innovative. This fact is emphasized by the direct sum in this equation.

\item Because of the way the algorithm performs the completion of the bases in the local domain in step 6, the following properties hold in the global domain:
\begin{eqnarray}
U(t)&=&U_\Delta '(t)\oplus U''(t)\label{eqn1}\\
U_j'(t)&=&U_\Delta '(t)\oplus U_j''(t) \mbox{ \ \ \ \ and, }\label{eqn2}\\
U_j''(t)&\subseteq &U''(t),\ \ \ \  \forall j=1, 2, \ldots, n\label{eqn3}
\end{eqnarray}
\end{enumerate}

From the above properties, we can infer that $U_1''(t)+U_2''(t)+\ldots U_n''(t)\subseteq U''(t)$. After incorporating the arrivals in slot $t+1$, this gives $U_1(t+1)+U_2(t+1)+\ldots U_n(t+1)\subseteq U(t+1)$. Since this is true for all $t$, we write it as:
\begin{equation}\label{equat}
U_1(t)+U_2(t)+\ldots U_n(t)\subseteq U(t)
\end{equation}

Now, in order to relate the queue size to the backlog in number of degrees of freedom, we define the following vector spaces which represent the \emph{cumulative} knowledge of the sender and receivers (See Figure \ref{timeline} for the timing):

\ 

\begin{tabular}{ccp{2.5in}}
 $V(t)$ &:=&  Sender's knowledge space after incorporating the arrivals (at the end of step 3) in slot $t$. This is simply equal to $\mathbb{F}_q^{A(t)}$\\
 $V_j(t)$ &:=&  Receiver $j$'s knowledge space at the end of step 3 in slot $t$\\
 $V_j'(t)$ &:=& Receiver $j$'s knowledge space in slot $t$, after incorporating the channel state feedback into $V_j(t)$, \emph{i.e.}, $V_j'(t)=V_j(t)\oplus span(\mathbf{\tilde{g_j}}(t))$.\\
 $V_\Delta(t)$ &:=& $\cap_{j=1}^n V_j(t)$\\
 $V_\Delta '(t)$ &:=& $\cap_{j=1}^n V_j'(t)$\\
\end{tabular}

\ 

For completeness, we now prove the following facts about direct sums of vector spaces that we will use.

\begin{lemma}\label{distrib_lemma}
\it Let $V$ be a vector space and let $V_\Delta, U_1, U_2, \ldots U_n$ be subspaces of $V$ such that, $V_\Delta$ is independent of the span of all the $U_j$'s, \emph{i.e.,} $dim [V_\Delta\cap (U_1+U_2+\ldots +U_n)]=0$. Then, \[V_\Delta \oplus \left[\cap_{i=1}^n U_i\right]=\cap_{i=1}^n \left[V_\Delta \oplus U_i\right] \]
\end{lemma}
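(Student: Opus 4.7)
My plan is to prove the identity by double inclusion, treating both sides as subspaces of $V$ and verifying set equality of their underlying vectors. Before starting, I would note that the two direct sums on the two sides are well-defined: on the left, $V_\Delta \cap \bigl[\cap_i U_i\bigr] \subseteq V_\Delta \cap (U_1+\cdots+U_n) = \{0\}$, and on the right, for each $i$, $V_\Delta \cap U_i \subseteq V_\Delta \cap (U_1+\cdots+U_n) = \{0\}$. So writing $\oplus$ on either side is legitimate under the hypothesis.

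The inclusion $\subseteq$ is the routine direction. Any element of the left-hand side has the form $v_\Delta + u$ with $v_\Delta \in V_\Delta$ and $u \in \cap_i U_i$; since $u \in U_i$ for every $i$, this element lies in $V_\Delta \oplus U_i$ for every $i$, hence in the intersection.

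The reverse inclusion $\supseteq$ is where the hypothesis does real work, and this is the main step I expect to carry the argument. Take $w \in \cap_{i=1}^n [V_\Delta \oplus U_i]$ and, for each $i$, write $w = v_\Delta^{(i)} + u_i$ with $v_\Delta^{(i)} \in V_\Delta$ and $u_i \in U_i$. The plan is to show that the $V_\Delta$-components must all coincide. For any pair $i,j$, equating the two decompositions of $w$ gives
\[
v_\Delta^{(i)} - v_\Delta^{(j)} \;=\; u_j - u_i \;\in\; V_\Delta \cap (U_i + U_j).
\]
Since $U_i + U_j \subseteq U_1 + \cdots + U_n$, the hypothesis forces this difference to be $0$. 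Hence there is a common $v_\Delta \in V_\Delta$ and a common $u \in V$ with $w = v_\Delta + u$ and $u = u_i \in U_i$ for every $i$; thus $u \in \cap_i U_i$ and $w \in V_\Delta \oplus \bigl[\cap_i U_i\bigr]$.

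The only subtle step is the collapsing of the possibly distinct decompositions $w = v_\Delta^{(i)} + u_i$ into a single one. The hypothesis is used here in precisely the form stated — it is not enough to know that $V_\Delta$ is independent of each $U_i$ individually; we need independence from pairwise (and ultimately the whole) sum, which is exactly what rules out a nonzero vector of the form $u_j - u_i$ sitting inside $V_\Delta$. Once uniqueness of the $V_\Delta$-component is established, membership of $u$ in the intersection $\cap_i U_i$ is immediate. No further calculation is required, and the proof is complete.
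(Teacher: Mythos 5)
Your proof is correct and follows essentially the same double-inclusion argument as the paper: the easy direction is identical, and the reverse direction uses the same key step of comparing the decompositions $w = v_\Delta^{(i)} + u_i$ pairwise and invoking $V_\Delta \cap (U_1+\cdots+U_n) = \{0\}$ to collapse them to a common one. Your added remark verifying that the direct sums are well-defined is a small bonus not spelled out in the paper, but the substance of the argument is the same.
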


See Appendix \ref{distriblemmaproof} for the proof.

\ 

\begin{lemma}\label{associativity}
  \it
  Let $A, B,$ and $C$ be three vector spaces such that $B$ is independent of $C$ and $A$ is independent of $B\oplus C$. Then the following hold:
  \begin{enumerate}
	\item $A$ is independent of $B$.\label{statement1}
	\item $A\oplus B$ is independent of $C$.\label{statement2}
	\item $A\oplus (B\oplus C) = (A\oplus B)\oplus C$.\label{statement3}
  \end{enumerate}
\end{lemma}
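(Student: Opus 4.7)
The plan is to handle the three statements in order, each reducing to a short check that a particular intersection of subspaces is trivial, since ``independent'' in this paper means exactly $S_1 \cap S_2 = \{\mathbf{0}\}$. Throughout I would exploit that a direct sum $S_1 \oplus S_2$ is, as a set of vectors, the ordinary Minkowski sum $S_1 + S_2$; independence only adds uniqueness of the decomposition. This keeps everything at the level of definitions, with no dimension count needed.

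Statement (\ref{statement1}) is immediate: $B \subseteq B \oplus C$, so $A \cap B \subseteq A \cap (B \oplus C) = \{\mathbf{0}\}$ by hypothesis. For statement (\ref{statement2}), once (\ref{statement1}) tells us $A \oplus B$ is meaningful, I would pick an arbitrary $v \in (A \oplus B) \cap C$, write $v = a + b$ with $a \in A$ and $b \in B$, and rearrange to $a = v - b$. Since $v \in C$ and $b \in B$, this forces $a \in B + C = B \oplus C$, so the hypothesis gives $a = \mathbf{0}$. Hence $v = b \in B \cap C = \{\mathbf{0}\}$, the last equality being the independence of $B$ and $C$. Thus $(A \oplus B) \cap C = \{\mathbf{0}\}$.

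Statement (\ref{statement3}) is then essentially bookkeeping. The underlying Minkowski sums satisfy $A + (B + C) = (A + B) + C$ by associativity of $+$ for subspaces, so the two sides of the claimed equality describe the same subspace. It remains only to note that the direct-sum notation on each side is justified: the left-hand side by the original hypothesis together with $B, C$ independent, and the right-hand side by statements (\ref{statement1}) and (\ref{statement2}). The main ``obstacle,'' if any, is just making sure that each occurrence of $\oplus$ in the statement is supplied with an independence certificate by an earlier part of the lemma; no further machinery is required.
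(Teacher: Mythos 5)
Your proposal is correct and follows essentially the same route as the paper's proof: statement 1 by containment, statement 2 by the same rearrangement $a = v - b \in B\oplus C$ forcing $a=\mathbf{0}$ and then $v\in B\cap C$, and statement 3 by noting both sides are the same subspace once the direct sums are certified (the paper phrases this as a chain of unique-decomposition equivalences, but the content is identical).
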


See Appendix \ref{associativityproof} for the proof.

\

\begin{theorem}\label{incrementtheorem}
\it For all $t\ge 0$,
\begin{eqnarray*}
V(t)&=& V_\Delta (t)\oplus U(t)\\
V_j(t)&=& V_\Delta (t)\oplus U_j(t) \ \ \ \ \forall j=1,2,\ldots n\\
V_\Delta'(t)&=& V_\Delta(t)\oplus U_\Delta'(t)
\end{eqnarray*}
\end{theorem}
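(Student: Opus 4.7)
The plan is to establish all three identities simultaneously by induction on the slot index $t$. The base case $t=0$ is trivial since every space in sight is $\{0\}$ before any arrival or transmission has occurred. For the inductive step, I would first record the transitions between slots. Let $W(t) \subseteq \mathbb{F}_q^{A(t)}$ denote the span of the standard basis vectors corresponding to the $a(t)$ new arrivals in slot $t$, and identify $\mathbb{F}_q^{A(t-1)}$ with its zero-padded image in $\mathbb{F}_q^{A(t)}$. Because receivers learn nothing between the end of slot $t-1$ and the arrival point of slot $t$, and because arrivals contribute genuinely new dimensions only to the sender, the cumulative spaces evolve as
\begin{align*}
V(t) &= V(t-1) \oplus W(t), & V_j(t) &= V_j'(t-1), & V_\Delta(t) &= V_\Delta'(t-1).
\end{align*}
From the block structure of $H(t)$ derived in the proof of Lemma~\ref{localglobalmaplemma}, and the update $B_j(t) = [X_j\ \ 0]$ from step~8, the incremental spaces evolve as $U(t) = U''(t-1) \oplus W(t)$ and $U_j(t) = U_j''(t-1)$.

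For the first identity, I would chain the inductive hypothesis at $t-1$ with property~(\ref{eqn1}) at slot $t-1$:
\[ V(t) = V_\Delta(t-1) \oplus U(t-1) \oplus W(t) = V_\Delta(t-1) \oplus U_\Delta'(t-1) \oplus U''(t-1) \oplus W(t), \]
which regroups via Lemma~\ref{associativity} into $V_\Delta'(t-1) \oplus U(t) = V_\Delta(t) \oplus U(t)$. The second identity follows the same template, additionally invoking~(\ref{tildeeqn}) and~(\ref{eqn2}) at slot $t-1$:
\[ V_j(t) = V_j'(t-1) = V_\Delta(t-1) \oplus U_j(t-1) \oplus span(\tilde{g_j}(t-1)) = V_\Delta(t-1) \oplus U_j'(t-1), \]
and this in turn equals $V_\Delta(t-1) \oplus U_\Delta'(t-1) \oplus U_j''(t-1) = V_\Delta(t) \oplus U_j(t)$.

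The third identity is the most delicate step. Expanding $V_\Delta'(t) = \cap_j V_j'(t) = \cap_j [V_j(t) \oplus span(\tilde{g_j}(t))]$ and substituting the second identity at slot~$t$ together with~(\ref{tildeeqn}) gives
\[ V_\Delta'(t) = \bigcap_{j=1}^n \bigl[ V_\Delta(t) \oplus U_j'(t) \bigr]. \]
To pull $V_\Delta(t)$ out of the intersection via Lemma~\ref{distrib_lemma}, I need $V_\Delta(t)$ to be independent of $\sum_j U_j'(t)$. But each $U_j'(t)$ is contained in $U(t)$, so the sum lies in $U(t)$; and the first identity at slot $t$, which I established just above, asserts $V(t) = V_\Delta(t) \oplus U(t)$, supplying the required independence. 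Applying Lemma~\ref{distrib_lemma} then concludes $V_\Delta'(t) = V_\Delta(t) \oplus \cap_j U_j'(t) = V_\Delta(t) \oplus U_\Delta'(t)$.

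The main obstacle is the third identity, both because it relies on the distributive law for direct sums inside an intersection (Lemma~\ref{distrib_lemma}) and because its independence hypothesis only becomes available after the first identity has been established at the current slot, forcing a specific ordering inside the inductive step. The remaining work is bookkeeping, but the bookkeeping is delicate: the spaces nominally live in different ambient spaces $\mathbb{F}_q^{A(t-1)}$ or $\mathbb{F}_q^{A(t)}$, and one must verify that the zero-padding embedding preserves every direct-sum decomposition being invoked.
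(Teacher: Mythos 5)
Your proposal is correct and follows essentially the same route as the paper's proof: simultaneous induction on $t$, with the first two identities obtained by chaining the inductive hypothesis with the decompositions (\ref{eqn1}), (\ref{tildeeqn}), (\ref{eqn2}) and regrouping via Lemma~\ref{associativity}, and the third obtained by pulling $V_\Delta(t)$ out of the intersection via Lemma~\ref{distrib_lemma}, whose independence hypothesis is supplied by the already-established first identity together with $\sum_j U_j'(t)\subseteq U(t)$. Your explicit treatment of the zero-padding embedding and the required ordering within the inductive step matches what the paper does implicitly when it "incorporates the arrivals."
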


\begin{proof}
The proof is by induction on $t$. 

\noindent {\it Basis step:}

At $t=0$, $V(0)$, $U(0)$ as well as all the $V_j(0)$'s and $U_j(0)$'s are initialized to $\{\mathbf{0}\}$. Consequently, $V_\Delta(0)$ is also $\{\mathbf{0}\}$. It is easily seen that these initial values satisfy the equations in the theorem statement.

\ 

\noindent {\it Induction Hypothesis:}

We assume the equations hold at $t$, \emph{i.e.}, 
\begin{eqnarray}
  V(t)&\mbox{=}& V_\Delta (t)\oplus U(t)\label{ih1}\\
  V_j(t)&\mbox{=}& V_\Delta (t)\oplus U_j(t), \forall j=1, 2, \ldots n\label{ih2}\\
  V_\Delta'(t)&\mbox{=}& V_\Delta(t)\oplus U_\Delta'(t)\label{ih3}
\end{eqnarray}

\noindent {\it Induction Step:} We now prove that they hold in slot $(t+1)$. We have:

$V(t)$
\begin{align*}
  &= V_\Delta(t) \oplus U(t) &&\text{(from (\ref{ih1}))}\\
  &= V_\Delta (t)\oplus [U_\Delta '(t)\oplus U''(t)]&& \text{(from (\ref{eqn1}))}\\
  &= [V_\Delta (t)\oplus U_\Delta '(t)]\oplus U''(t)&& \text{(Lemma \ref{associativity})}\\
  &= V_\Delta' (t) \oplus U''(t) && \text{(from (\ref{ih3}))}
\end{align*}
Thus, we have proved:
\begin{equation}\label{firststep}
  V(t)=V_\Delta' (t) \oplus U''(t)
\end{equation}

Now, we incorporate the arrivals in slot $(t+1)$. This converts $V_\Delta'(t)$ to $V_\Delta(t+1)$, $U''(t)$ to $U(t+1)$, and $V(t)$ to $V(t+1)$, due to the following operations:
\begin{eqnarray*}
\mbox{Basis of } V_\Delta(t+1)&=&\left[\begin{array}{cc}\mbox{Basis of } V_\Delta'(t) & 0\end{array}\right]\\
\mbox{Basis of } U(t+1)\ \ &=&\left[\begin{array}{cc}\mbox{Basis of } U''(t) & 0 \\ 0 & I_{a(t+1)}\end{array}\right]\\
\mbox{Basis of } V(t+1)\ \ &=&\left[\begin{array}{cc}\mbox{Basis of } V(t) & 0 \\ 0 & I_{a(t+1)}\end{array}\right]
\end{eqnarray*}

Incorporating these modifications into (\ref{firststep}), we get: 
\[V(t+1)=V_\Delta(t+1)\oplus U(t+1)\]

Now, consider each receiver $j=1, 2, \ldots n$.

$V_j'(t)$
\begin{align*}
&= V_j(t) \oplus span(\mathbf{\tilde{g_j}}(t))&&\\
&= [V_\Delta(t) \oplus U_j(t)] \oplus span(\mathbf{\tilde{g_j}}(t))&& \text{(from (\ref{ih2}))}\\
&= V_\Delta(t) \oplus [U_j(t) \oplus span(\mathbf{\tilde{g_j}}(t))]&& \text{(Lemma \ref{associativity})}\\
&= V_\Delta (t) \oplus U_j'(t)&& \text{(from (\ref{tildeeqn}))}\\
&= V_\Delta (t)\oplus [U_\Delta '(t)\oplus U_j''(t)] && \text{(from (\ref{eqn2}))}\\
&= [V_\Delta (t)\oplus U_\Delta '(t)]\oplus U_j''(t) && \text{(Lemma \ref{associativity})}\\
&= V_\Delta' (t) \oplus U_j''(t) && \text{(from (\ref{ih3}))}
\end{align*}

Incorporating the new arrivals into the subspaces involves adding $a(t+1)$ all-zero columns to the bases of $V_j'(t)$, $V_\Delta'(t)$, and $U_j''(t)$, thereby converting them into bases of $V_j(t+1), V_\Delta(t+1)$, and $U_j(t+1)$ respectively. These changes do not affect the above relation, and we get:
\[V_j(t+1)=V_\Delta(t+1)\oplus U_j(t+1),\ \ \ \ \forall j=1, 2, \ldots n\]
And finally,

$V_\Delta' (t+1) $
\begin{eqnarray*}
&=& \cap_{j=1}^n V_j'(t+1)\\
&=& \cap_{j=1}^n [V_j(t+1)\oplus span(\mathbf{\tilde{g_j}}(t+1))]\\
&=& \cap_{j=1}^n [V_\Delta (t+1)\oplus U_j(t+1)\oplus span(\mathbf{\tilde{g_j}}(t+1))]\\
&\stackrel{(a)}{=}& V_\Delta (t+1) \oplus \cap_{j=1}^n [U_j(t+1) \oplus span(\mathbf{\tilde{g_j}}(t+1))]\\
&=& V_\Delta (t+1)\oplus U_\Delta'(t+1)\\
\end{eqnarray*}
Step $(a)$ is justified as follows. 
Using equation (\ref{equat}) and the fact that $\mathbf{\tilde{g_j}}(t+1)$ was chosen to be inside $U(t+1)$, we can show that the span of all the $[U_j(t+1)\oplus span(\mathbf{\tilde{g_j}}(t+1))]$'s is inside $U(t+1)$. Now, from the induction step above, $V_\Delta(t+1)$ is independent of $U(t+1)$. Therefore, $V_\Delta(t+1)$ is independent of the span of all the $[U_j(t+1)\oplus span(\mathbf{\tilde{g_j}}(t+1))]$'s. We can therefore apply Lemma \ref{distrib_lemma}. 
\end{proof}

\begin{theorem}\label{thm1}
\it Let $Q(t)$ denote the size of the queue after the arrivals in slot $t$ have been appended to the queue.  
\[Q(t)=dim\ V(t)-dim\ V_\Delta(t)\]
\end{theorem}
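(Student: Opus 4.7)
The plan is to chain together three observations: $Q(t)$ counts queue entries, queue entries correspond bijectively to rows of $H(t)$, and those rows are a basis for $U(t)$, whose dimension is exactly $\dim V(t)-\dim V_\Delta(t)$ by Theorem~\ref{incrementtheorem}.

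First I would unwind the definitions. By step 3 of Algorithm~2(a), immediately after incorporating the $a(t)$ new arrivals the basis matrix $B$ is reset to an identity matrix whose number of rows equals the current queue size $Q(t)$. The matrix $H(t)$ is by construction the matrix whose rows are the global coefficient vectors of the packets sitting in the queue at that moment, so $H(t)$ also has exactly $Q(t)$ rows. Hence $Q(t)$ is the number of rows of $H(t)$.

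Second, I would invoke Lemma~\ref{localglobalmaplemma}, which says that the rows of $H(t)$ are linearly independent. This means those rows form a basis of their row span, which by definition is $U(t)$. Therefore
\[
Q(t)=\dim U(t).
\]

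Finally, the first equation of Theorem~\ref{incrementtheorem} states $V(t)=V_\Delta(t)\oplus U(t)$. Since a direct sum has dimension equal to the sum of the dimensions of the summands,
\[
\dim V(t)=\dim V_\Delta(t)+\dim U(t),
\]
so $\dim U(t)=\dim V(t)-\dim V_\Delta(t)$. Combining with $Q(t)=\dim U(t)$ yields the claim. There is no real obstacle here: the whole argument is bookkeeping, with all the heavy lifting already done by Lemma~\ref{localglobalmaplemma} (which certifies that the local‑to‑global map has full row rank, so queue entries genuinely correspond to independent directions in the global coefficient space) and by Theorem~\ref{incrementtheorem} (which certifies that the queue's global coefficient space is exactly a complement of the common knowledge $V_\Delta(t)$ inside the sender's knowledge $V(t)$).
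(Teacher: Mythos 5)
Your proof is correct, and it reaches the conclusion by a shorter chain than the paper's. Both arguments rest on the same two ingredients: the identification $Q(t)=\dim U(t)$ (which you rightly justify via Lemma~\ref{localglobalmaplemma}, since $U(t)$ is the row span of $H(t)$ and $H(t)$ has one row per queue entry) and Theorem~\ref{incrementtheorem}. The difference is that you read the answer directly off the first identity $V(t)=V_\Delta(t)\oplus U(t)$, which immediately gives $\dim U(t)=\dim V(t)-\dim V_\Delta(t)$, whereas the paper instead telescopes back through slot $t-1$, writing $\dim U(t)=\dim U''(t-1)+a(t)$ and then invoking equation~(\ref{eqn1}) together with the other two identities of Theorem~\ref{incrementtheorem} to reassemble $\dim V(t)-\dim V_\Delta(t)$. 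Since Theorem~\ref{incrementtheorem} is already proved for all $t$, the paper's extra unwinding buys nothing here, and your direct application is the cleaner route; there is no timing subtlety being dodged, because $Q(t)$, $U(t)$, $V(t)$ and $V_\Delta(t)$ are all defined at the same point in the slot (the end of step~3).
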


\begin{proof}

\noindent	$Q(t) = dim\ U(t) = dim\ U''(t-1)+a(t)$
\begin{align*}
	&= dim\ U(t-1)-dim\ U_\Delta '(t-1)+a(t) \\
	& \ \ \ \ \ \ \mbox{(using (\ref{eqn1})}\\ 
	&= dim\ V(t-1) - dim\ V_\Delta (t-1) -dim\ U_\Delta'(t) +a(t) \\
	& \ \ \ \ \ \ \mbox{(from Theorem \ref{incrementtheorem})}\\
	&= dim\ V(t-1) - dim\ V_\Delta'(t)+a(t)\\
	& \ \ \ \ \ \ \mbox{ (from Theorem \ref{incrementtheorem})}\\
	&= dim\ V(t)-dim\ V_\Delta (t)
\end{align*}
\end{proof}

\begin{lemma}\label{modularitylemma1}
\it Let $V_1, V_2, \ldots, V_k$ be subspaces of a vector space $V$. Then, for $k\ge 1$,
\[dim(V_1\cap V_2\cap \ldots \cap V_k)\ge \sum_{i=1}^k dim(V_i) - (k-1)dim(V)\]
\end{lemma}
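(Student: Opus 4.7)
The plan is to prove this by induction on $k$, using the classical dimension formula $\dim(A) + \dim(B) = \dim(A+B) + \dim(A \cap B)$ for two subspaces, together with the trivial bound $\dim(A+B) \le \dim(V)$.

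For the base case $k=1$, the inequality reduces to $\dim(V_1) \ge \dim(V_1)$, which is immediate. For $k=2$, the two-subspace identity gives $\dim(V_1 \cap V_2) = \dim(V_1) + \dim(V_2) - \dim(V_1 + V_2) \ge \dim(V_1) + \dim(V_2) - \dim(V)$, matching the claimed bound.

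For the inductive step, assume the inequality holds for $k-1$, and set $W := V_1 \cap V_2 \cap \cdots \cap V_{k-1}$. By the induction hypothesis, $\dim(W) \ge \sum_{i=1}^{k-1} \dim(V_i) - (k-2)\dim(V)$. Applying the two-subspace identity to $W$ and $V_k$ gives $\dim(W \cap V_k) \ge \dim(W) + \dim(V_k) - \dim(V)$. Substituting the lower bound on $\dim(W)$ yields $\dim(V_1 \cap \cdots \cap V_k) \ge \sum_{i=1}^k \dim(V_i) - (k-1)\dim(V)$, as required.

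There is no real obstacle here — the result is a standard consequence of the two-subspace dimension formula applied iteratively, and the bookkeeping of the constant $(k-1)\dim(V)$ falls out cleanly from the induction. The only thing to be mindful of is that the two-subspace identity is being invoked at each step with one of the arguments being the intersection of all previously processed subspaces, so the induction must be stated in terms of the partial intersection rather than trying to combine all subspaces at once.
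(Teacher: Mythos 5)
Your proof is correct and follows essentially the same route as the paper: induction on $k$, with the two-subspace dimension identity $\dim(A)+\dim(B)=\dim(A+B)+\dim(A\cap B)$ combined with $\dim(A+B)\le \dim(V)$ applied at each step to the partial intersection and the next subspace. No issues.
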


\begin{proof}
For any two subspaces $X$ and $Y$ of $V$, 
\[dim(X\cap Y)+dim(X+Y)=dim(X)+dim(Y) \]
where $X+Y$ denotes the span of subspaces $X$ and $Y$.

Hence,
\begin{eqnarray}\label{dimineq}
\nonumber dim(X\cap Y) &=& dim(X)+dim(Y)-dim(X+Y)\\
&\ge & dim(X)+dim(Y)-dim(V)\\
\nonumber &&\mbox{(since $X+Y$ is also a subspace of $V$)}
\end{eqnarray}

Now, we prove the lemma by induction on $k$.

\noindent {\it Basis step:}

$k=1:$ LHS = $dim(V_1)$, \ RHS = $dim(V_1)$

$k=2:$ LHS = $dim(V_1\cap V_2)$, \ RHS = $dim(V_1)+dim(V_2)-dim(V)$\\
The claim follows from inequality (\ref{dimineq}).

\noindent {\it Induction Hypothesis:}

For some arbitrary $k$, 
\[dim(\cap_{i=1}^{k-1} V_i)\ge \sum_{i=1}^{k-1} dim(V_i)-(k-2)dim(V) \]

\noindent {\it Induction Step:}

$dim(\cap_{i=1}^k V_i) = dim(V_k\cap \cap_{i=1}^{k-1} V_i)$
\begin{eqnarray*}
&\ge & dim(V_k)+dim(\cap_{i=1}^{k-1} V_i)-dim(V) \ \ \ \mbox{(using (\ref{dimineq}))}\\
&\ge & dim(V_k)+\left[\sum_{i=1}^{k-1} dim(V_i) -(k-2)dim(V)\right] \\
&&-dim(V)\\
&=& \sum_{i=1}^{k} dim(V_i) -(k-1)dim(V)
\end{eqnarray*}

The above result can be rewritten as:
\begin{equation}\label{connectingineq1}
dim(V)-dim(V_1\cap V_2\cap \ldots V_k)\le \sum_{i=1}^k[dim(V)-dim(V_i)]
\end{equation}
\end{proof}

Using this result, we can now prove Theorem \ref{algm2athm}.

{\it Proof of Theorem \ref{algm2athm}:}
If we apply Lemma~\ref{modularitylemma1} to the vector spaces $V_j(t), j=1,2,\ldots ,n$ and $V(t)$, then the left hand side of inequality (\ref{connectingineq1}) becomes the sender queue size (using Theorem~\ref{thm1}), while the right hand side becomes the sum of the differences in backlog between the sender and the receivers, in terms of the number of degrees of freedom. Thus, we have proved Theorem \ref{algm2athm}.
\endproof

\subsection{Algorithm 2 (b): Drop when seen}\label{algm2section}
The drop-when-seen algorithm can be viewed as a specialized variant of the generic Algorithm 2 (a) given above. It uses the notion of seen packets (defined in Section \ref{contrib}) to represent the bases of the knowledge spaces. This leads to a simple and easy-to-implement version of the algorithm which, besides ensuring that physical queue size tracks virtual queue size, also provides some practical benefits. For instance, the sender need not store linear combinations of packets in the queue like in Algorithm 2 (a). Instead only original packets need to be stored, and the queue can be operated in a simple first-in-first-out manner. We now present some mathematical preliminaries before describing the algorithm.

\subsubsection{Some preliminaries}
The newly proposed algorithm uses the notion of reduced row echelon form (RREF) of a matrix to represent the knowledge of a receiver. Hence, we first recapitulate the definition and some properties of the RREF from \cite{artinbook}, and present the connection between the RREF and the notion of seeing packets.

\begin{definition}[Reduced row echelon form (RREF)]
	A matrix is said to be in reduced row echelon form if it satisfies the following conditions:
	\begin{enumerate}
		\item The first nonzero entry of every row is 1. 
		\item The first nonzero entry of any row is to the right of the first nonzero entry of the previous row.
		\item The entries above the first nonzero row of any row are all zero.
	\end{enumerate}
\end{definition}

The RREF leads to a standard way to represent a vector space. Given a vector space, consider the following operation -- arrange the basis vectors in any basis of the space as the rows of a matrix, and perform Gaussian elimination. This process essentially involves a sequence of elementary row transformations and it produces a unique matrix in RREF such that its row space is the given vector space. We call this the RREF basis matrix of the space. We will use this representation for the knowledge space of the receivers.

Let $V$ be the knowledge space of some receiver. Suppose $m$ packets have arrived at the sender so far. 
Then the receiver's knowledge consists of linear combinations of some collection of these $m$ packets, \emph{i.e.}, $V$ is a subspace of $\mathbb{F}_q^m$. Using the procedure outlined above, we can compute the $dim(V)\times m$ RREF basis matrix of $V$ over $\mathbb{F}_q$. 

In the RREF basis, the first nonzero entry of any row is called a \emph{pivot}. Any column with a pivot is called a \emph{pivot column}. By definition, each pivot occurs in a different column. Hence, the number of pivot columns equals the number of nonzero rows, which is $dim[V]$.
Let $\mathbf{p_k}$ denote the packet with index $k$. The columns are ordered so that column $k$ maps to packet $\mathbf{p_k}$. The following theorem connects the notion of seeing packets to the RREF basis. 

\begin{theorem}\label{rreftheorem}
	\it A node has seen a packet with index $k$ if and only if the $k^{th}$ column of the RREF basis $B$ of the knowledge space $V$ of the node is a pivot column. 
\end{theorem}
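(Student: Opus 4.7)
The plan is to translate the combinatorial definition of ``seeing'' into an explicit coefficient-vector statement and then read off the conclusion from the structural properties of the RREF. Using the indexing convention that column $k$ of the basis corresponds to packet $\mathbf{p_k}$, the node has seen $\mathbf{p_k}$ if and only if its knowledge space $V$ contains a vector of the form $\mathbf{v}=(0,\ldots,0,1,v_{k+1},\ldots,v_m)$ --- zero in the first $k-1$ coordinates and $1$ in coordinate $k$, with arbitrary tail. This is merely the coefficient-vector restatement of the linear combination $\mathbf{p_k}+\mathbf{q}$ with $\mathbf{q}$ involving only later packets.

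For the ``if'' direction, assume column $k$ of $B$ is a pivot column. Then by the first two RREF properties, the basis row $\mathbf{b}$ whose pivot sits in column $k$ satisfies $b_i=0$ for $i<k$ and $b_k=1$. Since $\mathbf{b}\in V$, it is exactly the certificate required above, so the node has seen $\mathbf{p_k}$.

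For the ``only if'' direction, let $\mathbf{b_1},\ldots,\mathbf{b_d}$ denote the nonzero rows of $B$ with pivot columns $k_1<k_2<\cdots<k_d$, and suppose $\mathbf{v}\in V$ has the form above. Write $\mathbf{v}=\sum_{j=1}^{d}\alpha_j\mathbf{b_j}$. The third RREF property ensures each pivot column $k_j$ has a $1$ in row $j$ and a $0$ in every other row, so reading off coordinate $k_j$ gives $v_{k_j}=\alpha_j$. For every $j$ with $k_j<k$, the seeing hypothesis $v_{k_j}=0$ forces $\alpha_j=0$, so $\mathbf{v}$ is actually a combination of the rows with $k_j\ge k$. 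For any $j$ with $k_j>k$, the leading-zero property gives $b_{j,k}=0$, so the only surviving contributions to $v_k$ come from indices with $k_j=k$. As $v_k=1\neq 0$, some such index must exist, meaning column $k$ is a pivot column.

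The only subtlety is invoking the two independent RREF properties in the correct order: the ``pivot columns are standard basis vectors'' property is what kills the coefficients $\alpha_j$ with $k_j<k$, while the ``entries strictly left of each pivot are zero'' property is what rules out contributions from rows with $k_j>k$. Once both are in hand, the argument is mechanical and requires no case analysis beyond what appears above.
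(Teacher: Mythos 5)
Your proof is correct and follows essentially the same route as the paper's: the ``if'' direction reads the certificate directly off the pivot row, and the ``only if'' direction uses the two structural properties of the RREF (pivot columns are standard basis vectors; entries to the left of a pivot vanish) to pin down where the first nonzero coordinate of a vector in $V$ can lie. The paper argues the ``only if'' direction contrapositively while you argue it directly by extracting the coefficients $\alpha_j$, but the underlying mechanism is identical.
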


\IEEEproof
The `if' part is clear. If column $k$ of $B$ is a pivot column, then the corresponding pivot row corresponds to a linear combination known to the node, of the form $\mathbf{p_k}+\mathbf{q}$, where $\mathbf{q}$ involves only packets with index more than $k$. Thus, the node has seen $\mathbf{p_k}$.

For the `only if' part, suppose column $k$ of $B$ does not contain a pivot. Then, in any linear combination of the rows, rows with pivot after column $k$ cannot contribute anything to column $k$. Rows with pivot before column $k$ will result in a non-zero term in some column to the left of $k$. Since every vector in $V$ is a linear combination of the rows of $B$, the first non-zero term of any vector in $V$ cannot be in column $k$. Thus, $\mathbf{p_k}$ could not have been seen. 
\endproof

Since the number of pivot columns is equal to the dimension of the vector space, we obtain the following corollary.
\begin{corollary}\label{numberseen}
	\it The number of packets seen by a receiver is equal to the dimension of its knowledge space.
\end{corollary}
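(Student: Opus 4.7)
The plan is to derive this corollary as an immediate consequence of Theorem \ref{rreftheorem} together with a standard fact about the RREF. Concretely, I would proceed in two short steps.

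First, I would invoke Theorem \ref{rreftheorem} to identify the set of packets seen by the receiver with the set of pivot columns of the RREF basis matrix $B$ of its knowledge space $V$: packet $\mathbf{p_k}$ is seen if and only if column $k$ of $B$ is a pivot column. Hence the count of seen packets equals the number of pivot columns of $B$.

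Second, I would use the definition of the RREF to observe that each nonzero row of $B$ contains exactly one pivot, and no two pivots share a column. Therefore the number of pivot columns equals the number of nonzero rows of $B$, which in turn is the rank of $B$, i.e., $\dim(V)$. Chaining these equalities gives the corollary.

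There is no real obstacle here; the only thing to be careful about is just citing the right structural properties of RREF (one pivot per nonzero row, pivots in distinct columns) rather than re-deriving them. The argument is essentially a one-line corollary of Theorem \ref{rreftheorem}, and the proof in the paper can simply read: by Theorem \ref{rreftheorem} the number of seen packets equals the number of pivot columns of the RREF basis of $V$, which equals the number of nonzero rows of the basis, which equals $\dim(V)$.
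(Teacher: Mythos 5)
Your proposal is correct and matches the paper's argument exactly: the paper also derives the corollary by combining Theorem \ref{rreftheorem} (seen packets correspond to pivot columns) with the observation, already noted in its RREF preliminaries, that the number of pivot columns equals the number of nonzero rows of the RREF basis, which is $\dim(V)$. No gaps.
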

The next corollary introduces a useful concept. 
\begin{corollary}\label{witness}
	\it If receiver $j$ has seen packet $\mathbf{p_k}$, then it knows exactly one linear combination of the form $\mathbf{p_k}+\mathbf{q}$ such that $\mathbf{q}$ involves only \emph{\textbf{unseen}} packets with index more than $k$. 
\end{corollary}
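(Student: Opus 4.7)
The plan is to read the unique linear combination directly off the RREF basis $B$ of the receiver's knowledge space $V$. By Theorem \ref{rreftheorem}, since the receiver has seen $\mathbf{p_k}$, column $k$ of $B$ is a pivot column; let $r$ be the (unique) row whose pivot entry lies in column $k$. I will argue that row $r$ of $B$ itself provides the desired linear combination, and then show uniqueness using the coordinate structure of RREF.

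For existence, row $r$ of $B$ represents a linear combination in $V$ of the form $\mathbf{p_k} + \mathbf{q}$, since the pivot entry is $1$. I need to check $\mathbf{q}$ involves only \emph{unseen} packets with index $>k$. The first nonzero entry of row $r$ sits in column $k$ (by definition of pivot), so all entries in columns $<k$ are zero; hence $\mathbf{q}$ only involves indices $>k$. Moreover, the RREF conditions force each pivot column to contain a single nonzero entry, namely the $1$ at the pivot itself: entries above a pivot are zero by condition (3), and entries below a pivot are zero because subsequent rows have their own pivots strictly further right (condition (2)) and therefore have $0$ in every column to the left of that later pivot. Hence for every pivot column $k'>k$ (corresponding to a later seen packet), the entry in row $r$ is $0$, so $\mathbf{q}$ involves only non-pivot columns of index $>k$, i.e., only unseen packets with index $>k$, as required.

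For uniqueness, suppose $\mathbf{p_k}+\mathbf{q_1}$ and $\mathbf{p_k}+\mathbf{q_2}$ both lie in $V$ with each $\mathbf{q_i}$ supported only on unseen packets of index $>k$. Their difference $\mathbf{w}:=\mathbf{q_1}-\mathbf{q_2}\in V$ is supported entirely on non-pivot columns. Expanding $\mathbf{w}=\sum_i \alpha_i B_i$ as a combination of the rows of $B$, the RREF structure again yields that the entry of $\mathbf{w}$ in the pivot column $c_i$ of row $i$ equals exactly $\alpha_i$ (since that column has a $1$ in row $i$ and $0$s in every other row). Because $\mathbf{w}$ has $0$ at every pivot column by assumption, every $\alpha_i=0$, and hence $\mathbf{w}=\mathbf{0}$, giving $\mathbf{q_1}=\mathbf{q_2}$.

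There is no serious obstacle here; the entire argument is essentially a careful unpacking of Theorem \ref{rreftheorem} together with the two canonical properties of RREF — pivots strictly step to the right, and each pivot is the unique nonzero entry in its column. The only bookkeeping to watch is ruling out contributions from pivot columns \emph{to the right of} $k$ in the existence part, which is where the ``each pivot column has a single nonzero entry'' property is used.
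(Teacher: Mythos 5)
Your proof is correct and follows essentially the same route as the paper: existence is read off the pivot row of the RREF basis (the pivot is the first nonzero entry, and every other pivot column vanishes in that row, so the tail is supported on unseen packets of larger index), and uniqueness comes from the fact that a vector of $V$ supported only on non-pivot (unseen) columns must be zero. The paper phrases the uniqueness step as a contradiction---knowing $\mathbf{q}-\mathbf{q'}\neq\mathbf{0}$ would mean the receiver has seen one of the supposedly unseen packets---while you compute the RREF coordinates explicitly, but these are the same underlying observation.
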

\IEEEproof
We use the same notation as above. The receiver has seen $\mathbf{p_k}$. Hence, column $k$ in $B$ is a pivot column. By definition of RREF, in the row containing the pivot in column $k$, the pivot value is 1 and subsequent nonzero terms occur only in non-pivot columns. Thus, the corresponding linear combination has the given form $\mathbf{p_k}+\mathbf{q}$, where $\mathbf{q}$ involves only \emph{unseen} packets with index more than $k$. 

We now prove uniqueness by contradiction. Suppose the receiver knows another such linear combination $\mathbf{p_k}+\mathbf{q'}$ where $\mathbf{q'}$ also involves only unseen packets. Then, the receiver must also know $(\mathbf{q}-\mathbf{q'})$. But this means the receiver has seen some packet involved in either $\mathbf{q}$ or $\mathbf{q'}$ -- a contradiction. 
\endproof

\begin{definition}[Witness]
	We denote the unique linear combination guaranteed by Corollary \ref{witness} as $\mathbf{W_j}(\mathbf{p_k})$, the \emph{witness for receiver $j$ seeing $\mathbf{p_k}$}.
\end{definition}

\subsubsection{Description of Algorithm 2 (b)}\label{strategy2}
The central idea of the algorithm is to keep track of seen packets instead of decoded packets. The two main parts of the algorithm are the coding and queue update modules.  

In Section \ref{codingmodule}, we present the formal description of our coding module. The coding module computes a linear combination $\mathbf{g}$ that will cause any receiver that receives it, to see its next unseen packet. First, for each receiver, the sender computes its knowledge space using the feedback and picks out its next unseen packet. Only these packets will be involved in $\mathbf{g}$, and hence we call them the \emph{transmit set}. Now, we need to select coefficients for each packet in this set. Clearly, the receiver(s) waiting to see the oldest packet in the transmit set (say $\mathbf{p_1}$) will be able to see it as long as its coefficient is not zero. Consider a receiver that is waiting to see the second oldest packet in the transmit set (say $\mathbf{p_2})$. Since the receiver has already seen $\mathbf{p_1}$, it can subtract the witness for $\mathbf{p_1}$, thereby canceling it from $\mathbf{g}$. The coefficient of $\mathbf{p_2}$ must be picked such that after subtracting the witness for $\mathbf{p_1}$, the remaining coefficient of $\mathbf{p_2}$ in $\mathbf{g}$ is non-zero. The same idea extends to the other coefficients. The receiver can cancel packets involved in $\mathbf{g}$ that it has already seen by subtracting suitable multiples of the corresponding witnesses. Therefore, the coefficients for $\mathbf{g}$ should be picked such that for each receiver, after canceling the seen packets, the remaining coefficient of the next unseen packet is non-zero. Then, the receiver will be able to see its next unseen packet. Theorem \ref{innovation} proves that this is possible if the field size is at least $n$, the number of receivers. With two receivers, the coding module is a simple XOR based scheme (see Table \ref{exampletable}). Our coding scheme meets the innovation guarantee requirement because Theorem \ref{rreftheorem} implies that a linear combination that would cause a new packet to be seen brings in a previously unknown degree of freedom.

The fact that the coding module uses only the next unseen packet of all receivers readily implies the following queue update rule. \textbf{Drop a packet if all receivers have seen it.} This simple rule ensures that the physical queue size tracks the virtual queue size. 

\begin{remark}\label{larssonremark}
  In independent work, \cite{larsson3} proposes a coding algorithm which uses the idea of selecting those packets for coding, whose indices are one more than each receiver's rank. This corresponds to choosing the next unseen packets in the special case where packets are seen in order. Moreover, this algorithm picks coding coefficients in a deterministic manner, just like our coding module. Therefore, our module is closely related to the algorithm of \cite{larsson3}. 
  
  However, our algorithm is based on the framework of seen packets. This allows several benefits. First, it immediately leads to the drop-when-seen queue management algorithm, as described above. In contrast, \cite{larsson3} does not consider queuing aspects of the problem. Second, in this form, our algorithm readily generalizes to the case where the coding coefficients are picked randomly. The issue with random coding is that packets may be seen out of order. Our algorithm will guarantee innovation even in this case (provided the field is large), by selecting a random linear combination of the next unseen packets of the receivers. However, the algorithm of \cite{larsson3} may not work well here, as it may pick packets that have already been seen, which could cause non-innovative transmissions. 

The compatibility of our algorithm with random coding makes it particularly useful from an implementation perspective. With random coding, each receiver only needs to inform the sender the set of packets it has seen. There is no need to convey the exact knowledge space. This can be done simply by generating a TCP-like cumulative ACK upon seeing a packet. Thus, the ACK format is the same as in traditional ARQ-based schemes. Only its interpretation is different.  
\end{remark}

We next present the formal description and analysis of the queue update algorithm.

\subsubsection{The queuing module}\label{formal}
The algorithm works with the RREF bases of the receivers' knowledge spaces. The coefficient vectors are with respect to the current queue contents and not the original packet stream.

\noindent {\it \underline{Algorithm 2 (b)}}
\begin{enumerate}
\item [1.]Initialize matrices $B_1, B_2, \ldots, B_n$ to the empty matrix. 
These matrices will hold the bases of the incremental knowledge spaces of the receivers.  

\item [2.]{\it Incorporate new arrivals:}
 Suppose there are $a$ new arrivals. Add the new packets to the end of the queue. Append $a$ all-zero columns on the right to each $B_j$ for the new packets. 

\item [3.]{\it Transmission: }
If the queue is empty, do nothing; else compute $\mathbf{g}$ using the coding module and transmit it.

\item [4.]{\it Incorporate channel state feedback: }

For every receiver $j=1$ to $n$, do:

If receiver $j$ received the transmission, include the coefficient vector of $\mathbf{g}$ in terms of the current queue contents, as a new row in $B_j$. Perform Gaussian elimination.

\item [5.]{\it Separate out packets that all receivers have seen: } 

Update the following sets and bases:

\ \ $S_j'$\ \ := Set of packets corresponding to the pivot columns of $B_j$

\ \ $S_{\Delta}'$\ := $\cap_{j=1}^n S_j'$ 

New $B_j$\ := Sub-matrix of current $B_j$ obtained by excluding columns in $S_\Delta'$ and corresponding pivot rows.

\item [6.]{\it Update the queue: }
Drop the packets in $S_\Delta'$. 

\item [7.]Go back to step 2 for the next slot.
\end{enumerate}

\subsubsection{Connecting the physical and virtual queue sizes}\label{proofsection1}
The following theorem describes the asymptotic growth of the expected physical queue size under our new queuing rule. 

\ 

\begin{theorem}\label{mainthm}
\it For Algorithm 2 (b), the physical queue size at the sender is upper-bounded by the sum of the virtual queue sizes, \emph{i.e.}, the sum of the degrees-of-freedom backlog between the sender and the receivers. Hence, 
the expected size of the physical queue in steady state for Algorithm 2 (b) is $O\left(\frac1{1-\rho}\right)$.
\end{theorem}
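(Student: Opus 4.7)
The plan is to establish the bound by giving a simple combinatorial description of what the physical queue actually holds under Algorithm 2(b). First I would argue that at any time $t$, a packet remains in the sender's queue if and only if there is at least one receiver that has not yet seen it. This is immediate from steps 5--6 of the algorithm: a packet enters the queue on arrival and is removed in the first slot in which it lies in the (cumulative) intersection $\cap_{j=1}^n S_j'$ of seen-sets, and once seen by a receiver it stays seen. Writing $U_j(t)$ for the set of arrived-but-unseen packets at receiver $j$ after slot $t$, this gives $Q(t) = \bigl|\bigcup_{j=1}^n U_j(t)\bigr|$.

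The next step is a union bound followed by an identification with the virtual queues. Clearly $Q(t) \le \sum_{j=1}^n |U_j(t)|$. By Corollary \ref{numberseen}, the number of packets seen by receiver $j$ equals $\dim V_j(t)$, so $|U_j(t)| = A(t) - \dim V_j(t) = \dim V(t) - \dim V_j(t) = Q_j(t)$, where the second equality uses the fact that $V(t) = \mathbb{F}_q^{A(t)}$. Hence
\[ Q(t) \;\le\; \sum_{j=1}^n Q_j(t) \qquad \text{for every } t, \]
which is the first part of the theorem.

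For the asymptotic statement, Theorem \ref{innovation} guarantees that the coding module satisfies the innovation guarantee property, so each virtual queue $Q_j(t)$ evolves exactly as the Bernoulli arrival / Bernoulli service Markov chain of Figure \ref{markovchain}, with stationary mean given by Equation (\ref{vqsize}). Taking expectations in the pathwise inequality above and passing to the stationary regime therefore yields
\[ \lim_{t\to\infty}\mathbb{E}[Q(t)] \;\le\; \sum_{j=1}^n \lim_{t\to\infty}\mathbb{E}[Q_j(t)] \;=\; n(1-\mu)\cdot\frac{\rho}{1-\rho} \;=\; O\!\left(\frac{1}{1-\rho}\right). \]

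I do not expect a serious obstacle. The one bookkeeping point that deserves care is that Algorithm 2(b) maintains only the \emph{incremental} bases $B_j$ obtained after excising the columns in $S_\Delta'$ and their pivot rows, whereas Corollary \ref{numberseen} is most naturally read in the cumulative knowledge space $V_j(t)$. A short remark suffices: a packet that has already entered $S_\Delta'$ is pivotal in every receiver's cumulative RREF, and erasing those pivot columns and corresponding rows leaves a one-to-one correspondence between pivots of the incremental $B_j$ and the packets individually seen by receiver $j$ but not yet seen by all, preserving $\dim V_j(t) = |S_j'(t)|$ throughout.
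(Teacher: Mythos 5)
Your proposal is correct and follows essentially the same route as the paper: the identity $Q(t)=\bigl|\bigcup_j U_j(t)\bigr|$ is just the complement form of the paper's $Q(t)=|S(t)|-|\cap_j S_j(t)|$, your union bound is exactly the content of the paper's Lemma \ref{modularitylemma} (whose proof is De Morgan plus the union bound), and the identification $|U_j(t)|=\dim V(t)-\dim V_j(t)$ via Corollary \ref{numberseen} and the Markov-chain bound from Equation (\ref{vqsize}) are the same steps the paper takes. Your closing remark on the incremental-versus-cumulative bases is the point the paper handles with its theorem relating the row space of $B_j$ to receiver $j$'s known combinations, so nothing is missing.
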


\ 

In the rest of this section, we will prove the above result. Now, in order to relate the queue size to the backlog in number of degrees of freedom, we will need the following notation:

\ 

\noindent \ $S(t)$\ 	\ := Set of packets arrived at sender till the end of slot $t$

\noindent \ $V(t)$\ := Sender's knowledge space after incorporating the arrivals in slot $t$. This is simply equal to $\mathbb{F}_q^{|S(t)|}$

\noindent \ $V_j(t)$\ := Receiver $j$'s knowledge space at the end of slot $t$.  It is a subspace of $V(t)$.

\noindent \ $S_j(t)$\ := Set of packets receiver $j$ has seen till end of slot $t$

We will now formally argue that Algorithm 2 (b) indeed implements the drop-when-seen rule in spite of the incremental implementation. In any slot, the columns of $B_j$ are updated as follows. When new packets are appended to the queue, new columns are added to $B_j$ on the right. When packets are dropped from the queue, corresponding columns are dropped from $B_j$. There is no rearrangement of columns at any point. This implies that a one-to-one correspondence is always maintained between the columns of $B_j$ and the packets currently in the queue. Let $U_j(t)$ be the row space of $B_j$ at time $t$. Thus, if $(u_1, u_2, \ldots, u_{Q(t)})$ is any vector in $U_j(t)$, it corresponds to a linear combination of the form $\sum_{i=1}^{Q(t)} u_i \mathbf{p_i}$, where $\mathbf{p_i}$ is the $i^{th}$ packet in the queue at time $t$. The following theorem connects the incremental knowledge space $U_j(t)$ to the cumulative knowledge space $V_j(t)$.

\begin{theorem}
\it In Algorithm 2 (b), for each receiver $j$, at the end of slot $t$, for any $\mathbf{u}\in U_j(t)$, the linear combination $\sum_{i=1}^{Q(t)} u_i \mathbf{p_i}$ is known to the receiver $j$, where $\mathbf{p_i}$ denotes the $i^{th}$ packet in the queue at time $t$. 
\end{theorem}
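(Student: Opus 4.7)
The plan is to prove the statement by induction on the slot number $t$, maintaining the invariant that at the end of slot $t$, \emph{every} row of $B_j$, read as a coefficient vector over the packets currently in the queue, corresponds to a linear combination of original packets that receiver $j$ can compute. Since known combinations are closed under $\mathbb{F}_q$-linear combinations, this invariant immediately yields the stated claim for arbitrary $\mathbf{u}\in U_j(t)$. The base case $t=0$ is trivial, because $B_j$ is initialized to the empty matrix and the queue is empty, so there is nothing to verify.

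For the induction step, I would walk through the sub-steps that the algorithm executes in slot $t$. In step 2, $B_j$ is extended by $a$ all-zero columns and the queue by $a$ newly arrived packets; an existing row together with the appended zeros encodes the same combination of original packets as before, so by the induction hypothesis it remains known to receiver $j$. In step 4, if receiver $j$ received $\mathbf{g}$ in this slot, then $\mathbf{g}$ itself --- a specific linear combination of the current queue contents --- is directly known to receiver $j$, so appending its coefficient vector as a row preserves the invariant. The Gaussian elimination that follows replaces rows by $\mathbb{F}_q$-linear combinations of existing rows, and linear combinations of combinations known to receiver $j$ are themselves known, so RREF-ifying $B_j$ does not break the invariant either.

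The main obstacle is the dropping step (steps 5 and 6), because here both columns and rows disappear from $B_j$ and the queue itself shrinks. I would exploit the defining property of RREF: in any pivot column, the unique nonzero entry is the pivot $1$ in its own row. Since $S_\Delta'\subseteq S_j'$, every column dropped in step 5 is a pivot column of $B_j$, and hence every retained (non-pivot) row already carries a $0$ in every dropped column. Therefore trimming the dropped columns from a retained row leaves all of its remaining entries unchanged, and the linear combination it represents over the post-drop queue is \emph{the same} original-packet combination it represented over the pre-drop queue --- an expression that by the preceding sub-steps is known to receiver $j$ and, crucially, does not involve any of the packets just dropped. Taking arbitrary $\mathbb{F}_q$-linear combinations of these retained rows closes the induction and establishes the theorem.
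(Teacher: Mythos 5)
Your proposal is correct and follows essentially the same route as the paper's proof: induction on the slot number, observing that new rows are added only upon successful reception, that Gaussian elimination preserves the row space, and that dropped pivot columns carry zeros in all retained rows so the surviving combinations are unaffected. The only difference is that you spell out the arrival step and the RREF pivot-column property a bit more explicitly, which the paper leaves implicit.
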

\IEEEproof
We will use induction on $t$. For $t=0$, the system is completely empty and the statement is vacuously true. Let us now assume that the statement is true at time $(t-1)$. Consider the operations in slot $t$. A new row is added to $B_j$ only if the corresponding linear combination has been successfully received by receiver $j$. Hence, the statement is still true. Row operations involved in Gaussian elimination do not alter the row space. Finally, when some of the pivot columns are dropped along with the corresponding pivot rows in step 5, this does not affect the linear combinations to which the remaining rows correspond because the pivot columns have a 0 in all rows except the pivot row. Hence, the three operations that are performed between slot $(t-1)$ and slot $t$ do not affect the property that the vectors in the row space of $B_j$ correspond to linear combinations that are known at receiver $j$. This proves the theorem.
\endproof

If a packet corresponds to a pivot column in $B_j$, the corresponding pivot row is a linear combination of the packet in question with packets that arrived after it. From the above theorem, receiver $j$ knows this linear combination which means it has seen the packet. This leads to the following corollary.
\begin{corollary}
\it If a packet corresponds to a pivot column in $B_j$, then it has been seen by receiver $j$. 
\end{corollary}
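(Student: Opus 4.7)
The plan is to combine the preceding theorem with the structural properties of RREF and the invariants of the queue. Suppose that at the end of slot $t$, column $k$ of $B_j$ is a pivot column, and let $\mathbf{p_k}$ be the packet in queue position $k$. By definition of RREF, the pivot row has a $1$ in column $k$, zeros in columns $1, 2, \ldots, k-1$, and zeros in all pivot columns to the right of $k$, so it has the form $(0, \ldots, 0, 1, a_{k+1}, \ldots, a_{Q(t)})$, where the coefficients $a_i$ are nonzero only in non-pivot columns $i > k$.

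Next, I would invoke the preceding theorem, which asserts that every row of $B_j$ corresponds to a linear combination actually known to receiver $j$. Applied to our pivot row, this says receiver $j$ knows the linear combination $\mathbf{p_k} + \sum_{i > k} a_i \mathbf{p_i}$, where the sum ranges over packets currently sitting in queue positions $i > k$.

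To finish, I need to argue that every such $\mathbf{p_i}$ with $i > k$ has a (stream) index strictly greater than that of $\mathbf{p_k}$. This follows from a simple queue invariant: new packets are always appended to the tail (step 2), and dropping packets in step 6 removes entries without reordering the remaining ones. Hence the left-to-right order of packets in the physical queue always agrees with the order of their stream indices, so queue position $> k$ implies stream index larger than that of $\mathbf{p_k}$. Setting $\mathbf{q} := \sum_{i > k} a_i \mathbf{p_i}$ and appealing to the definition of ``seen'' then gives the claim.

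The argument is essentially routine once the right invariants are in hand; the only subtle point is the last step, where one must notice that FIFO-style insertion together with in-place deletions preserves the arrival order of the surviving packets, so that ``to the right of column $k$ in $B_j$'' is the correct proxy for ``stream index greater than that of $\mathbf{p_k}$'' required by the definition of seeing.
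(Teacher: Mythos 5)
Your proof is correct and follows essentially the same route as the paper's: invoke the preceding theorem to conclude that the pivot row is a linear combination known to receiver $j$, observe from the RREF structure that this combination is $\mathbf{p_k}$ plus packets in later queue positions, and use the fact that appending at the tail and in-place deletion preserve arrival order so that later queue positions mean larger stream indices. The paper states this last invariant only implicitly ("packets that arrived after it"), so your explicit treatment of it is a welcome but not substantively different elaboration.
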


Thus, in step 5, $S_\Delta'(t)$ consists of those packets in the queue that all receivers have seen by the end of slot $t$. In other words, the algorithm retains only those packets that have not yet been seen by all receivers. Even though the algorithm works with an incremental version of the knowledge spaces, namely $U_j(t)$, it maintains the queue in the same way as if it was working with the cumulative version $V_j(t)$. Thus, the incremental approach is equivalent to the cumulative approach.

We will require the following lemma to prove the main theorem.
\begin{lemma}\label{modularitylemma}
\it Let $A_1, A_2, \ldots, A_k$ be subsets of a set $A$. Then, for $k\ge 1$,
	\begin{equation}\label{connectingineq}
		|A|-|\cap_{i=1}^{k} A_i|\le \sum_{i=1}^k(|A|-|A_i|)
	\end{equation}
\end{lemma}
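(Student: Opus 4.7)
The plan is to give a short set-theoretic proof that parallels the vector space argument already carried out in Lemma \ref{modularitylemma1}. The cleanest route is via De Morgan's law together with the union bound. Writing $A_i^c := A \setminus A_i$, one has
\[
A \setminus \bigcap_{i=1}^k A_i \;=\; \bigcup_{i=1}^k A_i^c,
\]
so that $|A| - |\cap_{i=1}^k A_i| = |\cup_{i=1}^k A_i^c|$. The standard union bound then gives
\[
\Big|\bigcup_{i=1}^k A_i^c\Big| \;\le\; \sum_{i=1}^k |A_i^c| \;=\; \sum_{i=1}^k (|A| - |A_i|),
\]
which is the desired inequality.

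As a backup, if one prefers not to cite the union bound, I would mimic the proof of Lemma \ref{modularitylemma1} verbatim, replacing dimensions by cardinalities and using the modular identity $|X \cap Y| + |X \cup Y| = |X| + |Y|$, which for subsets $X, Y$ of $A$ yields $|X \cap Y| \ge |X| + |Y| - |A|$. The base cases $k=1, 2$ are trivial/immediate, and the induction step
\[
\Big|\bigcap_{i=1}^k A_i\Big| \;=\; \Big|A_k \cap \bigcap_{i=1}^{k-1} A_i\Big| \;\ge\; |A_k| + \Big|\bigcap_{i=1}^{k-1} A_i\Big| - |A|
\]
combined with the induction hypothesis gives $|\cap_{i=1}^k A_i| \ge \sum_{i=1}^k |A_i| - (k-1)|A|$, which is equivalent to the claim after rearrangement.

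There is no real obstacle here: the lemma is the direct combinatorial analog of the dimension inequality already established, and either the one-line De Morgan argument or the two-line induction suffices. The only thing to be mildly careful about is ensuring the inequality is stated and used in the form that matches how it is subsequently invoked (with the complements of the $A_i$ inside $A$ summing on the right-hand side), which both approaches deliver directly.
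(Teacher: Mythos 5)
Your primary argument (De Morgan's law followed by the union bound, with complements taken relative to $A$) is exactly the proof given in the paper, and it is correct. The backup induction mirroring Lemma \ref{modularitylemma1} is also valid but unnecessary.
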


\IEEEproof
\begin{eqnarray*}
&&|A|-|\cap_{i=1}^{k} A_i|\\
&=&|A\cap(\cap_{i=1}^k A_i)^c| \mbox{\ \ (since the $A_i$'s are subsets of $A$)}\\
&=&|A\cap(\cup_{i=1}^k A_i^c)|\mbox{\ \ (by De Morgan's law)}\\
&=&|\cup_{i=1}^k (A\cap A_i^c)|\mbox{\ \ \ (distributivity)}\\
&\le &\sum_{i=1}^k |A\cap A_i^c|\mbox{\ \ (union bound)}\\
&=&\sum_{i=1}^k(|A|-|A_i|)
\end{eqnarray*}
\endproof

Now, we are ready to prove Theorem \ref{mainthm}.

{\it Proof of Theorem \ref{mainthm}:}
Since the only packets in the queue at any point are those that not all receivers have seen, we obtain the following expression for the physical queue size at the sender at the end of slot $t$:
\[Q(t)=|S(t)|-|\cap_{j=1}^n S_j(t)|\]

If we apply Lemma~\ref{modularitylemma} to the sets $S(t)$ and $S_j(t), j=1, 2, \ldots ,n$ then the left hand side of inequality (\ref{connectingineq}) becomes the sender queue size $Q(t)$ given above. Now, $|S_j(t)|=dim[V_j(t)]$, using Corollary \ref{numberseen}. Hence the right hand side of inequality (\ref{connectingineq}) can be rewritten as $\sum_{j=1}^n \big[ dim[V(t)] - dim[V_j(t)] \big]$, which is the sum of the virtual queue sizes. 

Finally, we can find the asymptotic behavior of the physical queue size in steady state under Algorithm 2 (b). Since the expected virtual queue sizes themselves are all $O\left(\frac1{1-\rho}\right)$ from Equation (\ref{vqsize}), we obtain the stated result.
\endproof

\subsubsection{The coding module}\label{codingmodule} 
We now present a coding module that is compatible with the drop-when-seen queuing algorithm in the sense that it always forms a linear combination using packets that are currently in the queue maintained by the queuing module. In addition, we show that the coding module satisfies the innovation guarantee property.

Let $\{u_1, u_2, \ldots , u_m\}$ be the set of indices of the next unseen packets of the receivers, sorted in ascending order (In general, $m\le n$, since the next unseen packet may be the same for some receivers). Exclude receivers whose next unseen packets have not yet arrived at the sender. Let $R(u_i)$ be the set of receivers whose next unseen packet is $\mathbf{p_{u_i}}$. We now present the coding module to select the linear combination for transmission.

\begin{enumerate}
	\item {\it Loop over next unseen packets}
		
		For $j=1$ to $m$, do:

			All receivers in $R(u_j)$ have seen packets $\mathbf{p_{u_i}}$ for $i<j$. Now, $\forall r \in R(u_j)$, find $\mathbf{y_r}:=\sum_{i=1}^{j-1} \alpha_i \mathbf{W_r}(\mathbf{p_{u_i}})$, where $\mathbf{W_r}(\mathbf{p_{u_i}})$ is the witness for receiver $r$ seeing $\mathbf{p_{u_i}}$. Pick $\alpha_j \in \mathbb{F}_q$ such that $\alpha_j$ is different from the coefficient of $\mathbf{p_{u_j}}$ in $\mathbf{y_r}$ for each $r\in R(u_j)$. 
	\item {\it Compute the transmit packet: } \ \ 
		$\mathbf{g}:= \sum_{i=1}^m \alpha_i \mathbf{p_{u_i}}$ 
\end{enumerate}

It is easily seen that this coding module is compatible with the drop-when-seen algorithm. Indeed, it does not use any packet that has been seen by all receivers in the linear combination. It only uses packets that at least one receiver has not yet seen. The queue update module retains precisely such packets in the queue. The next theorem presents a useful property of the coding module. 
\begin{theorem}\label{innovation}
	\it If the field size is at least $n$, then the coding module picks a linear combination that will cause any receiver to see its next unseen packet upon successful reception.
\end{theorem}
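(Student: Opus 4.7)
The plan is to verify two things in turn: (a) when $q\ge n$, the greedy coefficient-selection loop in the coding module never gets stuck, and (b) the resulting packet $\mathbf{g}$ causes every receiver that receives it to see its next unseen packet.

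For (a): at step $j$ the algorithm must avoid at most $|R(u_j)|$ forbidden field elements. Since the sets $R(u_1),\ldots,R(u_m)$ partition the (nonempty) set of tracked receivers, $\sum_{j=1}^m |R(u_j)| \le n$, and each $|R(u_j)|\ge 1$. For $j=1$ the defining sum for $\mathbf{y_r}$ is empty, so the only forbidden value is $0$, and any nonzero element of $\mathbb{F}_q$ works. For $j\ge 2$ there are at most $n - |R(u_1)| \le n-1$ forbidden values, so a field of size at least $n$ leaves at least one admissible choice.

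For (b): fix $r\in R(u_j)$ and interpret ``next unseen packet'' in the natural way as the \emph{oldest} unseen packet, so that $r$ has seen every packet of index strictly less than $u_j$; in particular it has seen $\mathbf{p_{u_1}},\ldots,\mathbf{p_{u_{j-1}}}$. By Corollary~\ref{witness}, each witness expands as $\mathbf{W_r}(\mathbf{p_{u_i}})=\mathbf{p_{u_i}}+\mathbf{q_i^{(r)}}$, where $\mathbf{q_i^{(r)}}$ involves only \emph{unseen} packets of $r$ of index greater than $u_i$. Because the smallest unseen index for $r$ is $u_j$, every packet appearing in $\mathbf{q_i^{(r)}}$ in fact has index at least $u_j$. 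Upon successful reception $r$ can form $\mathbf{g}-\mathbf{y_r}$ (it knows $\mathbf{g}$ and every witness used to build $\mathbf{y_r}$); the terms $\alpha_i\mathbf{p_{u_i}}$ for $i<j$ cancel exactly, leaving
\[
\mathbf{g}-\mathbf{y_r}=(\alpha_j-\beta_r)\,\mathbf{p_{u_j}}+\mathbf{w},
\]
where $\beta_r$ denotes the coefficient of $\mathbf{p_{u_j}}$ in $\mathbf{y_r}$ and $\mathbf{w}$ is a linear combination of packets with index strictly greater than $u_j$.

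By construction, $\alpha_j\neq\beta_r$ for every $r\in R(u_j)$, so $\alpha_j-\beta_r\neq 0$; after rescaling, $r$ computes a linear combination of the form $\mathbf{p_{u_j}}+\mathbf{q}$ with $\mathbf{q}$ supported on packets of index greater than $u_j$, which by definition means $r$ has seen $\mathbf{p_{u_j}}$. The delicate step---and the one I expect to require the most care---is the bookkeeping that no packet of index less than $u_j$ survives in $\mathbf{g}-\mathbf{y_r}$: this is exactly where the ``oldest unseen'' interpretation is essential, since otherwise the terms $\mathbf{q_i^{(r)}}$ could reintroduce packets of index below $u_j$ and spoil the desired form, and where the algorithm's seemingly ad hoc choice of $\alpha_j$ is revealed as precisely what is needed to keep the coefficient of $\mathbf{p_{u_j}}$ nonzero after cancellation.
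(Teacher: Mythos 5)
Your proposal is correct and follows essentially the same two-part structure as the paper's proof: counting the at most $n-1$ forbidden values for $\alpha_j$ (since $R(u_1)$ is nonempty), and then the cancellation argument showing $\mathbf{g}-\mathbf{y_r}$ has matching coefficients below index $u_j$ and a nonzero coefficient on $\mathbf{p_{u_j}}$. Your explicit bookkeeping of why the witness remainders $\mathbf{q_i^{(r)}}$ cannot reintroduce indices below $u_j$ is a slightly more careful rendering of the step the paper states tersely, but it is the same argument.
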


\IEEEproof
First we show that a suitable choice always exists for $\alpha_j$ that satisfies the requirement in step 1. For $r\in R(u_1)$, $\mathbf{y_r}=\mathbf{0}$. Hence, as long as $\alpha_1\neq 0$, the condition is satisfied. So, pick $\alpha_1=1$. Since at least one receiver is in $R(u_1)$, we have that for $j>1$, $|R(u_j)|\le (n-1)$. Even if each $\mathbf{y_r}$ for $r\in R(u_j)$ has a different coefficient for $\mathbf{p_{u_j}}$, that covers only $(n-1)$ different field elements. If $q\ge n$, then there is a choice left in $\mathbb{F}_q$ for $\alpha_j$. 

Now, we have to show that the condition given in step 1  implies that the receivers will be able to see their next unseen packet. Indeed, for all $j$ from 1 to $m$, and for all $r \in R(u_j)$, receiver $r$ knows $\mathbf{y_r}$, since it is a linear combination of witnesses of $r$. Hence, if $r$ successfully receives $\mathbf{g}$, it can compute $(\mathbf{g}-\mathbf{y_r})$. Now, $\mathbf{g}$ and $\mathbf{y_r}$ have the same coefficient for all packets with index less than $u_j$, and a different coefficient for $\mathbf{p_{u_j}}$. Hence, $(\mathbf{g}-\mathbf{y_r})$ will involve $\mathbf{p_{u_j}}$ and only packets with index beyond $u_j$. This means $r$ can see $\mathbf{p_{u_j}}$ and this completes the proof. 
\endproof

Theorem \ref{rreftheorem} implies that seeing an unseen packet corresponds to receiving an unknown degree of freedom. Thus, Theorem \ref{innovation} essentially says that the innovation guarantee property is satisfied and hence the scheme is throughput optimal.

This theorem is closely related to the result derived in \cite{larsson3} that computes the minimum field size needed to guarantee innovation. The difference is that our result uses the framework of seen packets to make a more general statement by specifying not only that innovation is guaranteed, but also that packets will be seen in order with this deterministic coding scheme. This means packets will be dropped in order at the sender.

\section{Overhead}
In this section, we comment on the overhead required for Algorithms 1 and 2 (b). There are several types of overhead.

\subsection{Amount of feedback}
Our scheme assumes that every receiver feeds back one bit after every slot, indicating whether an erasure occurred or not. In comparison, the drop-when-decoded scheme requires feedback only when packets get decoded. However, in that case, the feedback may be more than one bit -- the receiver will have to specify the list of all packets that were decoded, since packets may get decoded in groups. In a practical implementation of the drop-when-seen algorithm, TCP-like cumulative acknowledgments can be used to inform the sender which packets have been seen.

\subsection{Identifying the linear combination}
Besides transmitting a linear combination of packets, the sender must also embed information that allows the receiver to identify what linear combination has been sent. This involves specifying which packets have been involved in the combination, and what coefficients were used for these packets.

\subsubsection{Set of packets involved}
The baseline algorithm uses all packets in the queue for the linear combination. The queue is updated in a first-in-first-out (FIFO) manner, \ie, no packet departs before all earlier packets have departed. This is a consequence of the fact that the receiver signals successful decoding only when the virtual queue becomes empty\footnote{As mentioned earlier in Remark \ref{caveat}, we assume that the sender checks whether any packets have been newly decoded, only when the virtual queue becomes empty.}. The FIFO rule implies that specifying the current contents of the queue in terms of the original stream boils down to specifying the sequence number of the head-of-line packet and the last packet in the queue in every transmission. 

The drop-when-seen algorithm does not use all packets from the queue, but only at most $n$ packets from the queue (the next unseen packet of each receiver). This set can be specified by listing the sequence number of these $n$ packets. 

Now, in both cases, the sequence number of the original stream cannot be used as it is, since it grows unboundedly with time. However, we can avoid this problem using the fact that the queue contents are updated in a FIFO manner (This is also true of our drop-when-seen scheme -- the coding module guarantees that packets will be seen in order, thereby implying a FIFO rule for the sender's queue.). The solution is to express the sequence number relative to an origin that also advances with time, as follows. If the sender is certain that the receiver's estimate of the sender's queue starts at a particular point, then both the sender and receiver can reset their origin to that point, and then count from there. 

For the baseline case, the origin can be reset to the current HOL packet, whenever the receiver sends feedback indicating successful decoding. The idea is that if the receiver decoded in a particular slot, that means it had a successful reception in that slot. Therefore, the sender can be certain that the receiver must have received the latest update about the queue contents and is therefore in sync with the sender. Thus, the sender and receiver can reset their origin. Note that since the decoding epochs of different receivers may not be synchronized, the sender will have to maintain a different origin for each receiver and send a different sequence number to each receiver, relative to that receiver's origin. This can be done simply by concatenating the sequence number for each receiver in the header. 

To determine how many bits are needed to represent the sequence number, we need to find out what range of values it can take. In the baseline scheme, the sequence number range will be proportional to the busy period of the virtual queue, since this determines how often the origin is reset. Thus, the overhead in bits for each receiver will be proportional to the logarithm of the expected busy period, \ie, $O\left(\log_2\frac{1}{1-\rho}\right)$.

For the drop-when-seen scheme, the origin can be reset whenever the receiver sends feedback indicating successful reception. Thus, the origin advances a lot more frequently than in the baseline scheme.

\subsubsection{Coefficients used} 
The baseline algorithm uses a random linear coding scheme. Here, potentially all packets in the queue get combined in a linear combination. So, in the worst case, the sender would have to send one coefficient for every packet in the queue. If the queue has $m$ packets, this would require $m\log_2 q$ bits, where $q$ is the field size. In expectation, this would be $O\left(\frac{\log_2 q}{(1-\rho)^2}\right)$ bits. If the receiver knows the pseudorandom number generator used by the sender, then it would be sufficient for the sender to send the current state of the generator and the size of the queue. Using this, the receiver can generate the coefficients used by the sender in the coding process. The new drop-when-seen algorithm uses a coding module which combines the next unseen packet of each receiver. Thus, the overhead for the coefficients is at most $n\log_2 q$ bits, where $n$ is the number of receivers. It does not depend on the load factor $\rho$ at all. 
\subsection{Overhead at sender}
While Algorithm 2 (b) saves in buffer space, it requires the sender to store the basis matrix of each receiver, and update them in every slot based on feedback. However, storing a row of the basis matrix requires much less memory than storing a packet, especially for long packets. Thus, there is an overall saving in memory. The update of the basis matrix simply involves one step of the Gaussian elimination algorithm.

\subsection{Overhead at receiver}
The receiver will have to store the coded packets till they are decoded. It will also have to decode the packets. For this, the receiver can perform a Gaussian elimination after every successful reception. Thus, the computation for the matrix inversion associated with decoding can be spread over time. 

\section{Decoding delay}\label{delaysection}
With the coding module of Section \ref{codingmodule}, although a receiver can see the next unseen packet in every successful reception, this does not mean the packet will be decoded immediately. In general, the receiver will have to collect enough equations in the unknown packets before being able to decode them, resulting in a delay. We consider two notions of delay in this paper:

\begin{definition}[Decoding Delay]
  The \emph{decoding delay} of a packet with respect to a receiver is the time that elapses between the arrival of the packet at the sender and the decoding of the packet by the receiver under consideration.
\end{definition}

As discussed in Section \ref{intro}, some applications can make use of a packet only if all prior packets have been decoded. In other words, the application will accept packets only up to the front of contiguous knowledge. This motivates the following stronger notion of delay.

\begin{definition}[Delivery Delay]
  The \emph{delivery delay} of a packet with respect to a receiver is the time that elapses between the arrival of the packet at the sender and the delivery of the packet by the receiver to the application, with the constraint that packets may be delivered only in order.
\end{definition}

It follows from these definitions that \emph{the decoding delay is always less than or equal to the delivery delay}. Upon decoding the packets, the receiver will place them in a reordering buffer until they are delivered to the application.

In this section, we study the expectation of these delays for an arbitrary packet. It can be shown using ergodic theory that the long term average of the delay experienced by the packets in steady state converges to this expectation with high probability. We focus on the asymptotic growth of the expected delay as $\rho \rightarrow 1$.

The section is organized as follows. We first study the delivery delay behavior of Algorithms 1 and 2(b), and provide an upper bound on the asymptotic expected delivery delay for any policy that satisfies the innovation guarantee property. We then present a generic lower bound on the expected decoding delay. Finally, we present a new coding module for the case of three receivers which not only guarantees innovation, but also aims to minimize the delivery delay. We conjecture that this algorithm achieves a delivery delay whose asymptotic growth matches that of the lower bound. This behavior is verified through simulations.

\subsection{An upper bound on delivery delay}
We now present the upper bound on delay for policies that satisfy the innovation guarantee property. The arguments leading to this bound are presented below. 

\ 

\begin{theorem}\label{upperbound}
\it The expected delivery delay of a packet for any coding module that satisfies the innovation guarantee property is $O\left(\frac1{(1-\rho)^2}\right)$.
\end{theorem}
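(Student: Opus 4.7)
The plan is to bound the delivery delay of an arbitrary packet by the time until all virtual queues empty after its arrival, and then to reuse the first-passage-time calculation already carried out for Algorithm 1.

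First I would observe that the innovation guarantee property forces the receivers' knowledge spaces to behave as service processes of their respective virtual queues: in every slot where receiver $j$ successfully receives a transmission while $V_j \ne V$, the dimension of $V_j$ grows by one. Hence the virtual queue size $Q_j(t)$ is governed by the same Markov chain depicted in Figure \ref{markovchain}, regardless of which innovation-guaranteeing coding module is used, and the steady-state distribution (\ref{steadystatedist}) applies. Moreover, whenever $Q_j(t)=0$, receiver $j$'s knowledge space equals the sender's, so every packet that has arrived by time $t$ has been decoded by receiver $j$.

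Next I would relate delivery delay to first passage times. Fix a tagged packet that arrives in some slot; let $D_j$ be the number of slots from that arrival until the next time $Q_j$ returns to zero. After this epoch, receiver $j$ has decoded the tagged packet along with every earlier packet. Therefore, once $\max_j D_j$ slots have elapsed, all $n$ receivers have decoded the tagged packet and all prior packets, so the tagged packet is deliverable at every receiver. In particular, its delivery delay is at most $\max_j D_j$. Because there are only $n$ receivers and $n$ is a constant, $\mathbb{E}[\max_j D_j] \le \sum_{j=1}^n \mathbb{E}[D_j] = n \cdot \mathbb{E}[D_1]$.

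The quantity $\mathbb{E}[D_1]$ has already been computed in the proof of Theorem \ref{algm1qsize}: by BASTA the tagged arrival sees the stationary distribution $\{\pi_k\}$, the expected first-passage time from state $u$ to state $0$ is $\Gamma_{u,0}=u/(\mu-\lambda)$, and equation (\ref{deejay}) gives
\begin{equation*}
\mathbb{E}[D_j] \;=\; \frac{1-\mu}{\mu}\cdot\frac{\rho}{(1-\rho)^2} \;=\; O\!\left(\frac{1}{(1-\rho)^2}\right).
\end{equation*}
Combining these bounds yields an expected delivery delay of $O(1/(1-\rho)^2)$, completing the proof.

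The only subtlety, which I would address explicitly, is that the first-passage calculation in Section \ref{algm1section} was phrased for the baseline random coding scheme; I need to note that it depends on the coding only through the innovation guarantee property and the identical service dynamics of the virtual queues, so it transfers verbatim to any policy in the theorem's class. The rest is a clean union bound over the finite number of receivers.
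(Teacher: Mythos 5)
Your proposal is correct and follows essentially the same route as the paper: both arguments observe that the innovation guarantee property makes every virtual queue evolve according to the Markov chain of Figure \ref{markovchain}, that the emptying of virtual queue $j$ implies receiver $j$ has decoded (and can deliver in order) every packet that has arrived so far, and both then invoke the first-passage-time computation of Equation (\ref{deejay}) via BASTA. The only cosmetic difference is your union bound over the $n$ receivers; since the delivery delay in the theorem is defined per receiver, the paper bounds it directly by $\mathbb{E}[D_j]$ without the extra factor of $n$, but this does not affect the $O\left(\frac{1}{(1-\rho)^2}\right)$ conclusion.
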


\ 

For any policy that satisfies the innovation guarantee property, the virtual queue size evolves according to the Markov chain in Figure \ref{markovchain}. The analysis of Algorithm 1 in Section \ref{algm1section} therefore applies to any coding algorithm that guarantees innovation. 

As explained in that section, the event of a virtual queue becoming empty translates to successful decoding at the corresponding receiver, since the number of equations now matches the number of unknowns involved. Thus, an arbitrary packet that arrives at the sender will get decoded by receiver $j$ at or before the next emptying of the $j^{th}$ virtual queue. In fact, it will get delivered to the application at or before the next emptying of the virtual queue. This is because, when the virtual queue is empty, every packet that arrived at the sender gets decoded. Thus, the front of contiguous knowledge advances to the last packet that the sender knows. 

The above discussion implies that Equation (\ref{deejay}) gives an upper bound on the expected delivery delay of an arbitrary packet. We thus obtain the result stated above.

We next study the decoding delay of Algorithm 2 (b). We define \emph{the decoding event} to be the event that all seen packets get decoded. Since packets are always seen in order, the decoding event guarantees that the front of contiguous knowledge will advance to the front of seen packets.

We use the term \emph{leader} to refer to the receiver which has seen the maximum number of packets at the given point in time. Note that there can be more than one leader at the same time. The following theorem characterizes sufficient conditions for the decoding event to occur.
\begin{theorem}\label{decodingevent}
\it The decoding event occurs in a slot at a particular receiver if in that slot:
\begin{enumerate}
\item [(a)] The receiver has a successful reception which results in an empty virtual queue at the sender; \ \ OR
\item [(b)] The receiver has a successful reception and the receiver was a leader at the beginning of the slot.
\end{enumerate}
\end{theorem}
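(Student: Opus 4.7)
The plan is to handle the two sufficient conditions separately. For case (a), the argument is essentially a dimension-counting step: if the virtual queue for receiver $j$ empties after a successful reception, then $\dim V_j(t) = \dim V(t)$, and since $V_j(t) \subseteq V(t) = \mathbb{F}_q^{|S(t)|}$, we get $V_j(t) = V(t)$. This means every arrived packet (not just every seen one) can be solved for, so in particular every seen packet is decoded and the decoding event occurs.

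Case (b) requires more care. The central structural invariant I would exploit is that the coding module of Section \ref{codingmodule} only ever forms linear combinations of packets that are the next-unseen packet of some receiver. Letting $K(\tau) := \max_j k_j(\tau)$ denote the leader's seen-count at the end of slot $\tau$, this means the transmission in slot $\tau$ involves only packets with indices at most $K(\tau-1)+1$. Because $K$ is monotone non-decreasing in $\tau$, every transmission up through slot $t$ involves only packets $p_1, \ldots, p_{K(t-1)+1}$, and hence $V_j(t) \subseteq \mathrm{span}(p_1, \ldots, p_{K(t-1)+1})$ for every receiver $j$.

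Now suppose receiver $j$ is a leader at the start of slot $t$, so $k_j(t-1) = K(t-1)$, and suppose the slot-$t$ transmission is received successfully. By Theorem \ref{innovation}, receiver $j$ sees its next-unseen packet $p_{K(t-1)+1}$, so at the end of slot $t$ it has seen exactly $p_1, \ldots, p_{K(t-1)+1}$ (using that packets are seen in order under this coding module). By Corollary \ref{numberseen}, $\dim V_j(t) = K(t-1)+1$, which matches the dimension of $\mathrm{span}(p_1, \ldots, p_{K(t-1)+1})$; combined with the containment from the previous paragraph, this forces equality, so each individual $p_i$ with $i \le K(t-1)+1$ lies in $V_j(t)$. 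Every seen packet is therefore decoded and the decoding event occurs.

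The main obstacle is in case (b): formalizing the structural invariant that all past transmissions involve only packets of index at most $K(t-1)+1$. This requires a short induction that combines the definition of the coding module with the monotonicity of $K$, and a careful handling of the edge case in which the leader has already seen every packet that has arrived at the sender (so its next-unseen packet is undefined), since in that situation the leader's virtual queue is already empty and the receiver is effectively covered by case (a) instead.
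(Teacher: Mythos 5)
Your proposal is correct, and for condition (b) it finishes the argument by a genuinely different route than the paper, although both rest on the same structural invariant. Condition (a) is handled identically (seen packets = dimension of the knowledge space = number of arrived unknowns, so everything is decodable). For condition (b), the paper and your proof both hinge on the observation that, because the receiver is a leader, no transmission up to and including the current slot has involved any packet with index beyond the leader's next unseen packet $\mathbf{p_k}$. The paper then proceeds constructively: the leader subtracts the witnesses of its seen packets from the received combination, is left with a multiple of $\mathbf{p_k}$ alone, decodes $\mathbf{p_k}$, and then back-substitutes through the witnesses of $\mathbf{p_{k-1}}, \mathbf{p_{k-2}}, \ldots$ to decode every seen packet by induction. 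You instead count dimensions: after the successful reception the leader has seen $k$ packets, so $\dim V_j(t)=k$ by Corollary \ref{numberseen}, while the invariant forces $V_j(t)\subseteq \mathrm{span}(\mathbf{p_1},\ldots,\mathbf{p_k})$, and equality of dimensions gives equality of spaces, hence decodability of each $\mathbf{p_i}$. The dimension count is shorter, avoids any manipulation of witnesses, and sidesteps a small informality in the paper's wording (its closing sentence says ``all unseen packets will be decoded'' where ``seen'' is meant); the paper's version is more constructive and exhibits the actual decoding procedure, which mirrors how a receiver would implement it. Your flagged obstacle --- formalizing the invariant by induction using monotonicity of the leader's seen-count, plus the edge case where the leader has no unseen packet --- is exactly the right thing to nail down, and the edge case is correctly absorbed into condition (a).
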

\IEEEproof
Condition (a) implies that the receiver has seen all packets that have arrived at the sender up to that slot. Each packet at the sender is an unknown and each seen packet corresponds to a linearly independent equation. Thus, the receiver has received as many equations as the number of unknowns, and can decode all packets it has seen.

Suppose condition (b) holds. Let $\mathbf{p_k}$ be the next unseen packet of the receiver in question. The sender's transmitted linear combination will involve only the next unseen packets of all the receivers. Since the receiver was a leader at the beginning of the slot, the sender's transmission will not involve any packet beyond $\mathbf{p_k}$, since the next unseen packet of all other receivers is either $\mathbf{p_k}$ or some earlier packet. After subtracting the suitably scaled witnesses of already seen packets from such a linear combination, the leading receiver will end up with a linear combination that involves only $\mathbf{p_k}$. Thus the leader not only sees $\mathbf{p_k}$, but also decodes it. In fact, none of the sender's transmissions so far would have involved any packet beyond $\mathbf{p_k}$. Hence, once $\mathbf{p_k}$ has been decoded, $\mathbf{p_{k-1}}$ can also be decoded. This procedure can be extended to all unseen packets, and by induction, we can show that all unseen packets will be decoded.
\endproof

The upper bound proved in Theorem \ref{upperbound} is based on the emptying of the virtual queues. This corresponds only to case (a) in Theorem \ref{decodingevent}. The existence of case (b) shows that in general, the decoding delay will be strictly smaller than the upper bound. A natural question is whether this difference is large enough to cause a different asymptotic behavior, \ie, does Algorithm 2 (b) achieve a delay that asymptotically has a smaller exponent of growth than the upper bound as $\rho \rightarrow 1$? We conjecture that this is not the case, \ie, that the decoding delay for Algorithm 2 (b) is also $\Omega\left(\frac1{(1-\rho)^2}\right)$, although the constant of proportionality will be smaller. For the two receiver case, based on our simulations, this fact seems to be true. Figure \ref{delaysim} shows the growth of the decoding delay averaged over a large number of packets, as a function of $\frac{1}{(1-\rho)}$. The resulting curve seems to be close to the curve $\frac{0.37}{(1-\rho)^2}$, implying a quadratic growth. The value of $\rho$ ranges from 0.95 to 0.98, while $\mu$ is fixed to be 0.5. The figure also shows the upper bound based on busy period measurements. This curve agrees with the formula in Equation (\ref{deejay}) as expected.
\begin{figure}[t]
\centering
\includegraphics[width=0.45\textwidth]{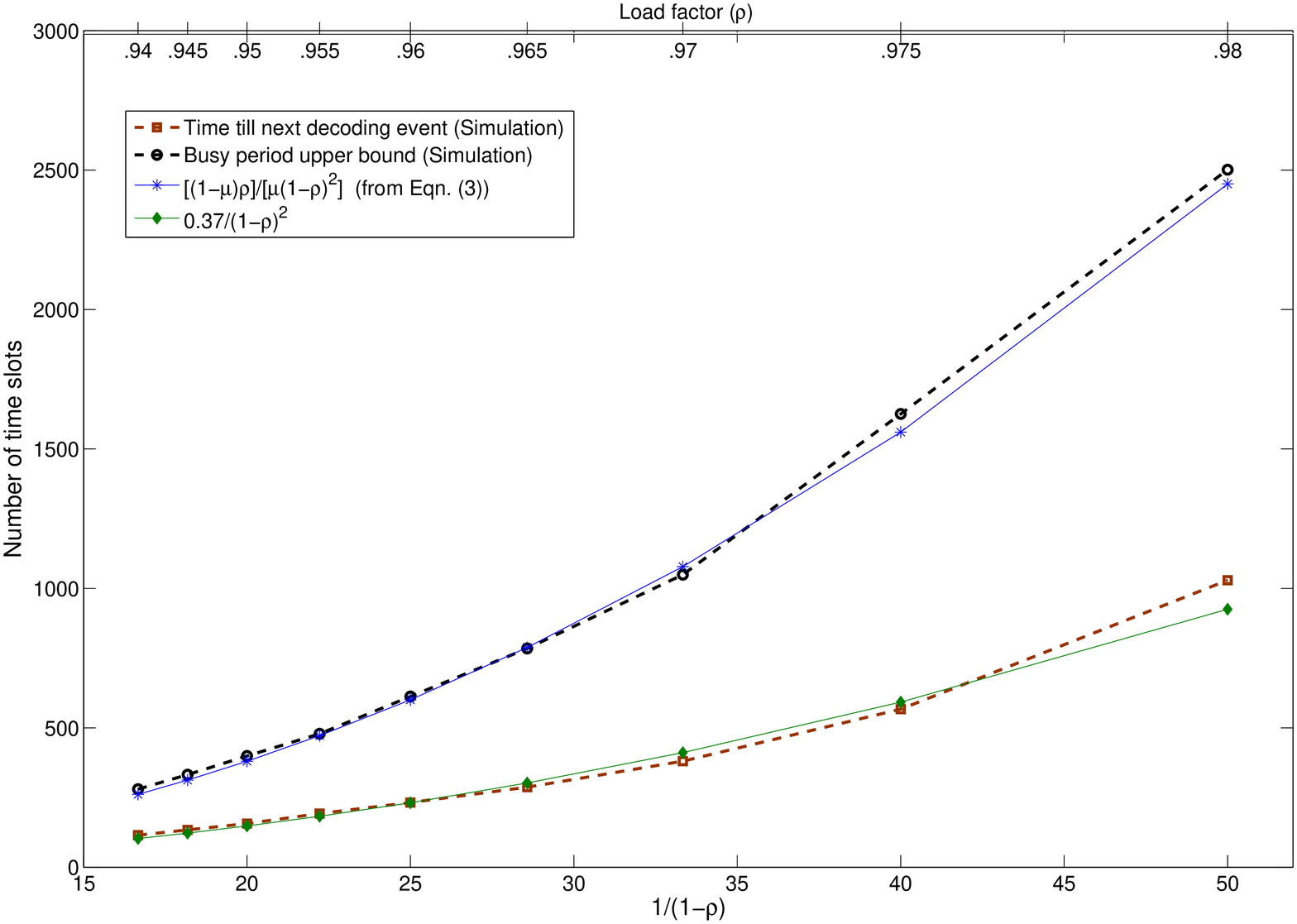}
\caption{Delay to decoding event and upper bound for 2 receiver case, as a function of $\frac1{(1-\rho)}$. The corresponding values of $\rho$ are shown on the top of the figure.}
\label{delaysim}
\end{figure}

\subsection{The lower bound}\label{lowerbound}
\begin{lemma}
  \it The expected per-packet delay is lower bounded by $\Omega\left(\frac1{1-\rho}\right)$
\end{lemma}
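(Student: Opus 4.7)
The plan is to reduce the lower bound to a standard Geo/Geo/1 computation and then invoke Little's law. The argument works for \emph{any} coding scheme (even ones that fail the innovation guarantee property), by tracking the number of packets that a single fixed receiver has not yet decoded.

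First, I would fix an arbitrary receiver $j$ and let $U_j(t)$ denote the number of packets that have arrived at the sender by slot $t$ but have not yet been decoded by $j$. Since exactly one coded packet is transmitted per slot and receiver $j$ sees each transmission independently with probability $\mu$, the number $S_j(t)$ of linearly independent coded packets that $j$ has accumulated by time $t$ is at most $\mathrm{Binomial}(t,\mu)$, and the number of decoded packets at $j$ cannot exceed $\dim V_j(t)\le S_j(t)$. Writing $K(t)$ for the number of sender arrivals by time $t$, this gives the pointwise bound $U_j(t)\ge (K(t)-S_j(t))^+$. The right-hand side is exactly the virtual-queue Markov chain of Figure~\ref{markovchain}, driven by Bernoulli$(\lambda)$ increments and Bernoulli$(\mu)$ decrements; hence in steady state $E[U_j]$ is at least the stationary mean given in Equation~(\ref{vqsize}), namely $(1-\mu)\rho/(1-\rho)=\Theta\!\left(\frac{1}{1-\rho}\right)$ as $\rho\to 1$ with either $\lambda$ or $\mu$ held away from $1$.

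Finally, I would close by applying Little's law to the ``undecoded-at-$j$'' population, which has arrival rate $\lambda$ and sojourn time equal to the per-packet decoding delay at receiver $j$. This yields an expected per-packet decoding delay of at least $E[U_j]/\lambda=\Omega\!\left(\frac{1}{1-\rho}\right)$, and the same bound carries over to the delivery delay since delivery delay always dominates decoding delay. The main subtlety I anticipate is justifying Little's law, specifically stability of the undecoded-packet process for $\rho<1$; this reduces to stability of the dominating Geo/Geo/1 chain, combined with the standing implicit assumption that every packet is eventually decoded whenever $\rho<1$.
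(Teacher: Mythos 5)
Your overall plan---lower-bound the undecoded backlog at a fixed receiver by a Geo/Geo/1-type chain and then invoke Little's law---is the same as the paper's, which phrases the first step as ``the single-receiver delay is clearly a lower bound'' and then does the same stationary-mean computation. Working directly with a fixed receiver $j$ inside the multi-receiver system is arguably the cleaner way to make that comparison rigorous, so the strategy is sound.

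The execution, however, has one genuine flaw: the pointwise bound you write down does not deliver the stationary estimate. If $S_j(t)$ is (dominated by) the cumulative $\mathrm{Binomial}(t,\mu)$ count of ON slots, then $K(t)-S_j(t)=\sum_{i\le t}\left(a(i)-d(i)\right)$ is an \emph{unreflected} random walk with negative drift $\lambda-\mu<0$; its positive part $(K(t)-S_j(t))^{+}$ converges to $0$ almost surely and in expectation, so the right-hand side of your inequality is \emph{not} the chain of Figure \ref{markovchain} and yields a vacuous steady-state bound. The chain in Figure \ref{markovchain} is the walk reflected at $0$, i.e.\ the Lindley recursion, which equals $\max_{0\le s\le t}\sum_{i=s+1}^{t}\left(a(i)-d(i)\right)$ rather than $\bigl(\sum_{i=1}^{t}\left(a(i)-d(i)\right)\bigr)^{+}$; conflating the two is exactly where the argument breaks. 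The repair is to bound $U_j(t)\ge K(t)-\dim V_j(t)=Q_j(t)$, the virtual queue itself (valid because the number of decoded packets never exceeds $\dim V_j(t)$), and then note that $Q_j(t)$ can decrease only in a slot where receiver $j$ has a successful \emph{and innovative} reception while $Q_j>0$. Under the innovation guarantee property this is exactly the chain of Figure \ref{markovchain}; for an arbitrary coding scheme (as your generality claim requires) a coupling of the arrival and erasure processes shows $Q_j(t)$ stochastically dominates that chain. Either way $E[U_j]\ge(1-\mu)\rho/(1-\rho)$, and your Little's law step then finishes the proof as you describe.
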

\IEEEproof
The expected per-packet delay for the single receiver case is clearly a lower bound for the corresponding quantity at one of the receivers in a multiple-receiver system. We will compute this lower bound in this section. Figure \ref{markovchain} shows the Markov chain for the queue size in the single receiver case. If $\rho=\frac\lambda\mu<1$, then the chain is positive recurrent and the steady state expected queue size can be computed to be $\frac{\rho(1-\mu)}{(1-\rho)}=\Theta\left(\frac1{1-\rho}\right)$ (see Equation (\ref{steadystatedist})). Now, if $\rho<1$, then the system is stable and Little's law can be applied to show that the expected per-packet delay in the single receiver system is also $\Theta\left(\frac1{1-\rho}\right)$. 
\endproof

\subsection{An alternate coding module for better delay}
In this section, we present a new coding module for the case of three receivers that significantly improves the delay performance compared to Algorithm 2 (b). In particular, we obtain 100\% throughput and conjecture that the algorithm simultaneously achieves asymptotically optimal decoding delay by meeting the lower bound of Lemma \ref{lowerbound}. The asymptotics here are in the realm of the load factor $\rho$ tending to 1 from below, while keeping either the arrival rate $\lambda$ or the channel quality parameter $\mu$ fixed at a number less than 1. 

We introduce a new notion of packets that a node has ``heard of''. 

\begin{definition}[Heard of a packet]
A node is said to have \emph{heard of} a packet if it knows some linear combination involving that packet.
\end{definition}

\subsection*{The new coding module}\label{newcodingmodule}
Our coding module works in the Galois field of size 3. At the beginning of every slot, the module has to decide what linear combination to transmit. Since there is full feedback, the module is fully aware of the current knowledge space of each of the three receivers. The coding algorithm is as follows: 

\begin{figure}
	\centering
		\includegraphics[width=0.45\textwidth]{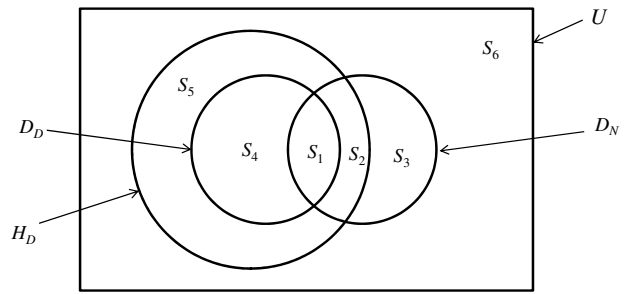}
		\caption{Sets used by the coding module}\label{venn}
\end{figure}
\begin{enumerate}
  \item Initialize $L=1, N=2, D=3, m=0$.
  \item Compute the following sets for all receivers $i=1, 2, 3$.

$H_i$:= Set of packets heard of by receiver $i$

$D_i$:= Set of packets decoded by receiver $i$

\item  Define a universe set $U$ consisting of packets $\mathbf{p}_1$ to $\mathbf{p}_m$, and also $\mathbf{p}_{m+1}$ if it has arrived. Compute the following sets\footnote{Notation: The subscripts $N$ and $D$ are simply indices. For example, $D_N$ is simply that $D_i$ for which $i=N$.} (See Figure \ref{venn}):
  \begin{itemize}
\item $S_1=D_N\cap D_D$
\item $S_2=D_N\cap(H_D\backslash D_D)$
\item $S_3=D_N\backslash H_D$
\item $S_4=D_D\backslash D_N$
\item $S_5=(H_D\backslash D_D)\backslash D_N$
\item $S_6=U\backslash(H_D\cup D_N)$
\end{itemize}

\item The coding module picks a linear combination depending on which of these sets $\mathbf{p}_{m+1}$ falls in, as follows:

  {\it Case 1 -- $\mathbf{p}_{m+1}$ has not arrived:} 
Check if both $S_2$ and $S_4$ are non-empty. If they are, pick the oldest packet from each, and send their sum. If not, try the pair of sets $S_3$ and $S_4$. If neither of these pairs of sets work, then send the oldest packet in $S_5$ if it is non-empty. If not, try $S_6$, $S_2$, $S_3$ and $S_4$ in that order. If all of these are empty, then send nothing.

{\it Case 2 -- $\mathbf{p}_{m+1}\in S_1$:} This is identical to case 1, except that $\mathbf{p}_{m+1}$ must also be added to the linear combination that case 1 suggests.

{\it Case 3 -- $\mathbf{p}_{m+1}\in S_2$:} Send $\mathbf{p}_{m+1}$ added to another packet. The other packet is chosen to be the oldest packet in the first non-empty set in the following list, tested in that order: $S_4$, $S_5$, $S_6$. (In the case where $\mathbf{p}_{m+1}\in S_2$, if the other packet $\mathbf{p}$ is chosen from $S_5$, then both the chosen packets are in $H_D\backslash D_D$. Therefore, the receiver $D$ might know one (but not both) of $(\mathbf{p}_{m+1}+\mathbf{p})$ or $(\mathbf{p}_{m+1}+2 \mathbf{p})$. Hence, the coefficient for $\mathbf{p}$ in the transmitted combination must be selected to be either 1 or 2, in such a way that the resulting linear combination is innovative to receiver $D$.)

{\it Case 4 -- $\mathbf{p}_{m+1}\in S_3$:} Send $\mathbf{p}_{m+1}$ added to another packet. The other packet is chosen to be the oldest packet in the first non-empty set in the following list, tested in that order: $S_4$, $S_5$, $S_6$.

{\it Case 5 -- $\mathbf{p}_{m+1}\in S_4$:} Send $\mathbf{p}_{m+1}$ added to another packet. The other packet is chosen to be the oldest packet in the first non-empty set in the following list, tested in that order: $S_2$, $S_3$, $S_6$.

{\it Case 6 -- All other cases:} Send $\mathbf{p}_{m+1}$ as it is.

\item Transmit the chosen linear combination and collect the feedback from all receivers. Using the feedback, update the sets $H_i$ and $D_i$ for all the receivers.
\item Set the new value of $m$ to be the maximum of the ranks of the three receivers. Identify the set of receivers that have decoded all packets from 1 to $m$. If there is no such receiver, assign `L', `N' and `D' arbitrarily and go to step 3. (We show in Theorem \ref{leadertheorem} that there will always be at least one such receiver.) 
  
If there is more than one such receiver, pick the one with the lowest index to be `L'. Compute the \emph{unsolved set} $T_i:=H_i\backslash D_i$ for the other two receivers. If exactly one of them has a non-empty unsolved set, pick that receiver to be `D' (for deficit), and the other one to be `N' (for no deficit\footnote{If $H_i\backslash D_i$ is not empty, this indicates a deficit of equations compared to the unknowns involved in them.}). If neither has an unsolved set or if both have an unsolved set, assign `D' and `N' arbitrarily. (We show in Theorem \ref{deficittheorem} that at most one of them will have a non-empty unsolved set.) Go to step 3.
\end{enumerate}

\subsection{Properties of the coding module}
The above algorithm aims to guarantee innovation using as little mixing of packets as possible. In this section, we state and prove some key properties of the coding module, including the innovation guarantee property. In what follows, we use the notation $m(t)$ to denote the maximum rank among the three receivers at the beginning of slot $t$.

\begin{lemma}\label{limitlemma}
\it For any $t>0$, the transmission in any slot from $1$ to $t$ does not involve a packet with index beyond $m(t)+1$.
\end{lemma}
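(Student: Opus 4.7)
The plan is to proceed by induction on $t$, leaning on two simple ingredients. The first is the monotonicity of $m(t)$: since each $H_i$ and $D_i$ only grows with time, the rank (dimension of the knowledge space) of every receiver is non-decreasing in $t$, and hence so is their maximum $m(t)$. The second is a direct inspection of the coding module: in any single slot $s$, the transmitted linear combination is assembled by choosing packets from the sets $S_1, \ldots, S_6$ or picking $\mathbf{p}_{m+1}$ directly (cases 2 through 6), and all of those sets are defined relative to the universe $U = \{\mathbf{p}_1, \ldots, \mathbf{p}_{m(s)+1}\}$ at the start of slot $s$, provided $H_i \subseteq U$ and $D_i \subseteq U$ for every receiver $i$ at that moment. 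So the work reduces to showing this containment at the start of every slot.

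For the base case $t=1$, at the start of slot 1 all ranks are zero, so $m(1)=0$, the sets $H_i$ and $D_i$ are empty, and hence $S_1, \ldots, S_5$ are empty. The only candidate packet is $\mathbf{p}_1$, which lies in $U$, so the slot-1 transmission is either $\mathbf{p}_1$ or nothing, both obeying the bound $m(1)+1=1$.

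For the inductive step, assume the lemma for $t-1$ and consider slot $t$. For any earlier slot $s < t$, the inductive hypothesis says its transmission used packets with index at most $m(s)+1$, and by monotonicity $m(s)+1 \le m(t-1)+1 \le m(t)+1$. It remains to check slot $t$ itself. At the start of slot $t$, every packet in $H_i$ appeared in some transmission from a slot $s \le t-1$ that receiver $i$ successfully received; by the inductive hypothesis applied to slot $s$, its index is at most $m(s)+1 \le m(t-1)+1 \le m(t)+1$. In addition, the sender can only include in a transmission a packet that has already arrived, so if $\mathbf{p}_{m(t)+1}$ has not yet arrived at the sender then the bound tightens to $m(t)$; in either case the index lies in the universe $U$ defined for slot $t$. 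Thus $H_i \subseteq U$ and $D_i \subseteq H_i \subseteq U$ for every receiver $i$, which forces $S_1, \ldots, S_6 \subseteq U$, and the slot-$t$ transmission is accordingly a linear combination of packets with indices at most $m(t)+1$.

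The main delicate point is precisely the containment $H_i \subseteq U$ at the start of slot $t$: one has to reconcile the inductive hypothesis (which bounds indices of past transmissions by $m(s)+1$) with the two possible definitions of $U$ in step 3, depending on whether $\mathbf{p}_{m(t)+1}$ has already arrived. The reconciliation uses the trivial but essential fact that the sender cannot transmit a packet before it arrives. Everything else is bookkeeping on the definitions of the sets $S_1, \ldots, S_6$ and the case analysis of the coding module.
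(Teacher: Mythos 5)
Your proof is correct and follows essentially the same route as the paper's: induction on the slot number, observing that $m(t)$ is non-decreasing and that the sets $S_1,\ldots,S_5$ at the start of slot $t$ can only contain packets already involved in earlier transmissions (hence bounded by the induction hypothesis), while $S_6\subseteq U$ by definition. Your version is slightly more explicit about the containment $H_i\subseteq U$ and the case where $\mathbf{p}_{m(t)+1}$ has not yet arrived, but the substance is identical.
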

\IEEEproof
The proof is by induction on the slot number. 

\noindent {\it Basis step:} If anything is sent in slot 1, it has to be $\mathbf{p}_1$, since all the sets except $S_6$ are empty. Thus, as $m(1)=0$, the statement holds.

\noindent {\it Induction hypothesis:} Suppose no transmission up to and including slot $t$ has involved packets beyond $\mathbf{p}_{m(t)+1}$.

\noindent {\it Induction step:} Then at the beginning of slot $(t+1)$, the sets $S_1$ to $S_5$ cannot contain packets beyond $\mathbf{p}_{m(t)+1}$. Along with the definition of $S_6$ and the fact that $m(t+1)\ge m(t)$, this statement implies that $S_1$ to $S_6$ cannot contain any packet with index beyond $m(t+1)+1$. 

The coding module combines $\mathbf{p}_{m(t+1)+1}$ with up to 2 other packets from these sets. Thus, the resulting transmission will not involve any packet with index beyond $m(t+1)+1$.
\endproof

\begin{theorem}\label{leadertheorem}
  \it At the beginning of any slot $t>0$, at least one receiver has decoded all packets from $\mathbf{p}_1$ to $\mathbf{p}_{m(t)}$.
\end{theorem}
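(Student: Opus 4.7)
The plan is to prove this by induction on $t$, leveraging Lemma \ref{limitlemma} to bound the index of packets ever touched by transmissions, and the simple fact that a receiver's rank can increase by at most $1$ per slot.

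For the base case $t=1$, no transmission has occurred, so $m(1)=0$ and the statement holds vacuously. For the inductive step, I assume that at the beginning of slot $t$ some receiver has decoded $\mathbf{p}_1,\ldots,\mathbf{p}_{m(t)}$, and I want the analogous statement at the beginning of slot $t+1$. Since each receiver's rank can grow by at most one unit per slot (a single transmission brings in at most one innovative packet), we have $m(t+1) \le m(t)+1$.

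The easy case is $m(t+1)=m(t)$: the receiver that served as leader at time $t$ still has decoded $\mathbf{p}_1,\ldots,\mathbf{p}_{m(t)}=\mathbf{p}_{m(t+1)}$, because its knowledge can only grow. The interesting case is $m(t+1)=m(t)+1$. In this case some receiver $r$ must have gone from rank $\le m(t)$ at the beginning of slot $t$ to rank exactly $m(t)+1$ at the beginning of slot $t+1$. The point I would then invoke is Lemma \ref{limitlemma}: every transmission in slots $1,\ldots,t$ used only packets with index at most $m(t)+1$. Therefore receiver $r$'s knowledge space is contained in the coordinate subspace $W:=\mathrm{span}(\mathbf{e}_1,\ldots,\mathbf{e}_{m(t)+1})$, where $\mathbf{e}_k$ is the indicator vector of $\mathbf{p}_k$. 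But $\dim W = m(t)+1$ equals the dimension of $r$'s knowledge space, so these two subspaces coincide. In particular each $\mathbf{e}_k$ for $k\le m(t)+1$ lies in $r$'s knowledge, meaning $r$ has decoded $\mathbf{p}_1,\ldots,\mathbf{p}_{m(t)+1}=\mathbf{p}_{m(t+1)}$, so $r$ is a valid leader at the start of slot $t+1$.

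The only real obstacle is the ``ambient space'' step: one has to make sure the dimension of the receiver's knowledge is compared against the right ambient space (the span of packets actually touched by the sender), not against $\mathbb{F}_q^{A(t)}$, which may be larger since newly arrived packets that the sender has not yet mixed in expand the total unknown set. Lemma \ref{limitlemma} is exactly what makes this bookkeeping go through, and with it the induction closes cleanly.
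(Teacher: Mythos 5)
Your proposal is correct and follows essentially the same route as the paper: induction on the slot number, the observation that $m(t+1)\le m(t)+1$, and an appeal to Lemma \ref{limitlemma} to conclude that a receiver of rank $m(t)+1$ has $m(t)+1$ independent equations confined to the unknowns $\mathbf{p}_1,\ldots,\mathbf{p}_{m(t)+1}$ and hence can decode them all. Your explicit dimension-counting argument (knowledge space contained in a coordinate subspace of equal dimension, hence equal to it) just spells out the step the paper states more tersely.
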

\IEEEproof
The proof is by induction on the slot number. 

\noindent {\it Basis step:} Since $m(1)=0$, the statement is trivially true for $t=1$.

\noindent {\it Induction hypothesis:} Suppose at the beginning of slot $t$, there is a receiver $R^*$ that has decoded all packets from $\mathbf{p}_1$ to $\mathbf{p}_{m(t)}$. 

\noindent {\it Induction step:} We need to show that the statement holds at the beginning of slot $(t+1)$. Clearly, $m(t)\le m(t+1)\le m(t)+1$ (The rank cannot jump by more than 1 per slot).

If $m(t+1)=m(t)$, then the statement clearly holds, as $R^*$ has already decoded packets from $\mathbf{p}_1$ to $\mathbf{p}_{m(t)}$. If $m(t+1)=m(t)+1$, then let $R'$ be the receiver with that rank. From Lemma \ref{limitlemma}, all transmissions up to and including the one in slot $t$, have involved packets with index 1 to $m(t)+1$. This means $R'$ has $m(t+1)$ linearly independent equations in the unknowns $\mathbf{p}_1$ to $\mathbf{p}_{m(t+1)}$. Thus, $R'$ can decode these packets and this completes the proof.
\endproof

\begin{definition}[Leader]
  In the context of this coding module, the node that has decoded all packets from $\mathbf{p}_1$ to $\mathbf{p}_{m(t)}$ at the beginning of slot $t$ is called the \emph{leader}. If there is more than one such node, then any one of them may be picked. 
\end{definition}

Note that the node labeled `L' in the algorithm corresponds to the leader. The other two nodes are called \emph{non-leaders}. We now present another useful feature of the coding module.

\begin{lemma}\label{minimalmixing}
\it From any receiver's perspective, the transmitted linear combination involves at most two undecoded packets in any slot.
\end{lemma}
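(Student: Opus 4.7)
The plan is to go receiver-by-receiver and then case-by-case through the six cases of the new coding module in Section \ref{newcodingmodule}, counting, for a fixed receiver, how many of the packets appearing in the transmitted linear combination are still undecoded from that receiver's point of view. By Lemma~\ref{limitlemma}, every packet that can appear in the transmission has index in the set $U = \{\mathbf{p}_1, \ldots, \mathbf{p}_{m+1}\}$, so at worst three packets are combined (at most $\mathbf{p}_{m+1}$ together with up to two packets drawn from the $S_i$'s). So the lemma will follow once we show that in each case no single receiver has three of these in its undecoded list.

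First I would dispose of the leader $L$. By Theorem \ref{leadertheorem}, $L$ has decoded all of $\mathbf{p}_1, \ldots, \mathbf{p}_m$, and the only packet in $U$ outside this range is $\mathbf{p}_{m+1}$; hence at most one packet in the transmission is undecoded by $L$. Next I would use the definitions of $S_1, \ldots, S_6$ to classify each set by decoding status at $N$ and at $D$: $S_1, S_2, S_3 \subseteq D_N$ and $S_4, S_5, S_6 \cap D_N = \emptyset$; likewise $S_1, S_4 \subseteq D_D$, while $S_2, S_3, S_5, S_6$ are all disjoint from $D_D$. These inclusions are exactly what the case analysis will use.

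Then I would walk through the six cases of the coding module. In Case 1 the packet $\mathbf{p}_{m+1}$ is not involved, and every pair or singleton from the algorithm's list is chosen to straddle the ``decoded by $N$'' / ``decoded by $D$'' divide (e.g.\ one packet from $S_2$ or $S_3$ and one from $S_4$), so each of $N$ and $D$ sees at most one undecoded packet. In Case 2, $\mathbf{p}_{m+1} \in S_1$ is appended, but $S_1 \subseteq D_N \cap D_D$, so this addition is decoded by both $N$ and $D$, and the count does not increase. In Cases 3, 4, 5 the transmission is $\mathbf{p}_{m+1}$ plus one other packet, a total of two, and therefore at most two undecoded from anyone's perspective. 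Case 6 sends $\mathbf{p}_{m+1}$ alone. In every case the bound of two undecoded packets per receiver holds.

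The only step requiring genuine care is Case 3 (and the mirror image Cases 4 and 5), where $\mathbf{p}_{m+1} \in S_2$ is combined with a packet from $S_5$ or $S_6$: from $D$'s perspective neither of these packets is decoded, so $D$ really does see two undecoded packets in the transmission. I would emphasize that this is the binding case, and verify explicitly that there is no case in which a packet from, say, $S_6$ and a packet from $S_5$ and $\mathbf{p}_{m+1}$ are combined together, which would have produced three undecoded packets at $D$. By inspecting the algorithm one sees that whenever $\mathbf{p}_{m+1}$ belongs to a set undecoded at $D$ (namely $S_2, S_3, S_5, S_6$) the companion packet is forced to lie either in $S_4$ (decoded at $D$) or is the unique other summand, and when $\mathbf{p}_{m+1}$ is absent the algorithm pairs one of $\{S_2, S_3\}$ with $S_4$ or draws a single packet. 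This bookkeeping step is the crux of the argument and completes the proof.
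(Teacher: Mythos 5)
Your proposal is correct and follows essentially the same route as the paper's proof: the leader is handled via Theorem \ref{leadertheorem} and Lemma \ref{limitlemma} (only $\mathbf{p}_{m+1}$ can be unknown to it), and for the non-leaders the only three-packet transmission is Case 2, where the extra summand $\mathbf{p}_{m+1}\in S_1=D_N\cap D_D$ is already decoded by both, while every other case mixes at most two packets. Your version merely spells out the set-membership bookkeeping ($S_1,S_2,S_3\subseteq D_N$; $S_1,S_4\subseteq D_D$) that the paper leaves implicit.
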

\IEEEproof
The module mixes at most two packets with each other, except in case 2 where sometimes three packets are mixed. Even in case 2, one of the packets, namely $\mathbf{p}_{m+1}$, has already been decoded by both non-leaders, as it is in $S_1$. From the leader's perspective, there is only one unknown packet that could be involved in any transmission, namely, $\mathbf{p}_{m+1}$ (from Lemma \ref{limitlemma}). Thus, in all cases, no more than two undecoded packets are mixed from any receiver's point of view.
\endproof

\subsubsection*{Structure of the knowledge space}
The above property leads to a nice structure for the knowledge space of the receivers. In order to explain this structure, we define the following relation with respect to a specific receiver. The ground set $G$ of the relation contains all packets that have arrived at the sender so far, along with a fictitious all-zero packet that is known to all receivers even before transmission begins. Note that the relation is defined with respect to a specific receiver. Two packets $\mathbf{p_x}\in G$ and $\mathbf{p_y}\in G$ are defined to be related to each other if the receiver knows at least one of $\mathbf{p_x}+\mathbf{p_y}$ and $\mathbf{p_x}+2\mathbf{p_y}$.  

\begin{lemma}\label{equivrelation}
\it The relation defined above is an equivalence relation. 
\end{lemma}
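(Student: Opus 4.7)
My plan is to verify the three axioms of an equivalence relation directly, by exploiting the fact that the receiver's knowledge forms a linear subspace over $\mathbb{F}_3$ (and in particular contains the zero vector). I will think of $\mathbf{p_x}\sim\mathbf{p_y}$ as the assertion that the knowledge space $V$ contains one of the two specific vectors $\mathbf{p_x}+\mathbf{p_y}$ or $\mathbf{p_x}+2\mathbf{p_y}$, and then all three axioms reduce to short computations in $\mathbb{F}_3$.

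For reflexivity, I would observe that $\mathbf{p_x}+2\mathbf{p_x}=3\mathbf{p_x}=\mathbf{0}$ in $\mathbb{F}_3$, and the zero vector lies in every knowledge space, so $\mathbf{p_x}\sim\mathbf{p_x}$ trivially. For symmetry, the combination $\mathbf{p_x}+\mathbf{p_y}$ is already symmetric in its arguments, while scalar-multiplying $\mathbf{p_x}+2\mathbf{p_y}$ by $2\in\mathbb{F}_3$ produces $2\mathbf{p_x}+\mathbf{p_y}=\mathbf{p_y}+2\mathbf{p_x}\in V$; thus one of the two admissible witnesses always exists with the roles of $\mathbf{p_x}$ and $\mathbf{p_y}$ swapped.

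The main content is transitivity. Assume $\mathbf{p_x}\sim\mathbf{p_y}$ and $\mathbf{p_y}\sim\mathbf{p_z}$, witnessed by vectors $\mathbf{p_x}+\alpha\mathbf{p_y}\in V$ and $\mathbf{p_y}+\beta\mathbf{p_z}\in V$ with $\alpha,\beta\in\{1,2\}$. Since $V$ is a subspace, it also contains
\[
(\mathbf{p_x}+\alpha\mathbf{p_y})-\alpha(\mathbf{p_y}+\beta\mathbf{p_z})=\mathbf{p_x}-\alpha\beta\,\mathbf{p_z}.
\]
In $\mathbb{F}_3$ the product $\alpha\beta$ lies in $\{1\cdot 1,1\cdot 2,2\cdot 1,2\cdot 2\}=\{1,2\}$, so the coefficient $-\alpha\beta\bmod 3$ is either $2$ (when $\alpha\beta=1$) or $1$ (when $\alpha\beta=2$). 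In either case $V$ contains one of $\mathbf{p_x}+\mathbf{p_z}$ or $\mathbf{p_x}+2\mathbf{p_z}$, giving $\mathbf{p_x}\sim\mathbf{p_z}$.

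I do not foresee a real obstacle; the only subtlety worth flagging is the boundary case where one of the three packets is the fictitious all-zero packet $\mathbf{0}\in G$. Here $\mathbf{p}\sim\mathbf{0}$ specializes to ``the receiver has decoded $\mathbf{p}$,'' and I would quickly note that the substitution $\mathbf{p_y}=\mathbf{0}$ (or analogous substitutions) leaves the transitivity computation above unchanged, since $\mathbf{p_x}+\alpha\mathbf{0}=\mathbf{p_x}$ for either $\alpha\in\{1,2\}$. Thus a single uniform argument covers all cases.
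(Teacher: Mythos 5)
Your proof is correct and follows essentially the same route as the paper's: reflexivity via $\mathbf{p_x}+2\mathbf{p_x}=\mathbf{0}$, symmetry via commutativity (plus scaling by $2$ for the $\mathbf{p_x}+2\mathbf{p_y}$ witness), and transitivity by cancelling $\mathbf{p_y}$ from the two witnesses. You are in fact slightly more explicit than the paper in the symmetry and transitivity computations and in handling the fictitious zero packet, but the argument is the same.
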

\IEEEproof
A packet added with two times the same packet gives $\mathbf{0}$ which is trivially known to the receiver. Hence, the relation is reflexive. The relation is symmetric because addition is a commutative operation. For any $\mathbf{p_x}, \mathbf{p_y}, \mathbf{p_z}$ in $G$, if a receiver knows $\mathbf{p_x}+ \alpha\mathbf{p_y}$ and $\mathbf{p_y}+ \beta\mathbf{p_z}$, then it can compute either  $\mathbf{p_x}+\mathbf{p_z}$ or $\mathbf{p_x}+2\mathbf{p_z}$ by canceling out the $\mathbf{p_y}$, for $\alpha=1$ or 2 and $\beta=1$ or 2. Therefore the relation is also transitive and is thus an equivalence relation. 
\endproof

The relation defines a partition on the ground set, namely the equivalence classes, which provide a structured abstraction for the knowledge of the node. The reason we include a fictitious all-zero packet in the ground set is that it allows us to represent the decoded packets within the same framework. It can be seen that the class containing the all-zero packet is precisely the set of decoded packets. Packets that have not been involved in any of the successfully received linear combinations so far will form singleton equivalence classes. These correspond to the packets that the receiver has not heard of. All other classes contain the packets that have been heard of but not decoded. Packets in the same class are equivalent in the sense that revealing any one of them will reveal the entire class to the receiver. 

\begin{theorem}\label{deficittheorem}
\it At the beginning of any slot $t>0$, at least one of the two non-leaders has an empty unsolved set, \ie, has $H_i= D_i$.
\end{theorem}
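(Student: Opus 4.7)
The plan is to prove a strengthened invariant by induction on $t$: at the beginning of every slot $t > 0$, (i) the receiver labeled N satisfies $H_N = D_N$, and (ii) if the receiver labeled D has a non-empty unsolved set, then $H_D \setminus D_D \subseteq D_N$. Clause (i) is exactly the statement of the theorem, and clause (ii) is the structural companion that makes the induction close. The base case $t = 1$ is immediate since all knowledge spaces are empty. For the inductive step, the transmitted packet falls into one of the Cases 1--6 of the coding module, and the strategy is to check, for each case, that both clauses are preserved through the knowledge-space updates and the subsequent relabeling in step 6.

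Single-packet transmissions (Case 6 and any singleton fallback of Case 1) cannot create a non-trivial equivalence class at any receiver, so they trivially preserve (i) and (ii). The two-packet transmissions from Case 1 pairs, Case 2, and the branches of Cases 3--5 that draw the second packet from $S_2$, $S_3$, or $S_4$ also preserve (i) and (ii): in each of these sub-cases at least one of the two transmitted packets is in $D_N$ or in $D_D$, so both N and D can cancel a known summand and solve for a single unknown, gaining no new class. Moreover, whenever a new packet enters D's unsolved set in any of these cases, it simultaneously enters the updated $D_N$ because N decodes it from the same transmission.

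The delicate sub-cases are the $S_6$ fallbacks of Cases 3, 4, and 5. In Case 3 $S_6$, $\mathbf{p}_{m+1} \in S_2 \subseteq D_N$, so N subtracts $\mathbf{p}_{m+1}$ and decodes $\mathbf{p}_b$, while D merges $\mathbf{p}_b$ into the class that already contained $\mathbf{p}_{m+1}$; clause (ii) is maintained because the newly added $\mathbf{p}_b$ ends up in the updated $D_N$. Case 4 $S_6$ fires only when $S_4 = S_5 = \emptyset$; combining $S_5 = \emptyset$ with clause (ii) of the inductive hypothesis shows that D's pre-existing deficit packets stay inside $D_N$, and the new class $\{\mathbf{p}_{m+1},\mathbf{p}_b\}$ consists of a packet in the old $D_N$ (since $\mathbf{p}_{m+1}\in S_3\subseteq D_N$) and a packet newly decoded by N, so both lie inside the updated $D_N$; N itself gains only a decoded packet and remains deficit-free. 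The key use of the strengthened invariant is Case 5 $S_6$, which is the only way to give the receiver currently labeled N a fresh class: for Case 5 $S_6$ to fire one needs $S_2 = S_3 = \emptyset$ and hence $D_N \subseteq D_D$; but clause (ii) of the inductive hypothesis forces $H_D \setminus D_D \subseteq D_N \subseteq D_D$, which is a contradiction unless D's unsolved set is empty. Therefore Case 5 $S_6$ fires only when D has no deficit, in which case the receiver previously labeled N receives the new class $\{\mathbf{p}_{m+1},\mathbf{p}_b\}$ and is selected as the new D under step 6; a direct check shows that both $\mathbf{p}_{m+1}$ and $\mathbf{p}_b$ lie in the decoded set of whichever receiver is chosen to be the new N (either the old L, whose decoded set now equals $\{\mathbf{p}_1,\ldots,\mathbf{p}_{m+1}\}$, or the old D, which knew $\mathbf{p}_{m+1}$ and has just decoded $\mathbf{p}_b$), so clause (ii) is restored.

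The main obstacle is not any single case but the bookkeeping across the relabeling step: the new L, N, D may be different receivers from the old ones, so verifying clause (ii) requires writing down $D_N^{\mathrm{new}}$ for the receiver that step 6 assigns to the new N label in every sub-case. Theorem \ref{leadertheorem} guarantees that a new leader exists, and the same argument used for the leader's knowledge via Lemmas \ref{limitlemma} and \ref{minimalmixing} ensures that at least two of the three receivers have empty unsolved set after the transmission, so the relabeling is always consistent with clauses (i) and (ii). The statement of Theorem \ref{deficittheorem} follows at once from clause (i) of the strengthened invariant.
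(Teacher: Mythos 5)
Your overall strategy is the same as the paper's: track how unsolved sets can be created (only via the $S_6$ fallbacks of Cases 4 and 5), observe that Case 5 with $S_6$ is the only way to give the current `N' a deficit, and rule that out by showing that whenever `D' still has a deficit, $S_2$ is non-empty, so the module's preference order in Case 5 never reaches $S_6$. However, your strengthened invariant (ii), $H_D\setminus D_D\subseteq D_N$, is false, and the inductive step you give for it does not go through.

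The problem is that your argument repeatedly assumes that when `D' successfully receives a transmission, `N' receives it too (``the newly added $\mathbf{p}_b$ ends up in the updated $D_N$ because N decodes it from the same transmission''). Erasures are independent across receivers, so `D' can receive a slot that `N' misses. Concretely, in Case 4 with the second packet drawn from $S_6$, if `D' receives and `N' does not, then `D' forms the new class $\{\mathbf{p}_{m+1},\mathbf{p}_b\}$ while $\mathbf{p}_b$ enters neither $D_N$ nor $H_N$; your clause (ii) fails immediately, since $\mathbf{p}_b\in H_D\setminus D_D$ but $\mathbf{p}_b\notin D_N$. The same failure occurs in the class-expansion step (Case 3 with $S_6$) and in your ``direct check'' for Case 5 with $S_6$, where you assume the old `L' or old `D' has ``just decoded $\mathbf{p}_b$'' --- true only if that receiver's reception was not erased. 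The paper's invariant is deliberately weaker and per-class: \emph{each equivalence class} of $H_D\setminus D_D$ contains \emph{at least one} packet already in $D_N$ (hence in $S_2$). That version is immune to the erasure asymmetry, because the witnessing packet ($\mathbf{p}_{m+1}\in S_3$ or $\in S_2$ at the moment the class is formed or extended) lies in $D_N$ \emph{before} the transmission, regardless of who receives it; and it still yields your contradiction in Case 5, since $H_D\setminus D_D\neq\emptyset$ then forces $S_2\neq\emptyset$ directly. Replacing your clause (ii) by this per-class statement (and re-checking its preservation under relabeling, as the paper does) repairs the proof.
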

\IEEEproof
Initially, every receiver has an empty unsolved set ($H_i\backslash D_i$). It becomes non-empty only when a receiver receives a mixture involving two undecoded packets. It can be verified that this happens only in two situations: 
\begin{enumerate}
  \item When case 4 occurs, and $\mathbf{p}_{m+1}\in S_3$ is mixed with a packet from $S_6$; or 
  \item When case 5 occurs, and $\mathbf{p}_{m+1}\in S_4$ is mixed with a packet from $S_6$. 
\end{enumerate}
Even in these cases, only one receiver develops an unsolved set because, from the other two receivers' perspective, the mixture involves one decoded packet and one new packet. 

The receiver that develops an unsolved set, say node $j$, is labeled `D' in step 6, and $H_D\backslash D_D$ now contains two packets. Let the slot in which this happens for the first time be $t_1$. Now, at least one of these two packets is in $S_2$ because, as argued above, each of the other two receivers has decoded one of these packets. So, no matter which of the other two receivers is labeled `N', one of these two packets has already been decoded by `N'. 

We will now prove by contradiction that neither of the other two nodes can develop an unsolved set, as long as node $j$'s unsolved set is not empty. In other words, node $j$ will continue to be labeled as `D', until its unsolved set is fully decoded.

Suppose one of the other nodes, say node $i$ ($i\ne j$), indeed develops an unsolved set while $H_D\backslash D_D$ is still non-empty. Let $t_2$ be the slot when this happens. Thus, from slot $t_1+1$ to slot $t_2$, node $j$ is labeled $D$. We track the possible changes to $H_D\backslash D_D$ in terms of its constituent equivalence classes, during this time. Only three possible types of changes could happen:
\begin{enumerate}
  \item {\it Addition of new class:} A new equivalence class will be added to $H_D\backslash D_D$ if case 4 occurs, and $\mathbf{p}_{m+1}\in S_3$ is mixed with a packet from $S_6$. In this case, the new class will again start with two packets just as above, and at least one of them will be in $S_2$.
  \item {\it Decoding of existing class:} An existing equivalence class could get absorbed into the class of decoded packets if an innovative linear combination is revealed about the packets in the class, allowing them to be decoded.
  \item {\it Expansion of existing class:} If a linear combination involves a packet in an existing class and a new unheard of packet, then the new packet will simply join the class. 
\end{enumerate}

In every class, at least one of the initial two packets is in $S_2$ when it is formed. The main observation is that during the period up to $t_2$, this remains true till the class gets decoded. The reason is as follows. Up to slot $t_2$, node $j$ is still called `D'. Even if the labels `L' and `N' get interchanged, at least one of the initial pair of packets will still be in $D_N$, and therefore in $S_2$. The only way the class's contribution to $S_2$ can become empty is if the class itself gets decoded by $D$. 

This means, as long as there is at least one class, \ie, as long as $H_D\backslash D_D$ is non-empty, $S_2$ will also be non-empty. In particular, $S_2$ will be non-empty at the start of slot $t_2$. 

By assumption, node $i$ developed an unsolved set in slot $t_2$. Then, node $i$ could not have been a leader at the beginning of slot $t_2$ -- a leader can never develop an unsolved set, as there is only one undecoded packet that could ever be involved in the transmitted linear combination, namely $\mathbf{p}_{m+1}$ (Lemma \ref{limitlemma}). Therefore, for node $i$ to develop an unsolved set, it has to first be a non-leader, \ie, `N' at the start of slot $t_2$. In addition, case 5 must occur, and $\mathbf{p}_{m+1}\in S_4$ must get mixed with a packet from $S_6$ during $t_2$. But this could not have happened, as we just showed that $S_2$ is non-empty. Hence, in case 5, the coding module would have preferred $S_2$ to $S_6$, thus leading to a contradiction. 

Once $j$'s unsolved set is solved, the system returns to the initial state of all unsolved sets being empty. The same argument applies again, and this proves that a node cannot develop an unsolved set while another already has a non-empty unsolved set.
\endproof

\subsubsection*{Innovation guarantee}
Next, we prove that the coding module provides the innovation guarantee. 

\ 

\begin{theorem}\label{innovationtheorem}
\it The transmit linear combination computed by the coding module is innovative to all receivers that have not decoded everything that the sender knows.
\end{theorem}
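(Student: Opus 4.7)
The plan is to prove innovation by exhaustive case analysis, running over the six cases of the coding module and, within each case, over each of the three receivers $L$, $N$, and $D$. The structural tools I would deploy are Theorem~\ref{leadertheorem} (the leader has decoded $\mathbf{p}_1,\ldots,\mathbf{p}_m$, and since $m$ is the maximum rank its knowledge has dimension exactly $m$), Theorem~\ref{deficittheorem} (the non-leader labelled $N$ satisfies $H_N=D_N$, so $V_N=\mathrm{span}(D_N)$), Lemma~\ref{limitlemma} (no packet beyond $\mathbf{p}_{m+1}$ has ever been transmitted), Lemma~\ref{minimalmixing} (from any receiver's viewpoint the combination mixes at most two undecoded packets), and Lemma~\ref{equivrelation} (the equivalence-class decomposition of $D$'s knowledge).

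I would dispatch $L$ first. Its knowledge space equals $\mathrm{span}(\mathbf{p}_1,\ldots,\mathbf{p}_m)$, so in Case~1 the sender's knowledge coincides with $V_L$ and $L$ has no deficit. In Cases~2 through~6 the packet $\mathbf{p}_{m+1}$ has arrived and, by inspection of the algorithm, it always appears in the transmitted combination with nonzero coefficient; since $L$ has not heard of $\mathbf{p}_{m+1}$, the combination lies outside $V_L$ and is innovative to $L$.

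For $N$ I would exploit $V_N=\mathrm{span}(D_N)$. By construction $S_3,S_4,S_5,S_6$ are disjoint from $D_N$, so any transmission drawing a packet from any of these sets is outside $V_N$ and hence innovative. The only remaining subcases in which the combination lies entirely in $\mathrm{span}(D_N)$ are the Case~1 (and Case~2) branches that send a single packet from $S_2$ or $S_3$; the branching rule triggers these only when both pairs fail and $S_4,S_5,S_6$ are all empty. Translating those emptiness conditions back into the definitions of the $S_i$ gives $H_D=D_D$ and $U\subseteq D_N$, so $N$ already knows everything the sender does and innovation is vacuous. A symmetric emptiness argument handles the sole vacuous subcase for $D$, namely sending from $S_4$ alone.

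The main difficulty lies with $D$, for which the analysis rides on the equivalence-class picture. Most subcases split cleanly: if the transmission involves any packet from $S_3$ or $S_6$ (or has $\mathbf{p}_{m+1}$ lying in $S_3\cup S_6$), that packet is unheard by $D$ and occupies a singleton class, so the combination cannot lie in $V_D$; if the transmission pairs a packet from $S_2$ or $S_5$ with a decoded partner from $S_4$, $S_1$, or (in Case~2) $\{\mathbf{p}_{m+1}\}$, then $D$ cancels the decoded partner to extract a packet it had not previously decoded, which by definition of $H_D\setminus D_D$ is outside $V_D$. The one delicate subcase, and the one I expect to be the main obstacle, is Case~3 paired with $S_5$: both $\mathbf{p}_{m+1}$ and the partner $\mathbf{p}$ lie in $H_D\setminus D_D$ and may inhabit a common equivalence class of $D$, so $D$ already knows exactly one of $\mathbf{p}_{m+1}+\mathbf{p}$ or $\mathbf{p}_{m+1}+2\mathbf{p}$. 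The proof closes by appealing to the explicit rule that selects $\alpha\in\{1,2\}$ to differ from the coefficient appearing in $D$'s known relation; this is precisely why the module is defined over $\mathbb{F}_3$ rather than $\mathbb{F}_2$, so that a second nonzero scalar exists. Throughout, the invariant of Theorem~\ref{deficittheorem} that at most one non-leader carries an unsolved set keeps the roles of $N$ and $D$ cleanly separated, so the case work for the two non-leaders never collides.
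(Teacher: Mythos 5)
Your overall strategy --- receiver-by-receiver, case-by-case, using the leader's exact rank, $H_N=D_N$ for the node labelled `N', and the equivalence-class structure plus the $\mathbb{F}_3$ coefficient choice for the Case~3/$S_5$ pairing at `D' --- is the same as the paper's, and most of it is sound. But there is one subcase for receiver `D' that your argument does not close: Case~5. There, $\mathbf{p}_{m+1}\in S_4\subseteq D_D$, and the module looks for a partner only in $S_2$, $S_3$, $S_6$. If all three are empty, the transmission reduces to $\mathbf{p}_{m+1}$ alone, which `D' has already decoded. Your proposed ``symmetric emptiness argument'' for ``sending from $S_4$ alone'' works in Case~1, where $S_5$ is tried before $S_4$ in the fallback order, so reaching the $S_4$-alone branch forces $S_5=\emptyset$ and hence $U\setminus D_D=S_2\cup S_3\cup S_5\cup S_6=\emptyset$. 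In Case~5, however, $S_5$ is never consulted: $S_2=S_3=S_6=\emptyset$ only yields $U\setminus D_D=S_5$, which may be non-empty, in which case `D' still has a deficit and the transmission is not innovative to it. The missing ingredient is the invariant established inside the proof of Theorem~\ref{deficittheorem}: every equivalence class of $H_D\setminus D_D$ retains, until it is decoded, at least one packet already decoded by `N', so $S_5\neq\emptyset$ implies $S_2\neq\emptyset$. With $S_2=\emptyset$ this forces $S_5=\emptyset$ and the subcase is vacuous after all --- but that step must be invoked explicitly (it is exactly the remark the paper makes under its Case~5 for `D'); pure emptiness bookkeeping on the sets the branching actually tests does not deliver it.

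A small secondary point: in the vacuous subcase for `N' (a lone packet from $S_2$ or $S_3$), the emptiness of $S_4,S_5,S_6$ gives $D_D\subseteq D_N$, $H_D\setminus D_D\subseteq D_N$ and $U\subseteq H_D\cup D_N$, hence $U\subseteq D_N$ as required --- but it does not give $H_D=D_D$ as you state (indeed $S_2\neq\emptyset$ forces $H_D\neq D_D$). The conclusion you draw is right; the intermediate claim is not.
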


\ 

\IEEEproof
Since the maximum rank is $m$, any deficit between the sender and any receiver will show up within the first $(m+1)$ packets. Thus, it is sufficient to check whether $U\backslash D_i$ is non-empty, while deciding whether there is a deficit between the sender and receiver $i$.

Consider the leader node. It has decoded packets $\mathbf{p}_1$ to $\mathbf{p}_m$ (by Theorem \ref{leadertheorem}). If $\mathbf{p}_{m+1}$ has not yet arrived at the sender, then the guarantee is vacuously true. If $\mathbf{p}_{m+1}$ has arrived, then the transmission involves this packet in all the cases, possibly combined with one or two packets from $\mathbf{p}_1$ to $\mathbf{p_m}$, all of which the leader has already decoded. Hence, the transmission will reveal $\mathbf{p}_{m+1}$, and in particular, will be innovative.

Next, consider node `N'. If there is a packet in $U\backslash D_N$, then at least one of $S_4, S_5$ and $S_6$ will be non-empty. Let us consider the coding module case by case. 

\noindent {\it Case 1 -- } Suppose $S_4$ is empty, then the module considers $S_5$ and $S_6$ before anything else, thereby ensuring innovation. Suppose $S_4$ is not empty, then a packet from $S_4$ is mixed with a packet from $S_2$ or $S_3$ if available. Since $S_2$ and $S_3$ have already been decoded by `N', this will reveal the packet from $S_4$. If both $S_2$ and $S_3$ are empty, then $S_5, S_6$ and $S_4$ are considered in that order. Therefore, in all cases, if there is a deficit, an innovative packet will be picked. 

\noindent {\it Case 2 -- } This is identical to case 1, since $\mathbf{p}_{m+1}$ has already been decoded by `N'.

\noindent {\it Case 3 and 4 -- } $\mathbf{p}_{m+1}$ has already been decoded by `N', and the other packet is picked from $S_4, S_5$ or $S_6$, thus ensuring innovation.

\noindent {\it Case 5 and 6 -- } In these cases, $\mathbf{p}_{m+1}$ has not yet been decoded by `N', and is involved in the transmission. Since `N' has no unsolved set (Theorem \ref{deficittheorem}), innovation is ensured.

Finally, consider node `D'. If there is a packet in $U\backslash D_D$, then at least one of $S_2, S_3, S_5$ and $S_6$ will be non-empty. Again, we consider the coding module case by case.

\noindent {\it Case 1 -- } If $S_4$ is empty, the coding module considers $S_5, S_6, S_2$ or $S_3$ and reveals a packet from the first non-empty set. If $S_4$ is not empty, then then a packet from $S_4$ is mixed with a packet from $S_2$ or $S_3$ if available. Since $S_4$ has already been decoded by `D', this will reveal a packet from $S_2$ or $S_3$ respectively. If both $S_2$ and $S_3$ are empty, then $S_5$ and $S_6$ are considered. Thus, innovation is ensured.

\noindent {\it Case 2 -- } This is identical to case 1, since $\mathbf{p}_{m+1}$ has already been decoded by `D'.

\noindent {\it Case 3 -- } In this case, $\mathbf{p}_{m+1}\in H_D\backslash D_D$. There are four possibilities:

\begin{enumerate}
  \item If it is mixed with a packet from $S_4$, then since $D$ has already all packets in $S_4$, it will decode $\mathbf{p}_{m+1}$. 
  \item If instead it is mixed with a packet, say $\mathbf{p}$ from $S_5$, then since both packets have been heard of, it is possible that `D' already knows at most one of $\mathbf{p}+\mathbf{p}_{m+1}$ and $2 \mathbf{p}+\mathbf{p}_{m+1}$. Then, as outlined in step 4 of the algorithm (case 3), the coefficient of $\mathbf{p}$ is chosen so as to guarantee innovation. 
  \item If it is mixed with a packet from $S_6$, then innovation is ensured because the packet in $S_6$ has not even been heard of.
  \item If it is not mixed with any other packet, then also innovation is ensured, since $\mathbf{p}_{m+1}$ has not yet been decoded.
\end{enumerate}

\noindent {\it Case 4 -- } The exact same reasoning as in Case 3 holds here, except that the complication of picking the correct coefficient in possibility number 2 above, does not arise.

\noindent {\it Case 5 -- } In this case, $\mathbf{p}_{m+1}$ has already been decoded. The module considers $S_2, S_3$ and $S_6$. There is no need to consider $S_5$ because, if $S_5$ is non-empty, then so is $S_2$. This fact follows from the arguments in the proof of Theorem \ref{deficittheorem}.

\noindent {\it Case 6 -- } In all the other cases, $\mathbf{p}_{m+1}$ has not been decoded, and will therefore be innovative.
\endproof 

\subsection{Delay performance of the new coding module}
We now study the delay experienced by an arbitrary arrival before it gets decoded by one of the receivers. We consider a system where $\mu$ is fixed at 0.5. The value of $\rho$ is varied from 0.9 to 0.99 in steps of 0.01. We plot the expected decoding delay and delivery delay per packet, averaged across the three receivers, as a function of $\left(\frac{1}{1-\rho}\right)$ in Figure \ref{linearplot}. We also plot the log of the same quantities in Figure \ref{loglogplot}. The value of the delay is averaged over $10^6$ time slots for the first five points and $2\times 10^6$ time slots for the next three points and $5\times 10^6$ for the last two points. 

Figure \ref{linearplot} shows that the growth of the average decoding delay as well as the average delivery delay are linear in $\left(\frac{1}{1-\rho}\right)$ as $\rho$ approaches 1. Figure \ref{loglogplot} confirms this behavior -- we can see that the slopes on the plot of the logarithm of these quantities is indeed close to 1. This observation leads to the following conjecture:

\begin{figure}
	\centering
		\includegraphics[width=0.45\textwidth]{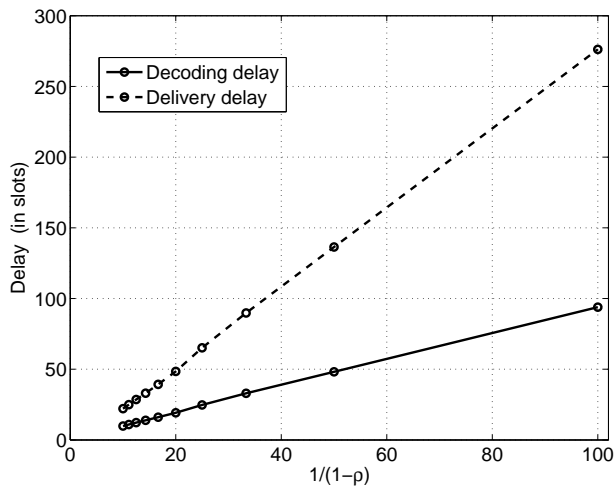}
		\caption{Decoding and delivery delay for the coding module in Section \ref{newcodingmodule}}\label{linearplot}
\end{figure}

\begin{figure}
	\centering
		\includegraphics[width=0.45\textwidth]{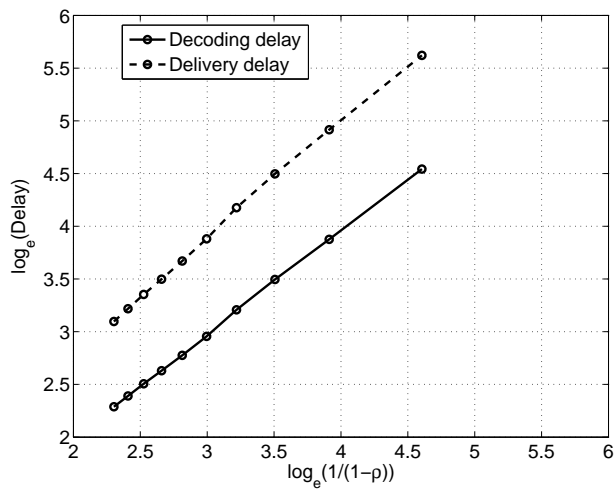}
		\caption{Log plot of the delay for the coding module in Section \ref{newcodingmodule}}\label{loglogplot}
\end{figure}

\begin{conjecture}\label{conj1}
\it For the newly proposed coding module, the expected decoding delay per packet, as well as the expected delivery delay per packet from a particular receiver's point of view grow as $O\left(\frac{1}{1-\rho}\right)$, which is asymptotically optimal.
\end{conjecture}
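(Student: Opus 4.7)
The lower bound $\Omega(1/(1-\rho))$ is established by Lemma~\ref{lowerbound}, so the task is to prove the matching upper bound. My plan is to decompose the per-packet decoding delay at receiver $i$ as
\[
  \tau_k^{(i)} - t_k \;=\; \underbrace{(\sigma_k^{(i)} - t_k)}_{\text{rank-catchup}} \;+\; \underbrace{(\tau_k^{(i)} - \sigma_k^{(i)})}_{\text{decoding gap}},
\]
where $t_k$ is the arrival time of $\mathbf{p}_k$ at the sender, $\sigma_k^{(i)}$ is the first slot in which receiver $i$ has rank at least $k$, and $\tau_k^{(i)}$ is the slot in which $i$ actually decodes $\mathbf{p}_k$. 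Because the new coding module has the innovation guarantee property (Theorem~\ref{innovationtheorem}), the rank deficit of receiver $i$ evolves according to the Markov chain of Figure~\ref{markovchain}, and the expected rank-catchup time is the expected first-passage time to zero starting from the stationary distribution. The same computation that yielded Equation~(\ref{vqsize}) then gives $\mathbb{E}[\sigma_k^{(i)} - t_k] = \Theta(1/(1-\rho))$.

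The main work is to bound the decoding gap. Here I would invoke Theorems~\ref{leadertheorem} and~\ref{deficittheorem}: at every instant the leader has already decoded $\mathbf{p}_1, \dots, \mathbf{p}_{m(t)}$, and the non-leader labelled $N$ satisfies $H_N = D_N$, so it has decoded every packet it has heard of. Consequently the decoding gap at time $\sigma_k^{(i)}$ is nonzero only when receiver $i$ is currently labelled $D$ and $\mathbf{p}_k$ lies in its unsolved set $T_D := H_D \setminus D_D$. It therefore suffices to bound, uniformly in $\rho$, the expected time for $\mathbf{p}_k$ to leave $T_D$ after entering it. I would model the process whose state records $|T_D|$ jointly with the rank deficits of the three receivers, and extract drift relations from the case analysis of the coding module: by the proof of Theorem~\ref{deficittheorem}, $T_D$ can grow only in Cases~4 or~5 when $\mathbf{p}_{m+1}$ is combined with a packet from $S_6$, whereas on any successful reception at $D$ in Cases~1--3 involving an $S_2$--$S_4$ pair, $D$ subtracts a witness it already knows and strictly reduces $|T_D|$. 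A Lyapunov function such as $V(|T_D|)=|T_D|$ (or an exponential variant for tail control) should exhibit a negative drift of order $\Theta(\mu)$ uniformly in $\rho$, yielding $\mathbb{E}[|T_D|]=O(1)$ and an expected clearing time bounded independently of $\rho$.

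For the delivery-delay claim, delivery delay exceeds decoding delay only by an in-order reordering wait. Theorem~\ref{leadertheorem} guarantees that the leader delivers the contiguous prefix $\mathbf{p}_1, \ldots, \mathbf{p}_{m(t)}$ whenever its rank advances, so the reordering wait for $\mathbf{p}_k$ at receiver $i$ is dominated by the decoding delay of the latest packet among $\mathbf{p}_1, \ldots, \mathbf{p}_k$ that $i$ has not yet decoded; a union bound over the three receivers preserves the $O(1/(1-\rho))$ asymptotic. The hard part is securing the uniform-in-$\rho$ drift for the unsolved-set chain: Case~4 can in principle grow $T_D$ through $S_6$ even while $S_2$ is nonempty (the Case~4 branch considers only $S_4, S_5, S_6$), so a careful coupling argument between $|T_D|$ and the three rank-deficit processes is required to rule out a regime in which $|T_D|$ scales with $1/(1-\rho)$. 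Once this uniform bound is in hand, the conjecture follows by summing the $\Theta(1/(1-\rho))$ rank-catchup and $O(1)$ decoding-gap contributions.
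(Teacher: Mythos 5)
The statement you are trying to prove is, in the paper, explicitly a \emph{conjecture}: the authors offer no proof, only simulation evidence (the linear growth of the measured delays in $\frac{1}{1-\rho}$ in Figures \ref{linearplot} and \ref{loglogplot}). So there is no paper proof to compare against, and your proposal must stand or fall on its own. As a strategy it is sensible --- separating a ``rank-catchup'' term from a ``decoding gap'' and arguing the latter is $O(1)$ is a natural route --- but as written it has genuine gaps beyond the one you acknowledge.

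First, your rank-catchup bound is misstated. You identify $\mathbb{E}[\sigma_k^{(i)}-t_k]$ with ``the expected first-passage time to zero starting from the stationary distribution,'' but that quantity is exactly Equation (\ref{deejay}) and is $\Theta\bigl(\frac{1}{(1-\rho)^2}\bigr)$, not $\Theta\bigl(\frac{1}{1-\rho}\bigr)$. What you actually want is the time for receiver $i$'s rank to increase by $Q_i(t_k)$ (the deficit seen on arrival), which is a sum of roughly $Q_i(t_k)$ geometric service times; by Wald this has expectation $\mathbb{E}[Q_i]/\mu=\Theta\bigl(\frac{1}{1-\rho}\bigr)$. That fix is routine, but the step as written is wrong and would sink the whole bound. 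Second, your reduction of the decoding gap to the unsolved set $T_D=H_D\setminus D_D$ is incomplete: a receiver can have rank at least $k$ at time $\sigma_k^{(i)}$ without having \emph{heard of} $\mathbf{p}_k$ at all (its $k$ decoded packets need not be $\mathbf{p}_1,\dots,\mathbf{p}_k$; the packet could sit in $S_6$ from that receiver's viewpoint). Theorem \ref{deficittheorem} tells you $H_N=D_N$, not that $N$ has decoded a prefix, so ``decoding gap nonzero only when $\mathbf{p}_k\in T_D$'' does not follow. Third, and most importantly, the step you flag as ``the hard part'' --- a drift argument giving $\mathbb{E}[|T_D|]=O(1)$ uniformly in $\rho$, plus a uniform bound on the clearing time of a \emph{specific} packet in $T_D$ --- is precisely the content of the conjecture; the case analysis of the coding module does not obviously yield a negative drift (Case 4 grows $T_D$ via $S_6$ regardless of whether $S_2$ is empty, as you note), and no coupling is exhibited. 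So the proposal is a program for a proof, not a proof; the conjecture remains open after it.
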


This conjecture, if true, implies that such feedback-based coding for delay also simplifies the queue management at the sender. If the sender simply follows a drop-when-decoded strategy, then by Little's theorem, the expected queue size of undecoded packets will be proportional to the expected decoding delay $O\left(\frac1{1-\rho}\right)$, which is asymptotically optimal. 

\section{Applications and further extensions}\label{extensions}
Although we have presented the algorithm in the context of a single packet erasure broadcast channel, we believe the main ideas in the scheme are quite robust and can be applied to more general topologies. The scheme readily extends to a tandem network of broadcast links (with no mergers) if the intermediate nodes use the witness packets in place of the original packets. We expect that it will also extend to other topologies with suitable modifications. In addition, we believe the proposed scheme will also be robust to delayed or imperfect feedback, just like conventional ARQ. Such a generalization can lead to a TCP-like protocol for systems that use network coding \cite{TCPNC}.

We have assumed the erasures to be independent and identically distributed across receivers. However, the analysis for Algorithm 2 (b) will hold even if we allow adversarial erasures. This is because, the guarantee that the physical queue size tracks the backlog in degrees of freedom is not a probabilistic guarantee, but a combinatorial guarantee on the instantaneous value of the queue sizes. Note that, while the erasures can be chosen adversarially, we will require the adversary to guarantee a certain minimum long-term connection rate from the sender to every receiver, so that the virtual queues can themselves be stabilized.

From a theoretical point of view, our results mean that any stability results or queue size bounds in terms of virtual queues can be translated to corresponding results for the physical queues. In addition, results from traditional queuing theory about M/G/1 queues or a Jackson network type of result \cite{desmondjournal} can be extended to the physical queue size in coded networks, as opposed to just the backlog in degrees of freedom. From a practical point of view, if the memory at the sender has to be shared among several different flows, then this reduction in queue occupancy will prove quite useful in getting statistical multiplexing benefits. 

For instance, one specific scenario where our results can be immediately applied is the multicast switch with intra-flow network coding, studied in \cite{infocom07}. The multicast switch has broadcast-mode links from each input to all the outputs. ``Erasures'' occur because the scheduler may require that only some outputs can receive the transmission, as the others are scheduled to receive a different transmission from some other input. In this case, there is no need for explicit feedback, since the sender can track the states of knowledge of the receivers simply using the scheduling configurations from the past. The results stated in \cite{infocom07} in terms of the virtual queues can thus be extended to the physical queues as well. 

Another important extension that needs to be investigated in the future, is the extension of the coding scheme for optimizing decoding and delivery delay to the case of more than three receivers. This problem is particularly important for real-time data streaming applications. 

\section{Conclusions}\label{conc}
In this work, we have presented a completely online approach to network coding based on feedback, which does not compromise on throughput and yet, provides benefits in terms of queue occupancy at the sender and decoding delay at the receivers. 

The notion of seen packets introduced in this work, allows the application of tools and results from traditional queuing theory in contexts that involve coding across packets. Using this notion, we proposed the drop-when-seen algorithm, which allows the physical queue size to track the backlog in degrees of freedom, thereby reducing the amount of storage used at the sender. Comparing the results in Theorem \ref{algm1qsize} and Theorem \ref{mainthm}, we see that the newly proposed Algorithm 2 (b) gives a significant improvement in the expected queue size at the sender, compared to Algorithm 1. 

For the three receiver case, we have proposed a new coding scheme that makes use of feedback to dynamically adapt the code in order to ensure low decoding delay. As argued earlier, $\Theta\left(\frac1{1-\rho}\right)$ is an asymptotic lower bound on the decoding delay and the stronger notion of delivery delay in the limit of the load factor approaching capacity ($\rho\rightarrow 1$). We conjecture that our scheme achieves this lower bound. If true, this implies the asymptotic optimality of our coding module in terms of both decoding delay and delivery delay. We have verified this conjecture through simulations. 

In summary, we believe that the proper combination of feedback and coding in erasure networks presents a wide range of benefits in terms of throughput, queue management and delay. Our work is a step towards realizing these benefits.
\bibliographystyle{IEEEtran}
\bibliography{References}

\appendices
\section{Derivation of the first passage time}\label{derivation}
	Consider the Markov chain $\{Q_j(t)\}$ for the virtual queue size, shown in Figure \ref{markovchain}. Assume that the Markov chain has an initial distribution equal to the steady state distribution (Equivalently, assume that the Markov chain has reached steady state.). We use the same notation as in Section \ref{algm1section}. 
	
	Define $N_m:=\inf\{t\ge 1: Q_j(t)=m\}$. We are interested in deriving for $k\ge 1$, an expression for $\Gamma_{k,0}$, the expected first passage time from state $k$ to 0, \ie, 
	\[\Gamma_{k,0} = \mathbb{E}[N_0|Q_j(0)=k]\]

Define for $i\ge 1$:
	\[X_i:=a(i)-d(i)\]
 where, $a(i)$ is the indicator function for an arrival in slot $i$, and $d(i)$ is the indicator function for the channel being on in slot $i$. Let $S_t:=\sum_{i=1}^t X_i$. If $Q_j(t)>0$, then the channel being on in slot $t$ implies that there is a departure in that slot. Thus the correspondence between the channel being on and a departure holds for all $0\le t \le N_0$. This implies that:
 \[\mbox{ For } t\le N_0, Q_j(t)=Q_j(0)+S_t\]
 Thus, $N_0$ can be redefined as the smallest $t\ge 1$ such that $S_t$ reaches $-Q_j(0)$. Thus, $N_0$ is a valid stopping rule for the $X_i$'s which are themselves IID, and have a mean $\mathbb{E}[X]=(\lambda-\mu)$. We can find $\mathbb{E}[N_0]$ using Wald's equality:
 \[\mathbb{E}[S_{N_0}|Q_j(0)=k]=\mathbb{E}[N_0|Q_j(0)=k]\cdot \mathbb{E}[X]\]
 \[\mbox{ i.e., } -k=\mathbb{E}[N_0|Q_j(0)=k]\cdot(\lambda-\mu)\]
 which gives:
 \[\Gamma_{k,0}=\mathbb{E}[N_0|Q_j(0)=k] = \frac{k}{\mu-\lambda}\]

\section{Proof of Lemma \ref{distrib_lemma}}\label{distriblemmaproof}
\IEEEproof
For any $z\in V_\Delta \oplus \cap_{i=1}^n U_i$, there is a $x\in V_\Delta$ and $y\in \cap_{i=1}^n U_i$ such that $z=x+y$. Now, for each $i$, $y\in U_i$. Thus, $z=x+y$ implies that $z\in \cap_{i=1}^n [V_\Delta \oplus U_i]$. Therefore, 
$V_\Delta \oplus \cap_{i=1}^n U_i \subseteq \cap_{i=1}^n [V_\Delta \oplus U_i]$.

Now, let $w\in \cap_{i=1}^n V_\Delta \oplus U_i$. Then for each $i$, there is a $x_i\in V_\Delta$ and $y_i\in U_i$ such that $w=x_i+y_i$. But, $w=x_i+y_i=x_j+y_j$ means that $x_i-x_j=y_i-y_j$. Now, $(x_i-x_j)\in V_\Delta$ and $(y_i-y_j)\in (U_1+U_2+\ldots +U_n)$. By hypothesis, these two vector spaces have only 0 in common. Thus, $x_i-x_j=y_i-y_j=0$. All the $x_i$'s are equal to a common $x\in V_\Delta$ and all the $y_i$'s are equal to a common $y$ which belongs to all the $U_i$'s. This means, $w$ can be written as the sum of a vector in $V_\Delta$ and a vector in $\cap_{i=1}^n U_i$, thereby proving that $\cap_{i=1}^n [V_\Delta \oplus U_i] \subseteq V_\Delta \oplus \cap_{i=1}^n U_i$.
\endproof

\section{Proof of Lemma \ref{associativity}}\label{associativityproof}
\IEEEproof
 Statement \ref{statement1} follows from the fact that $B$ is a subset of $B\oplus C$. Hence, if $A\cap (B\oplus C)$ is empty, so is $A\cap B$. 

 For statement \ref{statement2}, we need to show that $(A\oplus B)\cap C=\{\mathbf{0}\}$. Consider any element $\mathbf{x}\in (A\oplus B)\cap C$. Since it is in $A\oplus B$, there exist unique $\mathbf{a}\in A$ and $\mathbf{b}\in B$ such that $\mathbf{x}=\mathbf{a}+\mathbf{b}$. Now, since $\mathbf{b}\in B$ and $\mathbf{x}\in C$, it follows that $\mathbf{a}=\mathbf{x}-\mathbf{c}$ is in $B\oplus C$. It is also in $A$. Since $A$ is independent of $B\oplus C$, $\mathbf{a}$ must be $\mathbf{0}$. Hence, $\mathbf{x}=\mathbf{b}$. But this means $\mathbf{x}\in B$. Since it is also in $C$, it must be $\mathbf{0}$, as $B$ and $C$ are independent. This shows that the only element in $(A\oplus B)\oplus C$ is $\mathbf{0}$. 

Statement \ref{statement3} can be proved as follows. 

$\mathbf{x}\in A\oplus(B\oplus C)$
\begin{align*}
  \Leftrightarrow&\exists \mbox{ unique } \mathbf{a}\in A, \mathbf{d}\in B\oplus C \mbox{ s.t. } \mathbf{x}=\mathbf{a}+\mathbf{d}\\
  \Leftrightarrow&\exists \mbox{ unique } \mathbf{a}\in A, \mathbf{b}\in B, \mathbf{c}\in C \mbox{ s.t. } \mathbf{x}=\mathbf{a}+\mathbf{b}+\mathbf{c}\\
  \Leftrightarrow&\exists \mbox{ unique } \mathbf{e}\in A\oplus B, \mathbf{c}\in C \mbox{ s.t. } \mathbf{x}=\mathbf{e}+\mathbf{c}\\
 \Leftrightarrow&\mathbf{x}\in (A\oplus B)\oplus C
\end{align*}
\endproof

\end{document}